\newtheorem{theorem}{Theorem}[section]
\newtheorem{lemma}[theorem]{Lemma}
\newtheorem{remark}[theorem]{Remark}
\newtheorem{definition}[theorem]{Definition}
\numberwithin{equation}{section}
\newcommand{\wutilde}[1]{\vrule depth 0pt width 0pt%
{\raise0.8pt\hbox{$\smash{{\mathop{#1} \limits_{\displaystyle\widetilde{}}}}$}}}
\newcommand{\hT}{\hat{T}}
\newcommand{\hR}{\hat{R}}
\newcommand{\hrho}{\hat{\rho}}
\newcommand{\al}{\alpha}
\newcommand{\be}{\beta}
\newcommand{\de}{\delta}
\newcommand{\ga}{\gamma}
\newcommand{\la}{\lambda}
\newcommand{\ep}{\bm{\epsilon}}
\newcommand{\PDE}{P$\Delta$E}
\newcommand{\bbZ}{\mathbb{Z}}
\newcommand{\bbR}{\mathbb{R}}
\newcommand{\bbC}{\mathbb{C}}
\newcommand{\noin}{\in \hspace{-0.73em}/\,}
\newcommand{\calR}{{\mathcal R}}
\newcommand{\calr}{{\mathcal r}}
\newcommand{\ii}{{\rm i}}
\newcommand{\ee}{{\rm e}}
\newcommand{\bml}{{\bm l}}
\newcommand{\bmv}{{\bm v}}
\newcommand{\spT}{\hat{T}_{\rm SP}}
\newcommand{\Set}[2]{\left\{#1\right\}_{#2}}
\newcommand{\set}[2]{\left\{\left. #1 ~\right|~ #2 \right\}}
\long\def\@makecaption#1#2{
 \vskip 10pt
 \setbox\@tempboxa\hbox{#1. #2}
 \ifdim \wd\@tempboxa >\hsize #1. #2\par \else \hbox
to\hsize{\hfil\box\@tempboxa\hfil}
 \fi}
\begin{document}
\title[Lattice equations arising from discrete Painlev\'e systems. II]
{Lattice equations arising from discrete Painlev\'e systems. II. $A_4^{(1)}$ case}
\author{Nalini Joshi}
\author{Nobutaka Nakazono}
\address{School of Mathematics and Statistics, The University of Sydney, New South Wales 2006, Australia.}
\email{nobua.n1222@gmail.com}
\author{Yang Shi}
\begin{abstract}
In this paper, we construct two lattices from the $\tau$ functions of $A_4^{(1)}$-surface $q$-Painlev\'e equations,
on which quad-equations of ABS type appear.
Moreover, using the reduced hypercube structure, we obtain the Lax pairs of the $A_4^{(1)}$-surface $q$-Painlev\'e equations.
\end{abstract}

\subjclass[2010]{
33E17, 
37K10, 
39A13, 
39A14 
}
\keywords{
Discrete Painlev\'e equation;
ABS equation; 
Lax pair;
$\tau$ function;
affine Weyl group
}
\maketitle
\setcounter{tocdepth}{1}
\tableofcontents
\section{Introduction}

Two longstanding classifications of integrable discrete systems in different dimensions, one by Adler-Bobenko-Suris (ABS) \cite{ABS2003:MR1962121,ABS2009:MR2503862,BollR2011:MR2846098,BollR2012:MR3010833,BollR:thesis} 
and the other by Sakai \cite{SakaiH2001:MR1882403}, 
have been widely studied, but the mathematical connection between them remains incomplete. 
How to reduce the ABS partial difference equations to Sakai's discrete Painlev\'e equations is a natural question that has inspired many authors \cite{NP1991:MR1098879,GRSWC2005:MR2117991,FJN2008:MR2425981,HHJN2007:MR2303490,OrmerodCM2012:MR2997166,HHNS2015:MR3317164,OrmerodCM2014:MR3210633}. 
However, these earlier approaches focused on taking periodic constraints in two dimensions that lead to equations with a restricted set of parameters, manually extending these by adding gauge transformations in order to introduce more parameters. 
Another rich vein of inquiry reduces the Lax pairs of ABS equations to provide these elusive linear problems for discrete Painlev\'e equations. 
We provide a different approach grounded in higher-dimensional geometry associated naturally with full-parameter discrete Painlev\'e equations\cite{JNS2015:MR3403054,JNS2014:MR3291391,JNS:paper3}. 
In this paper, we review our approach and illustrate it for $A_4^{(1)}$-surface type $q$-discrete Painlev\'e equations, providing new Lax pairs for these equations. 

The geometric setting of reflection groups is essential to our approach. 
Within this framework, we construct higher dimensional lattices, called $\omega$-lattices, from discrete Painlev\'e equations. 
These lattices also arise from integer lattices associated with ABS classification and thereby provide a bridge between the two classifications. 
In an earlier series of papers \cite{JNS2015:MR3403054,JNS2014:MR3291391,JNS:paper3}, we constructed $\omega$-lattices for $A_5^{(1)}$- and $A_6^{(1)}$- surface $q$-Painlev\'e equations.
The $A_4^{(1)}$-case is a simpler (less degenerate) surface than these earlier cases, but it is well known that when the surface is simpler, the corresponding symmetry groups and discrete Painlev\'e equations become more complex\cite{SakaiH2001:MR1882403}. 

Despite the increasing complexity, our approach connects discrete Painlev\'e equations to partial difference equations through reductions of hypercubes and polytopes. 
We construct two lattices in two ways, one through reduction of polytopes and the other by reduction of hypercubes. 
Both lattices arise from the $\tau$ functions of $A_4^{(1)}$-surface type $q$-discrete Painlev\'e equation.
They share fundamental variables (called $\omega$-variables) and both give rise to ABS equations and to $q$-discrete Painlev\'e equations. 
The polytope case will be investigated further in future work. 
The hypercube lattice is referred to below as $\omega_{A_2+A_1}$.
(More details are given in \S \ref{subsection:main_result} and \S \ref{section:affineWeylA2}.) 

A fundamental property of the ABS equations is their consistency around each cube of the integer lattice. 
The reduced hypercube structure of the lattice $\omega_{A_2+A_1}$ then provides us with reductions of the Lax pairs of ABS squations, which turn out to be new Lax pairs for $q$-Painlev\'e equations \eqref{eqns:intro_qPs}. 
Our results show that these equations share one monodromy problem. 
Moreover, the coefficient matrices in each case are factorized into product of matrices that are linear in the monodromy variable $x$. 
We remark that in each case, we also obtain Lax pairs for the scalar form of the equations.  

In this paper, we construct two important lattices, where quad-equations are observed, 
from the $\tau$ functions of $A_4^{(1)}$-surface type $q$-discrete Painlev\'e equation.
One is the $\omega$-lattice of type $A_4^{(1)}$ investigated in \S \ref{section:omega_lattice_A4}.
An $\omega$-lattice provides informations about how a system of partial difference equations can be reduced to discrete Painlev\'e equations. 
It provides not only the type of equation, but also the combinatorial structure of the lattice before reduction 
(see \cite{JNS2015:MR3403054,JNS2014:MR3291391} for details).
The other lattice is the $\omega_{A_2+A_1}$ investigated in \S \ref{section:affineWeylA2}.
The lattice $\omega_{A_2+A_1}$ can be obtained from an integer lattice, 
given by the space-filling of the hypercube on whose faces quad-equations of ABS type are assigned, by the geometric reduction.
By using this reduced hypercube structure, we obtain the Lax pairs of the $q$-Painlev\'e equations \eqref{eqns:intro_qPs}.
These Lax pairs differ from the ones in the literature \cite{MurataM2009:MR2485835}.
Moreover, our result show that four equations of $A_4^{(1)}$-surface type share 
the same $q$-discrete monodromy problem \eqref{eqn:intro_linear} 
with differing deformation equations given by \eqref{eqns:intro_deformations}.
Other distinctive properties of our Lax pairs are that 
their coefficient matrices occur as products of matrices of degree one in the spectral parameter $x$
and elements of the coefficient matrices given by the rational functions of Painlev\'e variables.

\subsection{$A_4^{(1)}$-surface $q$-Painlev\'e equations}
In this paper, we collectively call the following $q$-difference equations as $A_4^{(1)}$-surface $q$-Painlev\'e equations
since they are of $A_4^{(1)}$-surface type in Sakai's classification\cite{SakaiH2001:MR1882403}:
{\allowdisplaybreaks
\begin{subequations}\label{eqns:intro_qPs}
\begin{align}
 \text{$q$-P$_{\rm V}$:}&&&
 \begin{cases}\label{eqn:intro_qP5_Sakai_1}
 \overline{F}F=\cfrac{1}{{c_3}^2t^2}\,\cfrac{(c_1+tG)(c_2+tG)}{c_3+G},\\[1em]
 G\underline{G}=\cfrac{{c_3}^2}{t^2}\,\cfrac{({c_3}^{-1}+tF)(qc_1c_2{c_3}^{-2}+tF)}{{c_3}^{-1}+F},
 \end{cases}\\
 \text{$q$-${{\rm P}_{\rm V}}^\ast$:}&&&
 \begin{cases}\label{eqn:intro_qP5_Tamizhmani_1}
 (\overline{F}G-1)(F G-1)=\cfrac{t^2}{qc_1{c_2}^2}\,\cfrac{({c_1}^{-1}{c_3}^2+G)(qc_1{c_2}^2+G)}{c_3t+G},\\[1em]
 (F G-1) (F \underline{G}-1)=\cfrac{{c_3}^2 t^2}{qc_1}\,\cfrac{(c_1{c_3}^{-2}+F)(q^{-1} {c_1}^{-1}{c_2}^{-2}+F)}{q^{-1}{c_2}^{-1}t+F},
 \end{cases}\\
 \text{$q$-P$_{\rm III}(D_7^{(1)})$:}&&&
 \widetilde{G}\wutilde{G}=\cfrac{1}{t^2}\,\cfrac{(c_1+tG)(p^{-1}+tG)}{1+G},
 \label{eqn:intro_qP3_1}\\
 \text{$q$-P$_{\rm IV}$:}&&&
 (\widetilde{G}G-1)(G\wutilde{G} -1)=\cfrac{t^2}{p^2c_1{c_2}^2}\,\cfrac{(p^{-2}{c_1}^{-1}{c_2}^{-2}+G)(p^2c_1{c_2}^2+G)}{p^{-1}{c_2}^{-1}t+G},
 \label{eqn:intro_qP4_1}
\end{align}
\end{subequations}
where $t,c_1,c_2,c_3,q,p\in\bbC^\ast$ and
\begin{equation}
 F=F(t),\quad
 G=G(t),\quad
 \overline{F}=F(q t),\quad
 \underline{G}=G(q^{-1}t),\quad
 \widetilde{G}=G(p t),\quad 
 \wutilde{G}=G(p^{-1}t).
\end{equation}
We note that 
$q$-P$_{\rm V}$ \eqref{eqn:intro_qP5_Sakai_1}, 
$q$-${{\rm P}_{\rm V}}^\ast$ \eqref{eqn:intro_qP5_Tamizhmani_1}, 
$q$-P$_{\rm III}({D_7^{(1)}})$ \eqref{eqn:intro_qP3_1} and
$q$-P$_{\rm IV}$ \eqref{eqn:intro_qP4_1}
are known as 
a $q$-discrete analogue of the Painlev\'e V equation \cite{SakaiH2001:MR1882403},
that of the Painlev\'e V equation\cite{TGCR2004:MR2058894},
that of the Painlev\'e III equation of $D_7^{(1)}$-surface type \cite{NN2013:MR3157150} and
that of the Painlev\'e IV equation\cite{RG1996:MR1399286},
respectively.
}

\begin{remark}
It is known that
$q$-{\rm P}$_{\rm V}$ \eqref{eqn:intro_qP5_Sakai_1} and $q$-${{\rm P}_{\rm V}}^\ast$ \eqref{eqn:intro_qP5_Tamizhmani_1}
can be reduced to 
$q$-{\rm P}$_{\rm III}({D_7^{(1)}})$ \eqref{eqn:intro_qP3_1} {\rm\cite{NN2013:MR3157150}} and
$q$-{\rm P}$_{\rm IV}$ \eqref{eqn:intro_qP4_1} {\rm \cite{NakazonoN2014:MR3261854}}
by the projective reductions:
\begin{subequations}
\begin{align}
 &c_2=p^{-1},\quad c_3=1,\quad q=p^2,\quad F=\wutilde{G},\\
 &c_2c_3=p^{-1},\quad q=p^2,\quad F=\wutilde{G},
\end{align}
\end{subequations}
respectively.
In this sense, $q$-{\rm P}$_{\rm III}({D_7^{(1)}})$ and $q$-{\rm P}$_{\rm IV}$ are sometimes called as the scalar forms of 
$q$-{\rm P}$_{\rm V}$ and $q$-${{\rm P}_{\rm V}}^\ast$, respectively.
\end{remark}

\subsection{Main results}
\label{subsection:main_result}
In this section, we outline two main results of this paper.

Firstly, in \S \ref{subsection:geometric_reduction}, we prove the following theorem.
\begin{theorem}\label{maintheorem_hypercubestructure}
The lattice $\omega_{A_2+A_1}$ has a reduced hypercube structure.
\end{theorem}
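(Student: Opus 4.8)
The plan is to establish that the lattice $\omega_{A_2+A_1}$, which is built from the $\tau$ functions of the $A_4^{(1)}$-surface $q$-Painlev\'e equations via the action of the affine Weyl group, carries the combinatorial incidence structure of a space-filling arrangement of hypercubes after a suitable geometric reduction. First I would make explicit the lattice of $\tau$-function translations: identify the sublattice of the $\widetilde{W}(A_4^{(1)})$-orbit (or the appropriate root/weight lattice) along which the $\tau$ functions are shifted, and parametrise the $\omega$-variables as ratios of $\tau$ functions at neighbouring lattice points. The point of the reduction is that the original integer lattice on which the ABS quad-equations live is higher-dimensional (a hypercube lattice), and the $q$-Painlev\'e dynamics corresponds to projecting along certain directions; I would write down this projection explicitly as a linear map on the exponent lattice and show its kernel is generated by the vectors corresponding to the $q$-Painlev\'e ``time'' evolution and the trivial (gauge) directions.

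The key steps, in order, are: (1) recall from \S\ref{section:affineWeylA2} the generators of $\omega_{A_2+A_1}$ as $\tau$-quotients and the translations $T_i$ acting on them; (2) exhibit a set of elementary squares (quad-faces) in $\omega_{A_2+A_1}$ on which ABS-type equations hold, using the bilinear ($\tau$-function) identities already derived; (3) show that these squares fit together consistently around each elementary cube, i.e. the three quad-equations meeting at a vertex are compatible — this is where the consistency-around-a-cube (CAC) property of the ABS equations is invoked; (4) identify the full-dimensional hypercube whose faces are exactly these squares, and verify that the $A_2+A_1$ structure arises as the image of $\bbZ^N$ under the reduction map, so that the vertices, edges, and 2-faces of $\omega_{A_2+A_1}$ are in bijection (up to the reduction) with those of the space-filling hypercube complex. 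Concretely this means checking that the defining relations among the $T_i$ on $\omega$-variables are precisely the relations one reads off from the hypercube's 1-skeleton modulo the reduction kernel.

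The main obstacle I expect is step (3)–(4): verifying that the reduction is \emph{consistent}, meaning that when one quotients the high-dimensional hypercube lattice by the kernel directions, the quad-equations assigned to identified faces agree (no over-determination) and the resulting finite-valence structure still closes up into genuine cubes rather than degenerating. This requires a careful bookkeeping of which $\tau$-function shifts are identified under the projection and checking that the ABS parameters (the lattice ``edge parameters'') are assigned compatibly on identified edges — the edge parameters must themselves descend through the reduction. I would handle this by first treating the $A_2$ part and the $A_1$ part separately (they correspond to commuting subdirections), establishing the reduced-cube property for each, and then assembling them, since the $A_2+A_1$ decomposition is orthogonal in the relevant sense.

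Finally, I would package the verification as an explicit description of a fundamental domain for the reduction together with the list of quad-equations on its 2-faces, and observe that translating this fundamental domain by the reduced lattice tiles all of $\omega_{A_2+A_1}$; the hypercube structure is then inherited face-by-face. The routine part — writing out each quad-equation from the corresponding $\tau$-bilinear relation and matching coefficients — I would relegate to the body of \S\ref{subsection:geometric_reduction} and to tables, keeping the proof itself at the level of the combinatorial/geometric argument just outlined.
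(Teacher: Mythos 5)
Your overall strategy is the paper's: realise $\omega_{A_2+A_1}$ as the image of an integer lattice carrying ABS quad-equations on the 2-faces of its hypercubes under a geometric reduction, and match the reduced equations with the quad-equations already obtained from the $\tau$-quotients (your step (2) is exactly the role played by Equations \eqref{eqns:quad_omega_A2} and the lemma showing they are fundamental relations). However, there is a concrete error in how you set up the reduction, and it sits at the heart of what the theorem asserts. You say the kernel of the projection is ``generated by the vectors corresponding to the $q$-Painlev\'e time evolution and the trivial (gauge) directions.'' In the paper the ambient lattice is $\bbZ^4$ with the system \eqref{eqns:system_u}, and the reduction is the $(1,1,1,1)$-periodic one \eqref{eqn:1111periodic_U}: the kernel is the single vector $\ep_1+\ep_2+\ep_3+\ep_4$, so that $\bbZ^4/\bbZ(\ep_1+\ep_2+\ep_3+\ep_4)\cong\omega_{A_2+A_1}$, the images of the $\ep_i$ being the four vectors $\bmv_i$ with $\bmv_1+\bmv_2+\bmv_3+\bmv_4=\bm{0}$. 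The $q$-Painlev\'e time evolutions ($\rho_i$, and hence $T_0=\rho_1\rho_2$, $R_{13}=\rho_4$, etc.) are translations \emph{in the quotient}, not kernel directions; if you quotient by the time directions as proposed, you do not obtain the three-dimensional lattice $\omega_{A_2+A_1}$ at all, so the construction would fail before the face-matching even begins. There is also no extra lattice direction for gauge: gauge freedom is absorbed into prefactors of the field, not into the quotient.

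Relatedly, the part you defer as ``routine coefficient matching'' is where the actual work of the proof lies, and your plan gives no mechanism for it. To make the reduced ABS system coincide literally with \eqref{eqns:quad_omega_A2} one needs (i) the non-autonomous parameter specialisation \eqref{eqn:condition_albegaK} ($\al_l=p^{-l}\al_0$, $\be_l=p^{-l}\be_0$, $\ga_l=p^{-l}\ga_0$, $K_l=p^lK_0$), which is your ``edge parameters must descend'' condition made explicit, and (ii) a genuinely nontrivial gauge transformation of the dependent variable, $u\mapsto U$ in \eqref{eqn:rel_u_U} (producing the alternating factor $\la_l={b_3}^{(-1)^l}$) followed by $\omega_{l_1,l_2,l_3,l_4}=H_{l_1,l_2,l_3,l_4}U(\bml)$ with the explicit factor $H$ and the identification $b_0=\ga_0/\al_0$, $b_1=\ga_0/\be_0$, $b_2=\ga_0K_0$ in \eqref{eqn:rel_U_omega}. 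Without these the reduced equations do not agree with the quad-equations on $\omega_{A_2+A_1}$, and the claimed ``reduced hypercube structure'' is not established. Finally, two items in your plan are unnecessary detours: the consistency-around-a-cube verification (the 4D system \eqref{eqns:system_u} is imported as a consistent system from earlier work), and the proposal to treat the $A_2$ and $A_1$ parts as separate reductions — the reduction is a single $(1,1,1,1)$-periodic one, and the $(A_2+A_1)^{(1)}$ structure appears only in the symmetry group of the reduced system, not as two orthogonal sub-reductions.
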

The lattice $\omega_{A_2+A_1}$ is a 3-dimensional integer lattice on which ABS equations \eqref{eqns:A4_H3}--\eqref{eqns:A4_D4} 
and $q$-Painlev\'e equations \eqref{eqns:intro_qPs} appear.
This lattice is constructed from the $\tau$ functions of $A_4^{(1)}$-surface $q$-Painlev\'e equations (see \S \ref{section:affineWeylA2}).
Theorem \ref{maintheorem_hypercubestructure} means that 
the lattice $\omega_{A_2+A_1}$ can be also obtained from the 4-dimensional hypercube lattice on whose faces ABS equations are assigned.
This reduced hypercube structure turn out to be essential in the construction of Lax pairs for discrete Painlev\'e equations\cite{JNS:paper3}.

Our second main result, Theorem \ref{maintheorem_Lax}, concerns the Lax pairs of the $q$-Painlev\'e equations \eqref{eqns:intro_qPs}.
Equations \eqref{eqns:intro_qPs} share one spectral linear problem, which takes the factorized form
\begin{equation}\label{eqn:intro_linear}
 \Phi(px)
 =\begin{pmatrix}\ast \,x&\ast\\\ast&0\end{pmatrix}.
 \begin{pmatrix}\ast \,x&\ast\\\ast&\ast \,x\end{pmatrix}.
 \begin{pmatrix}\ast \,x&\ast\\\ast&\ast \,x\end{pmatrix}.
 \begin{pmatrix}\ast \,x&\ast\\\ast&\ast \,x\end{pmatrix}.\Phi(x)
 =A.\Phi(x).
\end{equation}
Here, the $2\times2$ matrix $A=A(x)$ is given by \eqref{eqn:intro_linear_martrix}
whose elements are expressed by the non-zero complex parameters $b_i$, $i=0,\dots,3$, and $p$ 
and unknown functions $f_i^{(1)}$, $i=1,2,3$.
Note that the functions $f_i^{(1)}$ satisfy the following relation$:$
\begin{equation}\label{eqn:relation_f_A2A1}
 b_1 b_2 b_3
 +p {b_1}^3 b_2 f_1^{(1)}
 -p^2{b_0}^3 {b_3}^{3/2} f_1^{(3)}
 +p^2b_0 {b_1}^4 {b_3}^{3/2} f_1^{(1)} f_1^{(2)} f_1^{(3)}
 =0.
\end{equation}
We introduce the deformation operators $T_0$, $T_{13}$, $R_0$ and $R_{13}$
whose actions on the parameters $b_i$, $i=0,\dots,3$, and $p$ are given by
\begin{subequations}\label{eqns:intro_action_para}
\begin{align}
 T_0&:(b_0,b_1,b_2,b_3,p)\mapsto (p b_0,p b_1,b_2,b_3,p),\\
 T_{13}&:(b_0,b_1,b_2,b_3,p)\mapsto (b_0,b_1,p^2b_2,b_3,p),\\
 R_0&:(b_0,b_1,b_2,b_3,p)\mapsto (b_1,pb_0,b_2,{b_3}^{-1},p),\\
 R_{13}&:(b_0,b_1,b_2,b_3,p)\mapsto (b_0,b_1,p b_2,{b_3}^{-1},p),
\end{align}
\end{subequations}
while those on the spectral parameter $x$ and the wave function $\Phi=\Phi(x)$ are given by
\begin{subequations}\label{eqns:intro_deformations}
\begin{align}
 &T_0(x)=T_{13}(x)=R_0(x)=R_{13}(x)=x,\\
 &T_0(\Phi)
 =\begin{pmatrix}\ast \,x&\ast\\\ast&\ast \,x\end{pmatrix}.\begin{pmatrix}\ast \,x&\ast\\\ast&\ast \,x\end{pmatrix}.\Phi(x)
 =B_{T0}.\Phi,\\
 &T_{13}(\Phi)
 =\begin{pmatrix}\ast \,x&\ast\\\ast&0\end{pmatrix}.\begin{pmatrix}\ast \,x&\ast\\\ast&0\end{pmatrix}.\Phi(x)
 =B_{T13}.\Phi,\\
 &R_0(\Phi)
 =\begin{pmatrix}\ast \,x&\ast\\\ast&\ast \,x\end{pmatrix}.\Phi(x)
 =B_{R0}.\Phi,\\
 &R_{13}(\Phi)
 =\begin{pmatrix}\ast \,x&\ast\\\ast&0\end{pmatrix}.\Phi(x)
 =B_{R13}.\Phi,
\end{align}
\end{subequations}
where the $2\times2$ matrices $B_{T0}=B_{T0}(x)$, $B_{T13}=B_{T13}(x)$, $B_{R0}=B_{R0}(x)$ and $B_{R13}=B_{R13}(x)$ 
are given by \eqref{eqns:intro_matrices_B}.
Equations \eqref{eqns:intro_action_para} and \eqref{eqns:intro_deformations} provide us with the deformation of the spectral problem.
\begin{theorem}\label{maintheorem_Lax}
The compatibility conditions of the linear equation \eqref{eqn:intro_linear} with the operators 
$T_0$, $T_{13}$, $R_0$ and $R_{13}$$:$
\begin{subequations}
\begin{align}
 &T_0(A).B_{T0}=B_{T0}(px).A,\quad
 T_{13}(A).B_{T13}=B_{T13}(px).A,\\
 &R_0(A).B_{R0}=B_{R0}(px).A,\quad
 R_{13}(A).B_{R13}=B_{R13}(px).A,
\end{align}
\end{subequations}
are equivalent to
{\allowdisplaybreaks
\begin{subequations}\label{eqn:intro_action_var}
\begin{align}
&\begin{cases}
 T_0(f_1^{(3)})f_1^{(3)}
 =\cfrac{b_1(b_3+p{b_1}^2 f_1^{(1)}) (-b_0 b_2 {b_3}^{1/2}+p{b_1}^3 f_1^{(1)})}{p^4{b_0}^4 {b_3}^2(p {b_0}^2 b_3+{b_1}^2 f_1^{(1)})},\\
 {T_0}^{-1}(f_1^{(1)})f_1^{(1)}
 =\cfrac{p b_0 ({b_3}^{-1}+p{b_0}^2 f_1^{(3)}) (-p^{-1}b_1 b_2 {b_3}^{-1/2}+p{b_0}^3 f_1^{(3)})}{{b_1}^4{b_3}^{-2}({b_1}^2{b_3}^{-1}+p{b_0}^2 f_1^{(3)})},
\end{cases}\label{eqn:intro_qP5_Sakai_2}\\
&\begin{cases}
 \left(T_{13}(f_1^{(1)})f_1^{(2)}-\cfrac{{b_0}^2}{{b_1}^4}\right)\left(f_1^{(1)} f_1^{(2)}-\cfrac{{b_0}^2}{{b_1}^4}\right)
 =\cfrac{{b_0}^2 {b_2}^2(p{b_0}^2+{b_1}^2 b_3 f_1^{(2)})(1+ p{b_1}^2b_3 f_1^{(2)})}{p{b_1}^7{b_3}^{1/2}(-b_0 b_2+{b_1}^3 {b_3}^{1/2} f_1^{(2)})},\\
 \left(f_1^{(1)} f_1^{(2)}-\cfrac{{b_0}^2}{{b_1}^4}\right) \left(f_1^{(1)}{T_{13}}^{-1}(f_1^{(2)})-\cfrac{{b_0}^2}{{b_1}^4}\right)
 =\cfrac{{b_0}^2 {b_2}^2(p{b_0}^2b_3+{b_1}^2 f_1^{(1)})(b_3+p{b_1}^2 f_1^{(1)})}{p^2{b_1}^7b_3(-b_0 b_2 {b_3}^{1/2}+p{b_1}^3 f_1^{(1)})},
\end{cases}\label{eqn:intro_qP5_Tamizhmani_2}\\
&\begin{cases}
 R_0(f_1^{(3)})=f_1^{(1)},\\
 R_0(f_1^{(1)})f_1^{(3)}
 =\cfrac{b_1(b_3+p{b_1}^2 f_1^{(1)}) (-b_0 b_2 {b_3}^{1/2}+p{b_1}^3 f_1^{(1)})}{p^4{b_0}^4 {b_3}^2(p {b_0}^2 b_3+{b_1}^2 f_1^{(1)})},
\end{cases}\label{eqn:intro_qP3_2}\\
&\begin{cases}
 R_{13}(f_1^{(1)})=f_1^{(2)},\\
 \left(R_{13}(f_1^{(2)})f_1^{(2)}-\cfrac{{b_0}^2}{{b_1}^4}\right)\left(f_1^{(1)} f_1^{(2)}-\cfrac{{b_0}^2}{{b_1}^4}\right)
 =\cfrac{{b_0}^2 {b_2}^2(p{b_0}^2+{b_1}^2 b_3 f_1^{(2)})(1+ p{b_1}^2b_3 f_1^{(2)})}{p{b_1}^7{b_3}^{1/2}(-b_0 b_2+{b_1}^3 {b_3}^{1/2} f_1^{(2)})},
\end{cases}\label{eqn:intro_qP4_2}
\end{align}
\end{subequations}
respectively.
}
\end{theorem}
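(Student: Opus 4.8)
The plan is to establish Theorem \ref{maintheorem_Lax} by direct verification of the four compatibility conditions, exploiting the factorized structure of $A$ and of the $B$-matrices. First I would recall the explicit form of $A=A(x)$ from \eqref{eqn:intro_linear_martrix} and of $B_{T0}, B_{T13}, B_{R0}, B_{R13}$ from \eqref{eqns:intro_matrices_B}, noting that $A$ is a product of four degree-one factors and each $B$ a product of one or two such factors. The key observation is that the deformation operators act on these linear factors by permuting them and/or shifting their entries in a controlled way: $T_0$ and $T_{13}$ should act on $A$ essentially by ``peeling off'' two factors on one side and re-inserting two factors on the other (matching the two-factor form of $B_{T0}$, $B_{T13}$), while $R_0$ and $R_{13}$ do the same with a single factor. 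So the strategy is to track how each deformation acts factor-by-factor, reducing each matrix identity $T(A).B_T = B_T(px).A$ to a finite collection of scalar identities among the entries of the individual degree-one factors.

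The main steps, in order, would be: (i) write $A(x) = L_4(x) L_3(x) L_2(x) L_1(x)$ where $L_4$ has the degenerate form $\begin{pmatrix}\ast x&\ast\\\ast&0\end{pmatrix}$ and $L_1,L_2,L_3$ are generic degree-one matrices with entries built from $b_i,p$ and $f_1^{(1)},f_1^{(2)},f_1^{(3)}$; (ii) for the operator $R_0$, use the parameter action \eqref{eqns:intro_action_para} together with the substitution $R_0(f_1^{(3)})=f_1^{(1)}$ (the first line of \eqref{eqn:intro_qP3_2}) to compute $R_0(A)$ and check that $R_0(A).B_{R0} = B_{R0}(px).A$ holds identically once the second line of \eqref{eqn:intro_qP3_2} is imposed; since the identity is polynomial of low degree in $x$, it suffices to match a few coefficients, each of which reduces to a rational relation in $b_i,p,f_1^{(1)}$; (iii) repeat for $R_{13}$, using $R_{13}(f_1^{(1)})=f_1^{(2)}$ and the second line of \eqref{eqn:intro_qP4_2}; (iv) obtain the identities for $T_0$ and $T_{13}$ either by the same direct computation or, more efficiently, by composing the $R$-operators — since $T_0$ and $T_{13}$ are (up to elements fixing the spectral data) expressible as products of the reflection-type operators $R_0, R_{13}$ and their conjugates in the underlying affine Weyl group structure of \S \ref{section:affineWeylA2}, the two-factor compatibility identities for $T_0, T_{13}$ follow by stacking two one-factor identities; (v) finally, verify that the relation \eqref{eqn:relation_f_A2A1} is preserved under each operator, i.e. that it is consistent with the evolution, which is needed for the system \eqref{eqn:intro_action_var} to be well-posed. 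Throughout, one also checks that the right-hand sides of \eqref{eqn:intro_action_var} indeed coincide with the $q$-Painlev\'e equations \eqref{eqns:intro_qPs} under the identification of $f_1^{(i)}$ with $F,G$ and the parameters $b_i$ with $c_i,t$.

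I expect the main obstacle to be the bookkeeping in step (iv)/(ii): although each scalar identity is ``routine'', the entries of $A(x)$ are rational functions of the $f_1^{(i)}$ and the four-fold product makes the intermediate expressions large, so the degenerate factor $L_4$ (the one with a zero entry) and the half-integer powers of $b_3$ must be handled carefully to avoid spurious branch ambiguities. The cleanest route is probably not to expand $A$ fully but to work with the factorization: write the desired identity as $T(L_4 L_3 L_2 L_1).B_T = B_T(px).L_4 L_3 L_2 L_1$ and show the two sides agree factor-by-factor after a suitable regrouping, so that one never multiplies out more than two degree-one matrices at a time. A secondary subtlety is ensuring the gauge/normalization of $\Phi$ is fixed consistently across all four deformations so that the $B$-matrices in \eqref{eqns:intro_deformations} are the genuine ones and not merely projectively correct; this is settled by pinning down the leading and trailing coefficients of each $B$ from the requirement that \eqref{eqn:intro_linear} and \eqref{eqns:intro_deformations} are simultaneously compatible with the known action on $\tau$ functions established in \S \ref{section:affineWeylA2}.

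Once these identities are in hand, the equivalence claimed in the theorem is immediate: the compatibility conditions, being polynomial in $x$, are equivalent to the vanishing of each $x$-coefficient, and those coefficients have been shown one-by-one to be equivalent to the scalar evolution equations \eqref{eqn:intro_action_var}. The converse direction — that \eqref{eqn:intro_action_var} implies the matrix compatibility — follows by reversing the same coefficient computation, so no additional argument is needed.
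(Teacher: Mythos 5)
Your plan is viable in principle, but it is a genuinely different route from the paper's. You propose a direct verification: expand each matrix identity $T(A).B_{T}=B_{T}(px).A$ in powers of $x$ and check, coefficient by coefficient, that the resulting rational relations in $f_1^{(1)},f_1^{(2)},f_1^{(3)}$ are exactly \eqref{eqn:intro_action_var}, using the factorization of $A$ to control the bookkeeping. The paper never performs this expansion. Instead it obtains everything from the reduced hypercube structure of Theorem \ref{maintheorem_hypercubestructure}: the Lax equations \eqref{eqns:Psi_u} of the ABS system on $\bbZ^4$ are gauge-transformed and reduced to the single-step equations \eqref{eqns:phi_omega} on the lattice $\omega_{A_2+A_1}$, the shifts $\hrho_1,\dots,\hrho_4,\hR_0$ are lifted to the wave function, and then $A$, $B_{T0}$, $B_{T13}$, $B_{R0}$, $B_{R13}$ arise as words in the same elementary degree-one matrices via $\spT=\hrho_1\hrho_2\hrho_3\hrho_4$, $\hT_0=\hrho_1\hrho_2$, $\hT_{13}={\hrho_4}^2$, $\hR_{13}=\hrho_4$. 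Compatibility is then built in, and the identification with \eqref{eqn:intro_action_var} comes from the birational action of $T_0,T_{13},R_0,R_{13}$ on the $f$-variables already computed in \S\ref{subsection:A2A1_DP}. What the paper's construction buys is an explanation of the factorized form \eqref{eqn:intro_linear_martrix}--\eqref{eqns:intro_matrices_B} and freedom from the large rational-function expansions; what your route buys is independence from the geometric reduction, at the price of carrying out (ideally with computer algebra) the full coefficient matching, which is the entire mathematical content and is only asserted, not executed, in your plan.

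One caveat on your step (iv): composing the one-factor identities does exploit $T_0={R_0}^2$, $T_{13}={R_{13}}^2$ and $B_{T0}=R_0(B_{R0}).B_{R0}$, $B_{T13}=R_{13}(B_{R13}).B_{R13}$, but this only shows that the half-step relations \eqref{eqn:intro_qP3_2}, \eqref{eqn:intro_qP4_2} imply the $T_0$- and $T_{13}$-compatibilities. The theorem asserts, for each operator separately, an equivalence between the compatibility condition and the corresponding full-step equation \eqref{eqn:intro_qP5_Sakai_2} or \eqref{eqn:intro_qP5_Tamizhmani_2}; a solution of the full-step equation need not a priori factor through the half-step flow, so the ``stacking'' shortcut does not deliver the equivalence and you must fall back on the direct computation for $T_0$ and $T_{13}$ (which you do allow). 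Also note that since $A$ depends only on $f_1^{(1)},f_1^{(3)}$ while $B_{T13}$ and $B_{R13}$ involve $f_1^{(2)}$, the constraint \eqref{eqn:relation_f_A2A1} is not merely to be checked for consistency afterwards, as in your step (v), but must be used inside the coefficient matching to bring the resulting relations into the stated form.
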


This theorem is proven in \S \ref{subsection:Laxpairs}.
The actions \eqref{eqns:intro_action_para} and \eqref{eqn:intro_action_var} correspond to the $q$-Painlev\'e equations \eqref{eqns:intro_qPs}.
\begin{remark}\label{remark:relation_qps}
Equations \eqref{eqn:intro_qP5_Sakai_2} and \eqref{eqn:intro_qP5_Tamizhmani_2} are equivalent to 
$q$-{\rm P}$_{\rm V}$ \eqref{eqn:intro_qP5_Sakai_1} and
$q$-${{\rm P}_{\rm V}}^\ast$ \eqref{eqn:intro_qP5_Tamizhmani_1}
by the following correspondences:
\begin{subequations}
\begin{align}
 &\bar{}=T_0,\quad
 t={b_1}^2,\quad
 c_1=-\cfrac{b_0b_2{b_3}^{1/2}}{pb_1},\quad
 c_2=\cfrac{b_3}{p},\quad
 c_3=\cfrac{p{b_0}^2b_3}{{b_1}^2},\quad
 q=p^2,\notag\\
 &F=f_1^{(3)},\quad
 G=f_1^{(1)},
 \label{eqn:intro_correspondence_T0}\\[0.5em]
 &\bar{}=T_{13},\quad
 t=p^{1/2}b_2,\quad
 c_1=-\cfrac{b_0}{{b_1}^2},\quad
 c_2=\cfrac{b_1}{p^{1/2}{b_3}^{1/2}},\quad
 c_3=\cfrac{1}{p^{1/2}b_1{b_3}^{1/2}},\notag\\
 &q=p^2,\quad
 F=-\cfrac{{b_1}^2}{b_0}\,f_1^{(1)},\quad
 G=-\cfrac{{b_1}^2}{b_0}\,f_1^{(2)},
 \label{eqn:intro_correspondence_T13}
\end{align}
\end{subequations}
respectively.
Moreover, letting 
\begin{equation}
 b_1=p^{1/2}b_0,\quad b_3=1,
\end{equation}
and setting
\begin{equation}
 \tilde{}=R_0,\quad
 t={b_1}^2,\quad
 c_1=-\cfrac{b_2}{p^{3/2}},\quad
 G=f_1^{(1)},
 \label{eqn:intro_correspondence_R0}
\end{equation}
we obtain $q$-{\rm P}$_{\rm III}({D_7^{(1)}})$ \eqref{eqn:intro_qP3_1} from the action \eqref{eqn:intro_qP3_2}.
Similarly, by letting 
\begin{equation}
 b_3=1,
\end{equation}
and setting
\begin{equation}
 \tilde{}=R_{13},\quad
 t=p^{1/2}b_2,\quad
 c_1=-\cfrac{b_0}{{b_1}^2},\quad
 c_2=\cfrac{b_1}{p^{1/2}},\quad
 G=-\cfrac{{b_1}^2}{b_0}\,f_1^{(2)},
 \label{eqn:intro_correspondence_R13}
\end{equation}
the action \eqref{eqn:intro_qP4_2} gives $q$-{\rm P}$_{\rm IV}$ \eqref{eqn:intro_qP4_1}.
\end{remark}

\subsection{Background}
\label{subsection:background}
In the 1900s, in order to find new class of special functions, Painlev\'e and Gambier classified all differential equations in the form of 
$y''=F(y',y,t)$, where $y=y(t)$, $'=d/dt$ and $F$ is a rational function, by imposing the condition that the solutions should admit only poles as movable singular points. 
As a result, they showed that the resulting equations can be reduced to one  of the six equations,
which are now called the Painlev\'e I through VI equations,
unless it can be integrated algebraically, or transformed into a simpler equations 
such as a linear equation or the differential equation of elliptic functions.
Moreover, it is known that Painlev\'e equations can be classified into eight types 
by the geometrical classification of space of initial conditions\cite{OKSO2006:MR2277519,OkamotoK1979:MR614694,SakaiH2001:MR1882403}.
From the view point of this classification, P$_{\rm III}$ can be divided into 
P$_{\rm III}(D_6^{(1)})$, P$_{\rm III}(D_7^{(1)})$ and P$_{\rm III}(D_8^{(1)})$ by the values of parameters.

Discrete Painlev\'e equations are nonlinear ordinary difference equations of second order, 
which include discrete analogues of the Painlev\'e equations. 
The geometric classification of discrete Painlev\'e equations,
based on types of rational surfaces connected to affine Weyl groups,
is well known\cite{SakaiH2001:MR1882403}.
Together with the Painlev\'e equations, 
they are now regarded as one of the most important classes of equations 
in the theory of integrable systems (see, e.g., \cite{GR2004:MR2087743,KNY2015:arXiv150908186K}). 

It is well known that the $\tau$ functions, which gives rise to various bilinear equations,
play a crucial role in the theory of integrable systems\cite{book_MJD2000:MR1736222}.
The same is true in the theory of continuous and discrete Painlev\'e equations
\cite{JMU1981:MR630674,JM1981:MR625446,JM1981:MR636469,book_NoumiM2004:MR2044201,OkamotoK1987:MR916698,OkamotoK1987:MR914314,OkamotoK1986:MR854008,OkamotoK1987:MR927186}.
A representation of the affine Weyl groups can be lifted to the level of the $\tau$ functions
\cite{TsudaT2006:MR2207047,KMNOY2003:MR1984002,KMNOY2006:MR2353465,JNS2015:MR3403054,TM2006:MR2202304,MasudaT2011:MR2765599,MasudaT2009:MR2506177}.

Discrete Painlev\'e equations are called integrable because they arise as compatibility conditions of associated linear problems called Lax pairs. 
The search for and construction of Lax pairs of discrete Painlev\'e equations has been a very active research area.
Noteworthy approaches include extensions of Birkhoff's study of linear $q$-difference equations\cite{JS1996:MR1403067,SakaiH2006:MR2266221,SakaiH2005:MR2177121},
periodic-type reductions from ABS equations or the discrete KP/UC hierarchy\cite{HayM2007:MR2371129,HHJN2007:MR2303490,OVHQ2014:MR3215839,OrmerodCM2012:MR2997166,OVQ2013:MR3030178,KNY2002:MR1958118,TsudaT2010:MR2563787,PNGR1992:MR1162062,JNS:paper3},
extensions of Schlesinger transformations
\cite{DT2014:arXiv1408.3778,DST2014:arXiv1302.2972,BoalchP2009:MR2500553},
search for linearizable curves in the space of initial values \cite{YamadaY2009:MR2506170,YamadaY2011:MR2836394},
Pad\'e approximation or interpolation\cite{IkawaY2013:MR3061504,NagaoH2015:MR3323664,NTY2013:MR3077695} 
and the theory of orthogonal polynomials\cite{WO2012:MR3007262,WitteNS2015:MR3413576,OWF2011:MR2819929,BianeP2014:MR3221944,BA2010:MR2578525,BB2003:MR1962463}.

\subsection{Plan of the paper}
This paper is organized as follows:
in \S \ref{section:omega_lattice_A4}, we introduce the $\tau$ functions of $A_4^{(1)}$-surface $q$-Painlev\'e equations, which have the extended affine Weyl group symmetry $\widetilde{W}(A_4^{(1)})$,
and show that the $q$-Painlev\'e equations \eqref{eqns:intro_qPs} can be derived from a birational representation of $\widetilde{W}(A_4^{(1)})$.
Moreover, we construct the $\omega$-lattice of type $A_4^{(1)}$ 
and then derive various quad-equations of ABS type, as relations on the $\omega$-lattice.
In \S \ref{section:affineWeylA2}, we construct the lattice $\omega_{A_2+A_1}$ and show its properties.
In \S \ref{section:proof_theorem}, we give the proofs of Theorems \ref{maintheorem_hypercubestructure} and \ref{maintheorem_Lax}
by using the geometric reduction from the integer lattice $\bbZ^4$ with the integrable \PDE s to the lattice $\omega_{A_2+A_1}$.
Some concluding remarks are given in \S \ref{ConcludingRemarks}.
\section{Construction of the $\omega$-lattice of type $A_4^{(1)}$}
\label{section:omega_lattice_A4}
In this section, we define $\tau$ functions by using the transformation group $\widetilde{W}(A_4^{(1)})$.
Then, we derive the $q$-Painlev\'e equations \eqref{eqns:intro_qPs}
and construct the $\omega$-lattice of type $A_4^{(1)}$ from the $\tau$ functions.

For convenience, throughout this paper we use the following notation for compositions of arbitrary mappings $w_i$, $i=1,\dots,n$:
\begin{equation}
 w_1\cdots w_n:=w_1\circ\cdots\circ w_n.
\end{equation}

\subsection{$\tau$ functions}
In this section, we define the $\tau$ functions by using the transformation group 
$\widetilde{W}(A_4^{(1)})=\langle s_0,s_1,s_2,s_3,s_4,\sigma,\iota\rangle$,
which forms the extended affine Weyl group of type $A_4^{(1)}$ (see Appendix \ref{section:proof_tauA4}).

Below, we describe the actions of $\widetilde{W}(A_4^{(1)})$ on the five parameters $a_0,\dots,a_4\in\bbC^\ast$ 
and on the ten variables $\tau_i^{(j)}$, $i=1,2$, $j=1,\dots,5$, which satisfy the following three relations:
\begin{subequations}\label{eqn:A4_conditions_tau}
\begin{align}
 &\tau_2^{(1)}=\cfrac{a_0 a_1 (a_3 \tau_1^{(3)} \tau_1^{(5)}+a_0 \tau_1^{(4)} \tau_2^{(3)})}{a_2 {a_3}^2 \tau_2^{(5)}},\\
 &\tau_2^{(2)}=\cfrac{a_1 a_2 (a_4 \tau_1^{(1)} \tau_1^{(4)}+a_1 \tau_1^{(5)} \tau_2^{(4)})}{a_3 {a_4}^2 \tau_2^{(1)}},\\
 &\tau_2^{(4)}=\cfrac{a_3 a_4 (a_1 \tau_1^{(1)} \tau_1^{(3)}+a_3 \tau_1^{(2)} \tau_2^{(1)})}{a_0 {a_1}^2 \tau_2^{(3)}}.
\end{align}
\end{subequations}

\begin{remark}
Below we use the index $j$ to denote an element of $\bbZ/5\bbZ$ with a slightly different enumeration 
for transformations $s_0$, \dots, $s_4$, parameters $a_0$, \dots, $a_4$ and variables $\tau_i^{(1)}$, \dots, $\tau_i^{(5)}$ $(i=1,2)$.
To avoid confusion, we point out, for example, that $j=0$ for $s_j$ and $a_j$ would imply $j=5$ for $\tau_i^{(j)}$.
\end{remark}
\begin{lemma}\label{lemma:tau_A4}
The action of $\widetilde{W}(A_4^{(1)})$ on the parameters are given by
\begin{equation}\label{eqn:A4_weylaction_a}
 s_i(a_j)=a_j{a_i}^{-a_{ij}},\quad 
 \sigma(a_i)=a_{i+1},\quad
 \iota(a_i)={a_{-i}}^{-1},
\end{equation}
where $i,j\in\bbZ/5\bbZ$ and
\begin{equation}
 (a_{ij})_{i,j=0}^4
 =\left(\begin{array}{ccccc}
 2&-1&0&0&-1\\-1&2&-1&0&0\\0&-1&2&-1&0\\0&0&-1&2&-1\\-1&0&0&-1&2\end{array}\right)
\end{equation}
is the Cartan matrix of type $A_4^{(1)}$,
while their actions on the variables are given by
\begin{subequations}\label{eqns:A4_weylaction_tau}
\begin{align}
 &s_j(\tau_1^{(j)})=\tau_2^{(j+4)},\quad
 s_j(\tau_2^{(j+3)})
 =\cfrac{a_{j+3} a_{j+4}(a_j a_{j+1} \tau_1^{(j+1)} \tau_1^{(j+3)}+a_{j+3} \tau_1^{(j+2)} \tau_2^{(j+1)})}
  {{a_{j+1}}^2 \tau_1^{(j)}},\\
 &s_j(\tau_2^{(j+4)})=\tau_1^{(j)},\quad
 s_j(\tau_2^{(j)})
 =\cfrac{a_{j+4}(a_{j+2} \tau_1^{(j+2)} \tau_1^{(j+4)}+a_j a_{j+4} \tau_1^{(j+3)} \tau_2^{(j+2)})}
  {a_j a_{j+1} {a_{j+2}}^2 \tau_1^{(j)}},\\
 &\sigma(\tau_1^{(j)})=\tau_1^{(j+1)},\quad
 \sigma(\tau_2^{(j)})=\tau_2^{(j+1)},\quad
 \iota(\tau_1^{(j)})=\tau_1^{(5-j)},\quad
 \iota(\tau_2^{(j)})=\tau_2^{(3-j)},
\end{align}
\end{subequations}
where $j\in\bbZ/5\bbZ$.
In general, for a function $F=F(a_i,\tau_j^{(k)})$, we let an element
$w\in\widetilde{W}(A_4^{(1)})$ act as $w.F=F(w.a_i,w.\tau_j^{(k)})$, that is, 
$w$ acts on the arguments from the left. 
\end{lemma}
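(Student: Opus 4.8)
<br>

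The statement to prove is Lemma~\ref{lemma:tau_A4}, which asserts that the formulas \eqref{eqn:A4_weylaction_a}--\eqref{eqns:A4_weylaction_tau} genuinely define an action of the extended affine Weyl group $\widetilde{W}(A_4^{(1)})=\langle s_0,\dots,s_4,\sigma,\iota\rangle$ on the field of rational functions in the parameters $a_0,\dots,a_4$ and the ten $\tau$-variables constrained by \eqref{eqn:A4_conditions_tau}.

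\textbf{Plan of proof.} The strategy is to verify directly that the assigned birational maps satisfy the defining relations of $\widetilde{W}(A_4^{(1)})$: namely the Coxeter relations among $s_0,\dots,s_4$ (that is, ${s_i}^2=1$, $s_is_j=s_js_i$ when $a_{ij}=0$, and $s_is_js_i=s_js_is_j$ when $a_{ij}=-1$, with indices in $\bbZ/5\bbZ$), the cyclic relation $\sigma^5=1$ together with $\sigma s_i\sigma^{-1}=s_{i+1}$, and the relations for the diagram-flip $\iota$ ($\iota^2=1$, $\iota s_i\iota=s_{-i}$, $\iota\sigma\iota=\sigma^{-1}$). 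Because the action on the parameters $a_j$ in \eqref{eqn:A4_weylaction_a} is the standard one (already known to realize $\widetilde{W}(A_4^{(1)})$ on the parameter subspace), the real content is the action on the $\tau$-variables. I would first check that each generator is a well-defined birational transformation of the constrained variety — i.e. that the three relations \eqref{eqn:A4_conditions_tau} are preserved by each of $s_0,\dots,s_4,\sigma,\iota$ — and then verify the group relations on the $\tau$-variables.

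\textbf{Key steps, in order.} (1) Use the cyclic symmetry: since $\sigma$ acts by $\tau_i^{(j)}\mapsto\tau_i^{(j+1)}$, $a_i\mapsto a_{i+1}$ and, by \eqref{eqns:A4_weylaction_tau}, the formula for $s_j$ is obtained from that for $s_0$ by shifting all indices by $j$, it suffices to verify most relations for a single representative and then transport by $\sigma$; in particular $\sigma s_i\sigma^{-1}=s_{i+1}$ is essentially built into the enumeration, and one only needs $\sigma^5=1$ (immediate) and consistency of the shifted formulas with the three constraints. (2) Check ${s_j}^2=\mathrm{id}$ on the $\tau$-variables: here one must use the constraints \eqref{eqn:A4_conditions_tau} (and their $\sigma$-shifts) to rewrite $s_j$ applied twice; e.g. $s_j(\tau_1^{(j)})=\tau_2^{(j+4)}$ and $s_j(\tau_2^{(j+4)})=\tau_1^{(j)}$ is immediate, but $s_j(s_j(\tau_2^{(j+3)}))=\tau_2^{(j+3)}$ requires substituting the rational expressions and cancelling using a constraint. (3) Check commutativity $s_is_j=s_js_i$ for non-adjacent $i,j$ (i.e. $|i-j|=2$ in $\bbZ/5\bbZ$) — by cyclicity it is enough to do one such pair, say $s_0$ and $s_2$, and show the two composite rational maps agree on each generating variable. (4) Check the braid relation $s_is_js_i=s_js_is_j$ for one adjacent pair, say $s_0s_1s_0=s_1s_0s_1$, again on each $\tau_i^{(k)}$; this is the heaviest computation. (5) Check the $\iota$-relations: $\iota^2=1$ on variables and parameters (the index maps $j\mapsto 5-j$ on $\tau_1$, $j\mapsto 3-j$ on $\tau_2$, $i\mapsto -i$ on $a_i$ are involutions), $\iota\sigma\iota=\sigma^{-1}$, and $\iota s_i\iota=s_{-i}$; and verify that $\iota$ preserves the three constraints \eqref{eqn:A4_conditions_tau} (this is where the asymmetric index shift $3-j$ for $\tau_2$ must be pinned down against the constraints).

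\textbf{Main obstacle and remarks.} The chief difficulty is the braid relation $s_is_js_i=s_js_is_j$ for adjacent $i,j$: the second formula in \eqref{eqns:A4_weylaction_tau} for $s_j(\tau_2^{(j+3)})$ and $s_j(\tau_2^{(j)})$ is a genuine rational (additive-then-divide) expression, so composing three such maps produces a sizeable rational identity that only collapses after repeated use of the constraints \eqref{eqn:A4_conditions_tau} and their $\sigma$-images; organizing the bookkeeping so the cancellations are visible (rather than a brute-force expansion) is the real work, and I expect it to reduce to a handful of polynomial identities in $a_0,\dots,a_4$ and $\tau_1^{(1)},\dots,\tau_1^{(5)},\tau_2^{(3)},\tau_2^{(5)}$ (a convenient free set once the constraints are solved for $\tau_2^{(1)},\tau_2^{(2)},\tau_2^{(4)}$). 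A secondary subtlety is confirming well-definedness: one should check that applying each $s_j$ to the three defining relations \eqref{eqn:A4_conditions_tau} yields identities that again follow from \eqref{eqn:A4_conditions_tau}, so that the birational action descends to the constrained variety; since there are only three relations and, up to $\sigma$-shift, essentially one, this is manageable but must be done carefully because $s_j$ mixes $\tau_1$ and $\tau_2$ indices. Given the length, I would present the verification for $s_0$, $\sigma$ and $\iota$ in detail (or relegate it to Appendix~\ref{section:proof_tauA4} as the text signals) and invoke cyclic symmetry to cover the remaining $s_j$.
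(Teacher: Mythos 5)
Your plan is viable, but it takes a genuinely different route from the paper's. You propose to take the birational formulas \eqref{eqn:A4_weylaction_a}--\eqref{eqns:A4_weylaction_tau} as given and verify directly that they satisfy the defining relations of $\widetilde{W}(A_4^{(1)})$ (the relations \eqref{eqns:fundamental_relation_A4}) and that they preserve the constraints \eqref{eqn:A4_conditions_tau}. The paper instead proceeds geometrically in Appendix \ref{section:proof_tauA4}: it realizes the $A_4^{(1)}$-surface $X$ as the blow-up of $\mathbb{P}^1\times\mathbb{P}^1$ at the eight base points \eqref{eqns:basepoints}, lets $s_i$, $\sigma$, $\iota$ act linearly on $\mathrm{Pic}(X)$ as Cremona isometries (so the relations \eqref{eqns:fundamental_relation_A4} are checked at the level of a lattice, a finite linear-algebra computation rather than a rational-function one), and then lifts the action to the $\tau$-variables by attaching to each divisor class $\Lambda$ the defining polynomial $F_\Lambda$ of the corresponding curve and declaring $w.\tau(\Lambda)=\tau(\Lambda.w^{-1})$, with the normalizations of the $F_\Lambda$ chosen so the fundamental relations persist birationally. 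The explicit formulas of the lemma, including the expressions for the constrained variables $\tau_2^{(1)},\tau_2^{(2)},\tau_2^{(4)}$, are then read off from this definition, so the geometric approach \emph{derives} the action and makes well-definedness on the constrained variety automatic, at the cost of setting up the surface/Picard-lattice machinery and of situating everything in Sakai's framework; your approach is elementary and self-contained but replaces that by very heavy rational identities (above all the braid relations $s_js_{j\pm1}s_j=s_{j\pm1}s_js_{j\pm1}$ computed modulo the constraints), and it only confirms the formulas rather than explaining their origin.

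If you carry your verification out, note two points your sketch leaves implicit: $s_j$ must be declared to fix the six $\tau$-variables not listed in \eqref{eqns:A4_weylaction_tau}, and each relation must be checked on all ten variables (or on your free set of seven after solving \eqref{eqn:A4_conditions_tau}) \emph{together with} the parameter action, since the parameters enter the $\tau$-formulas and are transformed simultaneously.
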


The proof of Lemma \ref{lemma:tau_A4} is given in Appendix \ref{section:proof_tauA4}.

\begin{remark}
The action of $\widetilde{W}(A_4^{(1)})$ in Lemma \ref{lemma:tau_A4} was first obtained by Tsuda in \cite{TsudaT2006:MR2207047} without the details of the proof.
The notations in this paper are related to those in \cite{TsudaT2006:MR2207047}
by the following correspondence:
\begin{subequations}
\begin{align}
 &(s_0,s_1,s_2,s_3,s_4,\sigma,\iota)\to(s_0,s_1,s_2,s_3,s_4,\pi^3,\iota),\\
 &(a_0,a_1,a_2,a_3,a_4,q)\to(a_0,a_1,a_2,a_3,a_4,q),\\
 &(\tau_1^{(1)},\tau_1^{(2)},\tau_1^{(3)},\tau_1^{(4)},\tau_1^{(5)})
 \to(\tau_2,\tau_3,\tau_6,\tau_5,\tau_7),\\
 &(\tau_2^{(1)},\tau_2^{(2)},\tau_2^{(3)},\tau_2^{(4)},\tau_2^{(5)})
 \to(\pi^3(\tau_1),\pi(\tau_1),\tau_4,\pi^3(\tau_4),\tau_1).
\end{align}
\end{subequations}
We also note that in \cite{TsudaT2006:MR2207047} each element $w\in\widetilde{W}(A_4^{(1)})$
acts on the arguments from the right, whereas in the present paper it acts from the left. 
\end{remark}

To iterate each variable $\tau_i^{(j)}$, we need the following transformations:
\begin{subequations}\label{eqn:A4_translations}
\begin{align}
 &T_0=\sigma s_4 s_3 s_2 s_1,\quad
 T_1=\sigma s_0 s_4 s_3 s_2,\quad
 T_2=\sigma s_1 s_0 s_4 s_3,\\
 &T_3=\sigma s_2 s_1 s_0 s_4,\quad
 T_4=\sigma s_3 s_2 s_1 s_0,
\end{align}
\end{subequations} 
which are translations on the root lattice $\hat{Q}(A_4^{(1)})$  \eqref{eqn:root_symmetry_A4} 
(see Appendix \ref{section:proof_tauA4}).
Note that $T_i$, $i=0,\dots,4$, commute with each other and 
\begin{equation}
 T_0 T_1 T_2 T_3 T_4=1.
\end{equation}
Their actions on the parameters are given by
\begin{equation}
 T_i(a_i)=qa_i,\quad 
 T_i(a_{i+1})=q^{-1}a_{i+1},\quad
 i\in\bbZ/5\bbZ,
\end{equation}
where $q=a_0a_1a_2a_3a_4$ is invariant under the actions of $T_i$.
We define $\tau$ functions by
\begin{equation}\label{eqn:A4_tau_l_0123}
 \tau_{l_0,l_1,l_2,l_3,l_4}={T_0}^{l_0}{T_1}^{l_1}{T_2}^{l_2}{T_3}^{l_3}{T_4}^{l_4}(\tau_2^{(3)}),
\end{equation}
where $l_i\in\bbZ$.
We note that 
\begin{subequations}
\begin{align}
 &\tau_1^{(1)}=\tau_{1,0,0,1,0},\quad
 \tau_1^{(2)}=\tau_{1,1,0,1,0},\quad
 \tau_1^{(3)}=\tau_{1,1,1,1,0},\quad
 \tau_1^{(4)}=\tau_{1,1,1,2,0},\\
 &\tau_1^{(5)}=\tau_{0,0,0,1,0},\quad
 \tau_2^{(1)}=\tau_{1,0,1,1,0},\quad
 \tau_2^{(2)}=\tau_{1,1,0,2,0},\quad
 \tau_2^{(3)}=\tau_{0,0,0,0,0},\\
 &\tau_2^{(4)}=\tau_{2,1,1,2,0},\quad
 \tau_2^{(5)}=\tau_{0,1,0,1,0}.
\end{align}
\end{subequations}

\subsection{Discrete Painlev\'e equations}
\label{subsection:A4_discrete_Painleve}
In this section, we define the $f$-variables by rational functions of the $\tau$-variables.
Then, we demonstrate that elements of infinite order of $\widetilde{W}(A_4^{(1)})$ give various $q$-Painlev\'e equations.

Let us define the ten $f$-variables by
\begin{equation}\label{eqn:def_f}
 f_1^{(j)}=\cfrac{\tau_1^{(j+1)}\tau_2^{(j)}}{\tau_1^{(j)}\tau_1^{(j+2)}},\quad
 f_2^{(j)}=s_{j+2}(f_1^{(j)})=\cfrac{a_ja_{j+1}(a_{j+2}a_{j+3}+a_jf_1^{(j+3)})}{{a_{j+3}}^2f_1^{(j+1)}},
\end{equation}
where $j\in\bbZ/5\bbZ$.
From the definition above and the relations \eqref{eqn:A4_conditions_tau},
the following relations hold:
\begin{equation}\label{eqns:A4_conditions_f}
 a_{j+2}{a_{j+3}}^2f_1^{(j)}f_1^{(j+1)}=a_ja_{j+1}(a_{j+3}+a_jf_1^{(j+3)}),
\end{equation}
where $j\in\bbZ/5\bbZ$.
The relations above look like five equations, but the relations represent only three.
Therefore, there are only two essential $f$-variables.
The action of $\widetilde{W}(A_4^{(1)})$  on these variables $f_i^{(j)}$ is given by the lemma below, 
which follows from the actions \eqref{eqns:A4_weylaction_tau}.

\begin{lemma}
The action of $\widetilde{W}(A_4^{(1)})$ on variables $f_i^{(j)}$ is given by
{\allowdisplaybreaks
\begin{subequations}
\begin{align}
 &s_j(f_1^{(j+3)})=f_2^{(j+3)},\quad
 s_j(f_1^{(j)})=\cfrac{a_{j+4}(a_{j+2}+a_ja_{j+4}f_1^{(j+2)})}{a_ja_{j+1}{a_{j+2}}^2f_1^{(j+4)}},\quad
 s_j(f_2^{(j+3)})=f_1^{(j+3)},\\
 &s_j(f_2^{(j+2)})=\cfrac{a_ja_{j+3}a_{j+4}(a_{j+2}+a_ja_{j+4}f_1^{(j+2)}+a_ja_{j+1}a_{j+2}f_1^{(j+4)})}{a_{j+1}f_1^{(j+4)}f_2^{(j+3)}},\\
 &s_j(f_2^{(j+4)})=\cfrac{a_ja_{j+1}{a_{j+2}}^2f_1^{(j+4)}f_1^{(j)}f_2^{(j+4)}}{a_{j+4}(a_{j+2}+a_ja_{j+4}f_1^{(j+2)})},\\
 &s_j(f_2^{(j)})=\cfrac{a_ja_{j+1}a_{j+4}+a_{j+3}a_{j+4}f_1^{(j+1)}+a_j{a_{j+1}}^2a_{j+2}f_1^{(j+4)}}{a_ja_{j+1}a_{j+3}f_1^{(j+1)}f_1^{(j+4)}},\quad
 \sigma(f_1^{(j)})=f_1^{(j+1)},\\
 &\sigma(f_2^{(j)})=f_2^{(j+1)},\quad
 \iota(f_1^{(j)})=f_1^{(3-j)},\quad
 \iota(f_2^{(j)})=\cfrac{a_{2-j}(a_{5-j}+a_{2-j}a_{3-j}f_1^{(5-j)})}{a_{3-j}a_{4-j}{a_{5-j}}^2f_1^{(2-j)}},
\end{align}
\end{subequations}
where $j\in\bbZ/5\bbZ$.
}
\end{lemma}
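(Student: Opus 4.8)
The statement to establish is the action of the extended affine Weyl group $\widetilde{W}(A_4^{(1)})=\langle s_0,\dots,s_4,\sigma,\iota\rangle$ on the $f$-variables $f_i^{(j)}$, where the $f$-variables are defined in \eqref{eqn:def_f} as explicit rational monomials in the $\tau$-variables. The plan is to deduce every formula directly from Lemma~\ref{lemma:tau_A4}, which gives the action of $\widetilde{W}(A_4^{(1)})$ on the $\tau$-variables, together with the relations \eqref{eqn:A4_conditions_tau} (equivalently their $f$-level counterparts \eqref{eqns:A4_conditions_f}) and the convention that each $w\in\widetilde{W}(A_4^{(1)})$ acts on the arguments of a rational function from the left. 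Since $f_1^{(j)}=\tau_1^{(j+1)}\tau_2^{(j)}/(\tau_1^{(j)}\tau_1^{(j+2)})$, applying $w$ amounts to substituting $w(\tau_1^{(j+1)})$, $w(\tau_2^{(j)})$, etc., and then simplifying; the second family $f_2^{(j)}=s_{j+2}(f_1^{(j)})$ is then handled by the same mechanism once the $s_{j+2}$-image of $f_1^{(j)}$ is known, or alternatively by composing group elements.

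First I would dispose of the easy generators. For $\sigma$ the claimed formulas $\sigma(f_1^{(j)})=f_1^{(j+1)}$ and $\sigma(f_2^{(j)})=f_2^{(j+1)}$ follow immediately from $\sigma(\tau_1^{(j)})=\tau_1^{(j+1)}$, $\sigma(\tau_2^{(j)})=\tau_2^{(j+1)}$ and $\sigma(a_i)=a_{i+1}$ by inspection of the defining monomials, with the index $j\in\bbZ/5\bbZ$. For $\iota$, the formula $\iota(f_1^{(j)})=f_1^{(3-j)}$ follows from $\iota(\tau_1^{(j)})=\tau_1^{(5-j)}$, $\iota(\tau_2^{(j)})=\tau_2^{(3-j)}$ and $\iota(a_i)={a_{-i}}^{-1}$: substituting into $f_1^{(j)}$ gives $\tau_1^{(4-j)}\tau_2^{(3-j)}/(\tau_1^{(5-j)}\tau_1^{(3-j)})$, which one checks equals $f_1^{(3-j)}$ after reindexing; the parameters cancel because $f_1^{(j)}$ has total parameter-weight zero. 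The formula for $\iota(f_2^{(j)})$ then follows either by the relation $f_2^{(j)}=s_{j+2}(f_1^{(j)})$ combined with the braid/twist relations involving $\iota$, or more directly by applying $\iota$ to the explicit expression for $f_2^{(j)}$ in \eqref{eqn:def_f} and simplifying using $\iota(a_i)={a_{-i}}^{-1}$.

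The substantive part is the action of the simple reflections $s_j$. Here I would proceed case by case through the five listed formulas. The identities $s_j(f_1^{(j+3)})=f_2^{(j+3)}$ and $s_j(f_2^{(j+3)})=f_1^{(j+3)}$ are essentially the definition $f_2^{(j)}=s_{j+2}(f_1^{(j)})$ shifted by $j\mapsto j+3$ together with $s_j^2=1$, so they require only bookkeeping. For the remaining formulas — $s_j(f_1^{(j)})$, $s_j(f_2^{(j+2)})$, $s_j(f_2^{(j+4)})$ and $s_j(f_2^{(j)})$ — I would substitute the $\tau$-actions \eqref{eqns:A4_weylaction_tau} into the defining monomials and then repeatedly use \eqref{eqn:A4_conditions_tau} to eliminate the "non-elementary" $\tau$-variables (those that are not among the chosen fundamental ones) in favour of the two essential $f$-variables, tracking the parameter factors via $s_i(a_j)=a_j a_i^{-a_{ij}}$ with the $A_4^{(1)}$ Cartan matrix. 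The main obstacle I anticipate is precisely this algebraic simplification: the $\tau$-action of $s_j$ on $\tau_2^{(j+3)}$, $\tau_2^{(j)}$ etc.\ produces sums of two monomials in the numerator, so substituting into a ratio of products of $\tau$'s yields expressions that are not manifestly in the target form, and one must use the three relations \eqref{eqn:A4_conditions_tau} (which, recall, encode only three independent constraints despite looking like more) in exactly the right combination to collapse them. A secondary pitfall is the index conventions flagged in the Remark — that $j=0$ for $s_j,a_j$ corresponds to $j=5$ for $\tau_i^{(j)}$ — so I would state the computations with explicit $\bbZ/5\bbZ$ arithmetic throughout and verify the $j\to j+1$ covariance under $\sigma$ as an internal consistency check. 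Finally, since $\widetilde{W}(A_4^{(1)})$ is generated by $s_0,\dots,s_4,\sigma,\iota$, once the action on all $f_i^{(j)}$ is verified for each generator the lemma is complete; no further relations need to be checked because the $f$-variables are defined as specific functions of the $\tau$-variables, on which the group action is already established in Lemma~\ref{lemma:tau_A4}.
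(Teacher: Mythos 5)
Your proposal is correct and takes essentially the same route as the paper, which gives no separate argument beyond stating that the lemma "follows from the actions \eqref{eqns:A4_weylaction_tau}" --- i.e.\ exactly the direct substitution of the $\tau$-level action into the definitions \eqref{eqn:def_f}, simplified via the relations \eqref{eqn:A4_conditions_tau}, that you describe. Your additional observations (that $s_j(f_1^{(j+3)})=f_2^{(j+3)}$ is definitional with $s_j{}^2=1$ giving its partner, and that $\sigma$, $\iota$ follow by inspection) are consistent with this and require no further justification.
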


It is well known that the translation part of $\widetilde{W}(A_4^{(1)})$ give discrete Painlev\'e equations \cite{SakaiH2001:MR1882403}.
For examples, from the translations $T_i$, $i=1,\dots,4$, we obtain $q$-P$_{\rm V}$ \eqref{eqn:intro_qP5_Sakai_1}
and from the translations $T_iT_j$, $0\leq i<j\leq4$, we obtain $q$-${{\rm P}_{\rm V}}^\ast$ \eqref{eqn:intro_qP5_Tamizhmani_1}.
Indeed, the action of $T_0$:
\begin{subequations}
\begin{align}
 &T_0:(a_0,a_1,a_2,a_3,a_4)\mapsto(qa_0,q^{-1}a_1,a_2,a_3,a_4),\\
 &T_0(f_1^{(3)})f_1^{(3)}
 =\cfrac{a_3}{{a_0}^2{a_1}^2a_4}\,\cfrac{(a_1+a_3a_4f_1^{(1)})(a_1+a_3f_1^{(1)})}{a_0a_1+a_3f_1^{(1)}},\\
 &{T_0}^{-1}(f_1^{(1)})f_1^{(1)}
 =\cfrac{a_0{a_1}^3}{{a_3}^2}\,\cfrac{(a_2a_3+a_0f_1^{(3)})(a_3+a_0f_1^{(3)})}{a_3+a_0a_1f_1^{(3)}},
\end{align}
\end{subequations}
leads to $q$-P$_{\rm V}$ \eqref{eqn:intro_qP5_Sakai_1} by the correspondences \eqref{eqn:intro_correspondence_T0} and
\begin{subequations}\label{eqns:para_a_b}
\begin{align}
 &b_0=\cfrac{{a_0}^{1/2}}{q^{1/2}},\quad
 b_1=\cfrac{1}{{a_1}^{1/2}{a_2}^{1/2}},\quad
 b_2=-\cfrac{q^{1/4}}{{a_2}^{1/2}{a_4}^{1/2}},\\
 &b_3=\cfrac{a_0a_1a_4}{q^{1/2}},\quad
 p=q^{1/2},
\end{align}
or, equivalently,
\begin{align}
 &a_0=p^2{b_0}^2,\quad
 a_1=-\cfrac{b_2 {b_3}^{1/2}}{pb_0 b_1},\quad
 a_2=-\cfrac{pb_0}{b_1 b_2 {b_3}^{1/2}},\quad
 a_3=-\cfrac{b_1 b_2}{b_0{b_3}^{1/2}},\\
 &a_4=-\cfrac{b_1 {b_3}^{1/2}}{b_0 b_2},\quad
 q=p^2.
\end{align}
\end{subequations}
Moreover, the action of $T_{13}=T_1T_3$:
\begin{subequations}
\begin{align}
 &T_{13}:(a_0,a_1,a_2,a_3,a_4)\mapsto(a_0,qa_1,q^{-1}a_2,qa_3,q^{-1}a_4),\\
 &\left(T_{13}(f_1^{(1)})f_1^{(2)}-\cfrac{a_1 a_2}{a_3 a_4}\right)\left(f_1^{(1)} f_1^{(2)}-\cfrac{a_1 a_2}{a_3 a_4}\right)
 =\cfrac{{a_1}^3 a_2}{a_3 {a_4}^2}\,\cfrac{(a_2+a_0 a_4 f_1^{(2)}) (a_2+a_4 f_1^{(2)})}{a_1 a_2+a_4 f_1^{(2)}},\\
 &\left(f_1^{(1)} f_1^{(2)}-\cfrac{a_1 a_2}{a_3 a_4}\right) \left(f_1^{(1)}{T_{13}}^{-1}(f_1^{(2)})-\cfrac{a_1 a_2}{a_3 a_4}\right)
 =\cfrac{a_1 a_2}{a_0 {a_3}^2 {a_4}^2}\,\cfrac{(a_1+a_3 f_1^{(1)})(a_0 a_1+a_3 f_1^{(1)})}{a_1+a_3 a_4 f_1^{(1)}},
\end{align}
\end{subequations}
gives $q$-${{\rm P}_{\rm V}}^\ast$ \eqref{eqn:intro_qP5_Tamizhmani_1} by the correspondences \eqref{eqn:intro_correspondence_T13} and \eqref{eqns:para_a_b}.

It is also known that discrete Painlev\'e equations can be obtained 
from elements of infinite order of $\widetilde{W}(A_4^{(1)})$ 
which are not necessarily translations of $\widetilde{W}(A_4^{(1)})$\cite{TakenawaT2003:MR1996297,KNT2011:MR2773334}.
We here show that how $q$-P$_{\rm III}({D_7^{(1)}})$ \eqref{eqn:intro_qP3_1} and $q$-P$_{\rm IV}$ \eqref{eqn:intro_qP4_1}
can be derived from the actions of $\widetilde{W}(A_4^{(1)})$.
Let 
\begin{equation}
 R_0=\sigma^3s_2s_1,\quad
 R_{13}=\sigma s_0s_2s_4,
\end{equation}
where ${R_0}^2=T_0$ and ${R_{13}}^2=T_{13}$.
Actions of these transformations in the parameter space are not translational motion:
\begin{subequations}
\begin{align}
 R_0&:(a_0,a_1,a_2,a_3,a_4)\mapsto(a_0 a_3 a_4,q^{-1}a_1a_2a_3,a_4,a_0 a_1,a_2),\\
 R_{13}&:(a_0,a_1,a_2,a_3,a_4)\mapsto(a_0,a_1 a_2 a_3,{a_3}^{-1},q{a_2}^{-1},q^{-1}a_2a_3a_4),
\end{align}
\end{subequations}
but under the special values of the parameters these actions become translational motion.
Indeed, by imposing 
\begin{equation}\label{eqn:condition_a_qP3D7}
 a_2=a_4,\quad
 a_3=a_0a_1,
\end{equation}
which implies
\begin{equation}
 q^{1/2}=a_2a_3=a_0a_1a_2=a_3a_4=a_0a_1a_4,
\end{equation}
the action of $R_0$ becomes 
\begin{equation}
 R_0:(a_0,a_1,a_2,a_3,a_4)\mapsto(q^{1/2}a_0,q^{-1/2}a_1,a_2,a_3,a_4).
\end{equation}
Similarly, under the condition of the parameters
\begin{equation}\label{eqn:condition_a_qP4}
 q^{1/2}=a_2a_3=a_0a_1a_4,
\end{equation}
the action of $R_{13}$ becomes
\begin{equation}
 R_{13}:(a_0,a_1,a_2,a_3,a_4)\mapsto(a_0,q^{1/2} a_1,q^{-1/2}a_2,q^{1/2}a_3,q^{-1/2}a_4).
\end{equation}
Therefore, the action of $R_0$:
\begin{equation}
 R_0(f_1^{(3)})=f_1^{(1)},\quad
 R_0(f_1^{(1)})f_1^{(3)}=\cfrac{a_3}{{a_0}^2{a_1}^2a_4}\,\cfrac{(a_1+a_3a_4f_1^{(1)})(a_1+a_3f_1^{(1)})}{a_0a_1+a_3f_1^{(1)}},
\end{equation}
with the condition \eqref{eqn:condition_a_qP3D7}, gives $q$-P$_{\rm III}({D_7^{(1)}})$ \eqref{eqn:intro_qP3_1}
by the correspondences \eqref{eqn:intro_correspondence_R0} and \eqref{eqns:para_a_b}.
Moreover, the action of $R_{13}$:
\begin{subequations}
\begin{align}
 &R_{13}(f_1^{(1)})=f_1^{(2)},\\
 &\left(R_{13}(f_1^{(2)})f_1^{(2)}-\cfrac{a_1 a_2}{a_3 a_4}\right)\left(f_1^{(2)}f_1^{(1)}-\cfrac{a_1 a_2}{a_3 a_4}\right)
 =\cfrac{{a_1}^3 a_2}{a_3 {a_4}^2}\,\cfrac{(a_0 a_4 f_1^{(2)}+a_2)(a_4 f_1^{(2)}+a_2)}{a_4 f_1^{(2)}+a_1 a_2},
\end{align}
\end{subequations}
with the condition \eqref{eqn:condition_a_qP4}, gives $q$-P$_{\rm IV}$ \eqref{eqn:intro_qP4_1}
by the correspondences \eqref{eqn:intro_correspondence_R13} and \eqref{eqns:para_a_b}.
\subsection{$\omega$-lattice}
In this section, we define the $\omega$-variables by the ratios of the $\tau$-variables
and then construct the $\omega$-lattice of type $A_4^{(1)}$.

Let us define the fifteen $\omega$-variables by
\begin{equation}\label{eqn:def_omega_A4}
 \omega_1^{(j)}=\cfrac{\tau_1^{(j)}}{\tau_1^{(j+1)}},\quad
 \omega_2^{(j)}=\cfrac{\tau_2^{(j)}}{\tau_1^{(j+2)}},\quad
 \omega_3^{(j)}=\cfrac{\tau_1^{(j-1)}}{\tau_2^{(j-1)}},\quad
 j\in\bbZ/5\bbZ,
\end{equation}
which satisfy
\begin{equation}
 f_1^{(j)}=\cfrac{\omega_2^{(j)}}{\omega_1^{(j)}},\quad
 f_2^{(j)}=\cfrac{a_ja_{j+1}}{{a_{j+3}}^2}\,\cfrac{\omega_1^{(j+1)}(a_{j+2}a_{j+3}\omega_1^{(j+3)}+a_j\omega_2^{(j+3)})}{\omega_1^{(j+3)}\omega_2^{(j+1)}},\quad
 j\in\bbZ/5\bbZ.
\end{equation}
From the definition above and the relations \eqref{eqn:A4_conditions_tau},
they satisfy the following nine relations:
{\allowdisplaybreaks
\begin{subequations}\label{eqn:A4_conditions_omega}
\begin{align}
 &\omega_2^{(5)}=\cfrac{\omega_1^{(1)}\omega_1^{(5)}}{\omega_3^{(1)}},\quad
 \omega_3^{(5)}=\cfrac{\omega_1^{(4)}\omega_1^{(5)}}{\omega_2^{(4)}},\quad
 \omega_2^{(2)}=\cfrac{\omega_1^{(2)}\omega_1^{(3)}}{\omega_3^{(3)}},\quad
 \omega_2^{(3)}=\cfrac{\omega_1^{(3)}\omega_1^{(4)}}{\omega_3^{(4)}},\\
 &\omega_1^{(5)}=\cfrac{1}{\omega_1^{(1)}\omega_1^{(2)}\omega_1^{(3)}\omega_1^{(4)}},\quad
 \omega_1^{(2)}=\cfrac{\omega_2^{(1)}\omega_3^{(2)}}{\omega_1^{(1)}},\\
 &\omega_1^{(3)}=\cfrac{a_2\omega_3^{(3)}(a_1a_2\omega_1^{(1)}\omega_2^{(4)}-a_0a_4\omega_1^{(4)}\omega_3^{(1)})}{a_0{a_4}^2\omega_1^{(4)}\omega_3^{(1)}},\\
 &\omega_1^{(4)}=\cfrac{a_3\omega_3^{(4)}(a_2a_3\omega_2^{(1)}-a_0a_1\omega_3^{(1)})}{a_1{a_0}^2\omega_3^{(1)}},\quad
 \omega_2^{(4)}=\cfrac{a_3a_4\omega_3^{(4)}(a_1\omega_1^{(1)}+a_3\omega_2^{(1)})}{a_0{a_1}^2\omega_1^{(1)}}.
\end{align}
\end{subequations}
By inspection, we see that there are six essential $\omega$-variables.
The action of $\widetilde{W}(A_4^{(1)})$ on the $\omega$-variables is given by the lemma below, 
which follows from the action \eqref{eqns:A4_weylaction_tau} and the definition \eqref{eqn:def_omega_A4}.
}
\begin{lemma}
The action of $\widetilde{W}(A_4^{(1)})$ on the fifteen $\omega$-variables is given by
{\allowdisplaybreaks
\begin{subequations}\label{eqns:actions_A4_omega}
\begin{align}
 &s_j(\omega_1^{(j+4)})=\omega_3^{(j)},\quad
 s_j(\omega_1^{(j)})=\omega_2^{(j+4)},\quad
 s_j(\omega_3^{(j)})=\omega_1^{(j+4)},\quad
 s_j(\omega_2^{(j+4)})=\omega_1^{(j)},\\
 &s_j(\omega_2^{(j+3)})
 =\cfrac{a_{j+3}a_{j+4}\omega_1^{(j+3)}\omega_1^{(j+4)}(a_ja_{j+1}\omega_1^{(j+1)}+a_3\omega_2^{(j+1)})}{{a_{j+1}}^2\omega_1^{(j+1)}\omega_2^{(j+4)}},\\
 &s_j(\omega_2^{(j)})
 =\cfrac{a_{j+4}\omega_1^{(j+4)}(a_4a_0\omega_2^{(j+2)}+a_2\omega_1^{(j+2)})}{a_ja_{j+1}{a_{j+2}}^2\omega_1^{(j+2)}},\\
 &s_j(\omega_3^{(j+1)})
 =\cfrac{a_ja_{j+1}{a_{j+2}}^2\omega_1^{(j+1)}\omega_1^{(j+2)}\omega_2^{(j+4)}}{a_{j+4}\omega_1^{(j+4)}(a_{j+2}\omega_1^{(j+2)}+a_ja_{j+4}\omega_2^{(j+2)})},\\
 &s_j(\omega_3^{(j+4)})
 =\cfrac{{a_{j+1}}^2\omega_1^{(j+1)}\omega_1^{(j)}}{a_{j+3}a_{j+4}(a_ja_{j+1}\omega_1^{(j+1)}+a_{j+3}\omega_2^{(j+1)})},\\
 &\sigma(\omega_1^{(j)})=\omega_1^{(j+1)},\quad
 \sigma(\omega_2^{(j)})=\omega_2^{(j+1)},\quad
 \sigma(\omega_3^{(j)})=\omega_3^{(j+1)},\\
 &\iota(\omega_1^{(j)})=\cfrac{1}{\omega_1^{(4-j)}},\quad
 \iota(\omega_2^{(j)})=\cfrac{1}{\omega_3^{(4-j)}},\quad
 \iota(\omega_3^{(j)})=\cfrac{1}{\omega_2^{(4-j)}},
\end{align}
\end{subequations}
where $j\in\bbZ/5\bbZ$.
}
\end{lemma}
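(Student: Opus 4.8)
The plan is to verify each identity in \eqref{eqns:actions_A4_omega}, generator by generator, by direct substitution of the $\tau$-actions of Lemma~\ref{lemma:tau_A4} into the definitions $\omega_1^{(j)}=\tau_1^{(j)}/\tau_1^{(j+1)}$, $\omega_2^{(j)}=\tau_2^{(j)}/\tau_1^{(j+2)}$, $\omega_3^{(j)}=\tau_1^{(j-1)}/\tau_2^{(j-1)}$ of \eqref{eqn:def_omega_A4}, followed by cancellation of $\tau$-factors and simplification using the $\tau$-relations \eqref{eqn:A4_conditions_tau} where needed. Since $\widetilde{W}(A_4^{(1)})=\langle s_0,\dots,s_4,\sigma,\iota\rangle$ and every $\omega$-variable is a genuine monomial ratio of $\tau$-variables, it suffices to treat the seven generators; the action on an arbitrary word then follows by composition, and consistency with the nine constraints \eqref{eqn:A4_conditions_omega} is automatic because those constraints are consequences of \eqref{eqn:A4_conditions_tau}, which the $\tau$-action already respects.

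First I would dispose of $\sigma$ and $\iota$, which act on the $\tau_i^{(j)}$ by re-indexing only: each of the six identities for $\sigma$ and $\iota$ in \eqref{eqns:actions_A4_omega} is then a one-line check, for instance $\iota(\omega_3^{(j)})=\iota(\tau_1^{(j-1)})/\iota(\tau_2^{(j-1)})=\tau_1^{(6-j)}/\tau_2^{(4-j)}=1/\omega_2^{(4-j)}$. Then, for each reflection $s_j$, I would first assemble the full table of $s_j$ acting on all ten $\tau$-variables: \eqref{eqns:A4_weylaction_tau} supplies the images of $\tau_1^{(j)}$, $\tau_2^{(j)}$, $\tau_2^{(j+3)}$, $\tau_2^{(j+4)}$, while $s_j$ fixes the remaining $\tau_1^{(k)}$ and independent $\tau_2^{(k)}$, and the images of the non-independent $\tau_2^{(1)},\tau_2^{(2)},\tau_2^{(4)}$ are obtained by applying $s_j$ to \eqref{eqn:A4_conditions_tau}. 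With this table in hand, the four ``exchange'' formulas $s_j(\omega_1^{(j+4)})=\omega_3^{(j)}$, $s_j(\omega_1^{(j)})=\omega_2^{(j+4)}$, $s_j(\omega_3^{(j)})=\omega_1^{(j+4)}$, $s_j(\omega_2^{(j+4)})=\omega_1^{(j)}$ drop out at once from $s_j(\tau_1^{(j)})=\tau_2^{(j+4)}$ and $s_j(\tau_2^{(j+4)})=\tau_1^{(j)}$ together with the fact that the partner $\tau$-factor in each ratio is inert. For the four genuinely rational images $s_j(\omega_2^{(j+3)})$, $s_j(\omega_2^{(j)})$, $s_j(\omega_3^{(j+1)})$, $s_j(\omega_3^{(j+4)})$, I would substitute the quadratic $\tau$-formulas of Lemma~\ref{lemma:tau_A4}, re-expand the claimed right-hand sides of \eqref{eqns:actions_A4_omega} in terms of $\tau$'s via \eqref{eqn:def_omega_A4}, clear denominators, and check that the resulting two rational functions of the $\tau_i^{(j)}$ and $a_j$ agree term by term, invoking \eqref{eqn:A4_conditions_tau} whenever a dependent $\tau_2^{(1)}$, $\tau_2^{(2)}$ or $\tau_2^{(4)}$ appears.

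I do not expect a conceptual obstacle: each identity is forced by Lemma~\ref{lemma:tau_A4} and collapses to a short rational-function calculation like the ones sketched above. The real work, and the only place an error is likely, is index bookkeeping — the shifted labelling convention flagged in the remark preceding Lemma~\ref{lemma:tau_A4} means that the residue $j\in\bbZ/5\bbZ$ must be read off differently for $s_j,a_j$ than for $\tau_i^{(j)},\omega_k^{(j)}$, and for each fixed $j$ one has to keep straight exactly which $\tau$-variables, and hence which numerator/denominator factors of a given $\omega$, are moved by $s_j$ and which are not. Writing down one clean $s_j$-table on the ten $\tau_i^{(k)}$ at the outset reduces the remainder of the proof to mechanical verification.
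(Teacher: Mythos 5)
Your proposal is correct and matches the paper's route: the paper gives no separate argument for this lemma beyond noting that it "follows from the action \eqref{eqns:A4_weylaction_tau} and the definition \eqref{eqn:def_omega_A4}," i.e.\ exactly the generator-by-generator substitution of the $\tau$-action into the ratios defining the $\omega$-variables, with \eqref{eqn:A4_conditions_tau} used to simplify, which is what you describe. Your spot-checks (e.g.\ $\iota(\omega_3^{(j)})=1/\omega_2^{(4-j)}$ and the four exchange formulas) and your warning about the shifted indexing of $s_j,a_j$ versus $\tau_i^{(j)},\omega_k^{(j)}$ are consistent with the paper's conventions, so the remaining work is indeed only mechanical verification.
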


We define $\omega$-functions by
\begin{equation}\label{eqn:A4_omega_function}
 \omega_{l_0,l_1,l_2,l_3,l_4}^{(j)}={T_0}^{l_0}{T_1}^{l_1}{T_2}^{l_2}{T_3}^{l_3}{T_4}^{l_4}(\omega_3^{(j)}),
\end{equation}
where $j=1,\dots,5$ and $l_0,\dots,l_4\in\bbZ$.
We note that 
\begin{subequations}
\begin{align}
 &\omega_1^{(1)}=\omega_{1,0,0,0,0}^{(1)},\quad
 \omega_2^{(1)}=\omega_{1,0,1,0,0}^{(1)},\quad
 \omega_3^{(1)}=\omega_{0,0,0,0,0}^{(1)},\\
 &\omega_1^{(2)}=\omega_{0,1,0,0,0}^{(2)},\quad
 \omega_2^{(2)}=\omega_{0,1,0,1,0}^{(2)},\quad
 \omega_3^{(2)}=\omega_{0,0,0,0,0}^{(2)},\\
  &\omega_1^{(3)}=\omega_{0,0,1,0,0}^{(3)},\quad
 \omega_2^{(3)}=\omega_{0,0,1,0,1}^{(3)},\quad
 \omega_3^{(3)}=\omega_{0,0,0,0,0}^{(3)},\\
 &\omega_1^{(4)}=\omega_{0,0,0,1,0}^{(4)},\quad
 \omega_2^{(4)}=\omega_{1,0,0,1,0}^{(4)},\quad
 \omega_3^{(4)}=\omega_{0,0,0,0,0}^{(4)},\\
 &\omega_1^{(5)}=\omega_{0,0,0,0,1}^{(5)},\quad
 \omega_2^{(5)}=\omega_{0,1,0,0,1}^{(5)},\quad
 \omega_3^{(5)}=\omega_{0,0,0,0,0}^{(5)}.
\end{align}
\end{subequations}

Now we are in a position to construct the $\omega$-lattice of type $A_4^{(1)}$. 
Let us consider the following lattice (see Figure \ref{fig:omega_latticeA4}):
\begin{equation}\label{eqn:omega_lattice_A4}
 \sum_{i=0}^4l_i\bmv_i\subset\bbZ^5,
\end{equation}
whose vertices $\bmv_i$, $i=0,\dots,4$, are defined by
\begin{subequations}\label{eqns:vectors_v_A4}
\begin{align}
 &\bmv_0=(-1,-1,-1,-1,4),\quad
 \bmv_1=(4,-1,-1,-1,-1),\quad
 \bmv_2=(-1,4,-1,-1,-1),\\
 &\bmv_3=(-1,-1,4,-1,-1),\quad
 \bmv_4=(-1,-1,-1,4,-1),
\end{align}
\end{subequations}
and satisfy $\bmv_0+\bmv_1+\bmv_2+\bmv_3+\bmv_4=\bm{0}$.
For simplicity, we here use the following notation:
\begin{equation}\label{eqns:notations_v_A4}
 \bmv_{k_1\dots k_n}=\sum_{i=1}^n \bmv_{k_i},\quad
 k_i\in\{0,\dots,4\}.
\end{equation}
Let us assign the $\tau$ functions $\tau_{l_0,l_1,l_2,l_3,l_4}$ and the $\omega$-functions $\omega_{l_0,l_1,l_2,l_3,l_4}^{(j)}$
to the vertices and the edges of the lattice \eqref{eqn:omega_lattice_A4} by the following correspondence:
\begin{subequations}
\begin{align}
 \tau_{l_0,l_1,l_2,l_3,l_4}
 &\leftrightarrow
 \bml+\bmv_{124},\\
 \omega_{l_0,l_1,l_2,l_3,l_4}^{(1)}
 &\leftrightarrow
 {\rm edge}(\bml+\bmv_{1234};1),\\
 \omega_{l_0,l_1,l_2,l_3,l_4}^{(2)}
 &\leftrightarrow
 {\rm edge}(\bml;2),\\
 \omega_{l_0,l_1,l_2,l_3,l_4}^{(3)}
 &\leftrightarrow
 {\rm edge}(\bml+\bmv_1;3),\\
 \omega_{l_0,l_1,l_2,l_3,l_4}^{(4)}
 &\leftrightarrow
 {\rm edge}(\bml+\bmv_{12};4),\\
 \omega_{l_0,l_1,l_2,l_3,l_4}^{(5)}
 &\leftrightarrow
 {\rm edge}(\bml+\bmv_{123};0),
\end{align}
\end{subequations}
where $\bml=\sum_{i=0}^4l_i\bmv_i$. 
Here, ${\rm edge}({\bm A};i)$ is a edge connecting a vertex ${\bm A}$ to a vertex $({\bm A}+\bmv_i)$.
We refer to the lattice \eqref{eqn:omega_lattice_A4} with the $\omega$-functions $\omega_{l_0,l_1,l_2,l_3,l_4}^{(j)}$
as $\omega$-lattice of type $A_4^{(1)}$.
We note that the configurations of the $\tau$-variables on the $\omega$-lattice are given by
\begin{subequations}
\begin{align}
 &(\tau_1^{(1)},\tau_1^{(2)},\tau_1^{(3)},\tau_1^{(4)},\tau_1^{(5)})
 \leftrightarrow
 ({\bm 0},\bmv_1,\bmv_{12},\bmv_{123},\bmv_{1234}),\\
 &(\tau_2^{(1)},\tau_2^{(2)},\tau_2^{(3)},\tau_2^{(4)},\tau_2^{(5)})
 \leftrightarrow
 (\bmv_2,\bmv_{13},\bmv_{124},\bmv_{0123},\bmv_{11234}),
\end{align}
\end{subequations}
while those of the $\omega$-variables are given by
\begin{subequations}
\begin{align}
 &(\omega_1^{(1)},\omega_2^{(1)},\omega_3^{(1)})
 \leftrightarrow
 \Big({\rm edge}(\bm{0};1),{\rm edge}(\bmv_2;1),{\rm edge}(\bmv_{1234};1)\Big),\\
 &(\omega_1^{(2)},\omega_2^{(2)},\omega_3^{(2)})
 \leftrightarrow
 \Big({\rm edge}(\bmv_1;2),{\rm edge}(\bmv_{13};2),{\rm edge}(\bm{0};2)\Big),\\
 &(\omega_1^{(3)},\omega_2^{(3)},\omega_3^{(3)})
 \leftrightarrow
 \Big({\rm edge}(\bmv_{12};3),{\rm edge}(\bmv_{124};3),{\rm edge}(\bmv_1;3)\Big),\\
 &(\omega_1^{(4)},\omega_2^{(4)},\omega_3^{(4)})
 \leftrightarrow
 \Big({\rm edge}(\bmv_{123};4),{\rm edge}(\bmv_{0123};4),{\rm edge}(\bmv_{12};4)\Big),\\
 &(\omega_1^{(5)},\omega_2^{(5)},\omega_3^{(5)})
 \leftrightarrow
 \Big({\rm edge}(\bmv_{1234};0),{\rm edge}(\bmv_{11234};0),{\rm edge}(\bmv_{123};0)\Big).
\end{align}
\end{subequations}

\begin{figure}[t]
\begin{center}
\includegraphics[width=0.8\textwidth]{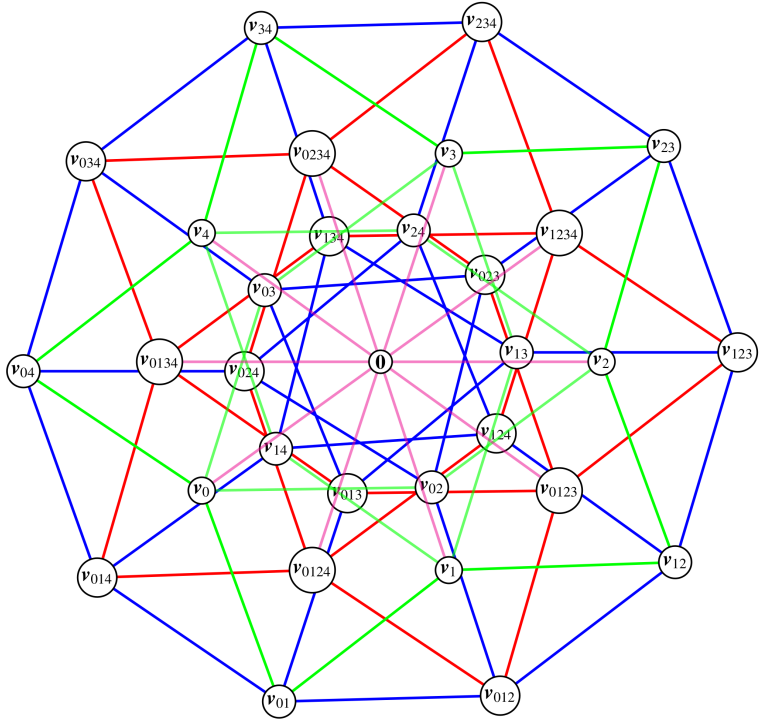}
\end{center}
\caption{The lattice \eqref{eqn:omega_lattice_A4} around the origin, 
which is a 2-dimensional projection of the Voronoi cell of type $A_4$.
Refer to \eqref{eqns:vectors_v_A4} and \eqref{eqns:notations_v_A4} for $\bmv$.
The directions from $\bm{0}$ to $\bmv_i$, $i=0,\dots,4$, correspond to the $T_i$-directions, $i=0,\dots,4$, respectively.
}
\label{fig:omega_latticeA4}
\end{figure}

On the $\omega$-lattice various quad-equations of ABS-type can be derived, e.g.
\begin{subequations}\label{eqns:A4_H3}
\begin{align}
 &\cfrac{T_0T_2(\omega_3^{(1)})}{\omega_3^{(1)}}
 =\cfrac{a_0}{a_2 a_3}\,
 \cfrac{(1-a_2) {a_1}^2 a_2T_0(\omega_3^{(1)})+(1-a_1) T_2(\omega_3^{(1)})}
 {a_1 a_2 T_0(\omega_3^{(1)})-T_2(\omega_3^{(1)})},\\
 &\cfrac{T_0T_3(\omega_3^{(1)})}{\omega_3^{(1)}}
 =\cfrac{a_0}{a_2}\,
 \cfrac{{a_1}^2a_2 a_3 (1-a_2 a_3) T_0(\omega_3^{(1)})+(1-a_1) T_3(\omega_3^{(1)})}
 {a_1 a_2 a_3 T_0(\omega_3^{(1)})-T_3(\omega_3^{(1)})},\\
 &\cfrac{T_0T_4(\omega_3^{(1)})}{\omega_3^{(1)}}
 =\cfrac{a_0 a_4}{a_2}\,
 \cfrac{{a_1}^2 a_2 a_3 a_4 (1-a_2 a_3 a_4) T_0(\omega_3^{(1)})+(1-a_1) T_4(\omega_3^{(1)})}
 {a_1 a_2 a_3 a_4 T_0(\omega_3^{(1)})-T_4(\omega_3^{(1)})},\\
 &\cfrac{T_2T_3(\omega_3^{(1)})}{\omega_3^{(1)}}
 =a_0 {a_1}^2\,
 \cfrac{a_3 (1-a_2 a_3) T_2(\omega_3^{(1)})-(1-a_2) T_3(\omega_3^{(1)})}{a_3 T_2(\omega_3^{(1)})-T_3(\omega_3^{(1)})},\\
 &\cfrac{T_2T_4(\omega_3^{(1)})}{\omega_3^{(1)}}
 =a_0 {a_1}^2 a_4\,
 \cfrac{a_3 a_4 (1-a_2 a_3 a_4) T_2(\omega_3^{(1)})-(1-a_2) T_4(\omega_3^{(1)})}{a_3 a_4 T_2(\omega_3^{(1)})-T_4(\omega_3^{(1)})},\\
 &\cfrac{T_3T_4(\omega_3^{(1)})}{\omega_3^{(1)}}
 =a_0 {a_1}^2 a_3 a_4\,
 \cfrac{a_4 (1-a_2 a_3 a_4) T_3(\omega_3^{(1)})-(1-a_2 a_3) T_4(\omega_3^{(1)})}{a_4 T_3(\omega_3^{(1)})-T_4(\omega_3^{(1)})},
\end{align}
\end{subequations}
\begin{subequations}\label{eqns:A4_D4}
\begin{align}
 &\cfrac{\omega_1^{(1)}}{\omega_3^{(1)}}
 -\cfrac{{a_0}^2 a_4}{{a_2}^2 a_3}\,\cfrac{\omega_2^{(3)}}{\omega_3^{(3)}}=-\cfrac{a_0}{a_2},\\
 &\cfrac{\omega_1^{(1)}}{\omega_3^{(1)}}
 -\cfrac{{a_0}^2 a_4}{{a_2}^2}\,\cfrac{T_4(\omega_2^{(2)})}{\omega_1^{(2)}}=-\cfrac{a_0}{a_2 a_3},\\
 &\cfrac{\omega_1^{(1)}}{\omega_3^{(1)}}
 -\cfrac{{a_0}^2}{{a_2}^2 a_3}\,\cfrac{\omega_1^{(4)}}{{T_2}^{-1}T_3^{-1}(\omega_1^{(4)})}=-\cfrac{a_0 a_4}{a_2}.
\end{align}
\end{subequations}
Note that Equations \eqref{eqns:A4_H3} are relations between the $\omega$-function $\omega_{l_0,l_1,l_2,l_3,l_4}^{(1)}$,
but Equations \eqref{eqns:A4_D4} are the relations between 
$\omega_{l_0,l_1,l_2,l_3,l_4}^{(1)}$ and $\omega_{l_0,l_1,l_2,l_3,l_4}^{(3)}$,
$\omega_{l_0,l_1,l_2,l_3,l_4}^{(1)}$ and $\omega_{l_0,l_1,l_2,l_3,l_4}^{(2)}$ and
$\omega_{l_0,l_1,l_2,l_3,l_4}^{(1)}$ and $\omega_{l_0,l_1,l_2,l_3,l_4}^{(4)}$, respectively.
Each equation of Equations \eqref{eqns:A4_H3} and that of Equations \eqref{eqns:A4_D4}
are of $H3$- and $D4$-types in the ABS classification \cite{ABS2003:MR1962121,ABS2009:MR2503862,BollR2011:MR2846098,BollR2012:MR3010833,BollR:thesis},
respectively.
Details of the $\omega$-lattice  of type $A_4^{(1)}$ will be discussed in a forthcoming paper (N. Joshi, N. Nakazono and Y. Shi, in preparation).

\section{Construction of the lattice $\omega_{A_2+A_1}$}
\label{section:affineWeylA2}
In this section, we consider the extended affine Weyl group $\widetilde{W}((A_2\rtimes A_1)^{(1)})$
given by the following six generators:
\begin{equation}\label{eqn:WA2_elements}
 w_0=s_0,\quad
 w_1=s_1s_2s_1,\quad
 w_2=s_3s_4s_3,\quad
 r_0=\iota,\quad
 r_1=\sigma \iota s_2s_4,\quad
 \pi=\sigma^3 \iota s_4.
\end{equation}
The details of $\widetilde{W}((A_2\rtimes A_1)^{(1)})$ is discussed in Appendix \ref{section:Q(A2A1)andW(A2)}.
Using this group, we construct another important lattice $\omega_{A_2+A_1}$.
Moreover, we show that the $q$-Painlev\'e equations \eqref{eqns:intro_qPs} can be derived also as the relations on the lattice $\omega_{A_2+A_1}$.
\subsection{Affine Weyl group $\widetilde{W}((A_2\rtimes A_1)^{(1)})$}
In this section, we consider the birational action of $\widetilde{W}((A_2\rtimes A_1)^{(1)})$
on the parameters $b_i$, $i=0,1,2,3$, and $p$ defined by \eqref{eqns:para_a_b}
and on the particular $\omega$-variables $\omega_i^{(j)}$, $i=1,2,3$ and $j=1,3$, given by \eqref{eqn:def_omega_A4}.
We note that from the relations \eqref{eqn:A4_conditions_omega}, these six $\omega$-variables satisfy the following two relations:
\begin{subequations}\label{eqns:condition_omega_A2}
\begin{align}
 &\omega_2^{(1)}=\cfrac{b_0{b_3}^{3/2} \omega_3^{(1)}(p^2{b_0}^3 {b_3}^{1/2}\omega_2^{(3)}-b_1 b_2 \omega_1^{(3)})}{{b_1}^2 \omega_1^{(3)}},
 \label{eqn:condition_omega_A2_1}\\
 &\omega_3^{(3)}=\cfrac{p^2{b_0}^2 {b_1}^2 {b_3}^2 \omega_2^{(3)}\omega_3^{(1)}}{\omega_1^{(1)}-pb_0 b_1 b_2 {b_3}^{1/2}\omega_3^{(1)}}.
 \label{eqn:condition_omega_A2_2} 
\end{align}
\end{subequations}
Therefore, essential $\omega$-variables used here are four.
The action of $\widetilde{W}((A_2\rtimes A_1)^{(1)})$ on the parameters is given by
\begin{subequations}
\begin{align}
 &w_0:(b_0,b_1,b_2,b_3,p)\mapsto ({b_0}^{-1} p^{-2},p^{-1}{b_0}^{-1}b_1,p^{-1}{b_0}^{-1}b_2,b_3,p),\\
 &w_1:(b_0,b_1,b_2,b_3,p)\mapsto (b_0{b_1}^{-1},{b_1}^{-1},{b_1}^{-1}b_2,b_3,p),\\
 &w_2:(b_0,b_1,b_2,b_3,p)\mapsto (b_1,b_0,b_2,b_3,p),\\
 &r_0:(b_0,b_1,b_2,b_3,p)\mapsto ({b_0}^{-1},{b_0}^{-1}b_1,p^{-1}{b_0}^{-1}b_2,{b_3}^{-1},p^{-1}),\\
 &r_1:(b_0,b_1,b_2,b_3,p)\mapsto (pb_1,pb_0,p^{-1}b_2,{b_3}^{-1},p^{-1}),\\
 &\pi:(b_0,b_1,b_2,b_3,p)\mapsto (p b_0{b_1}^{-1},{b_1}^{-1},p^{-1}{b_1}^{-1}b_2,b_3,p^{-1}),
\end{align}
\end{subequations}
while that on the six $\omega$-variables is given by
{\allowdisplaybreaks
\begin{subequations}
\begin{align}
 &w_0(\omega_3^{(1)})
 =\cfrac{p \omega_1^{(3)} \omega_3^{(3)} (p{b_1}^2 \omega_2^{(1)}+b_3 \omega_1^{(1)})}
 {{b_3}^2 \omega_2^{(3)}(p {b_1}^2 b_3 \omega_1^{(3)}+\omega_3^{(3)})},\\
 &w_0(\omega_2^{(3)})
 =\cfrac{p^3{b_0}^2 \omega_2^{(3)} (p{b_0}^2 b_3 \omega_1^{(1)}+{b_1}^2 \omega_2^{(1)})}{p{b_1}^2 \omega_2^{(1)}+b_3 \omega_1^{(1)}},\\
 &w_1(\omega_1^{(1)})
 =\cfrac{\omega_1^{(1)}(p{b_0}^2b_3 \omega_2^{(3)}+\omega_1^{(3)})}{{b_1}^2(p{b_0}^2 b_3 \omega_2^{(3)}+{b_1}^2 \omega_1^{(3)})},\\
 &w_1(\omega_3^{(3)})
 =\cfrac{p b_3 \omega_3^{(1)} \omega_3^{(3)}(p{b_0}^2 b_3 \omega_2^{(3)}+{b_1}^2 \omega_1^{(3)})}
 {b_1(-pb_0 b_2 {b_3}^{1/2}\omega_3^{(1)} \omega_3^{(3)}+p{b_1}^3 b_3 \omega_1^{(3)} \omega_3^{(1)}+b_1 \omega_1^{(1)} \omega_3^{(3)})},\\
 &w_2(\omega_2^{(1)})
 =\cfrac{b_1 b_3 \omega_3^{(1)}(p^2{b_0}^2 b_1 b_3 \omega_1^{(1)} \omega_2^{(3)}+p {b_1}^3 \omega_1^{(3)} \omega_2^{(1)}-b_0 b_2{b_3}^{1/2} \omega_1^{(1)} \omega_1^{(3)})}{{b_0}^2 \omega_1^{(3)} (p {b_1}^2 b_3 \omega_3^{(1)}+\omega_1^{(1)})},\\
 &w_2(\omega_1^{(3)})
 =\cfrac{{b_1}^2 \omega_1^{(3)}(p {b_1}^2 b_3 \omega_3^{(1)}+\omega_1^{(1)})}{{b_0}^2 (p{b_0}^2b_3 \omega_3^{(1)}+\omega_1^{(1)})},\quad
 r_0(\omega_i^{(1)})=\cfrac{1}{\omega_{-i+2}^{(3)}},\quad
 r_0(\omega_i^{(3)})=\cfrac{1}{\omega_{-i+2}^{(1)}},\\
 &r_1(\omega_1^{(1)})=\cfrac{1}{\omega_2^{(3)}},\quad
 r_1(\omega_2^{(1)})=\cfrac{1}{\omega_1^{(3)}},\quad
 r_1(\omega_3^{(1)})
 =\cfrac{-p {b_0}^2 {b_3}^2 \omega_1^{(1)}}{b_1 \omega_1^{(3)} (b_0 b_2 {b_3}^{1/2} \omega_1^{(1)}-p{b_1}^3 \omega_2^{(1)})},\\
 &r_1(\omega_1^{(3)})=\cfrac{1}{\omega_2^{(1)}},\quad
 r_1(\omega_2^{(3)})=\cfrac{1}{\omega_1^{(1)}},\\
 &r_1(\omega_3^{(3)})
 =\cfrac{-p^3{b_0}^4 b_1\omega_2^{(3)}}
 {p{b_0}^3 b_2 b_3^{3/2} \omega_1^{(1)} \omega_2^{(3)}+b_0 b_2 {b_3}^{1/2} \omega_1^{(1)} \omega_1^{(3)}-p{b_1}^3  \omega_2^{(1)} \omega_1^{(3)}},\\
 &\pi(\omega_i^{(1)})=\cfrac{1}{\omega_{-i+2}^{(1)}},\quad
 \pi(\omega_1^{(3)})=\cfrac{1}{\omega_2^{(3)}},\quad
 \pi(\omega_2^{(3)})=\cfrac{1}{\omega_1^{(3)}},\\
 &\pi(\omega_3^{(3)})
 =\cfrac{-p{b_0}^2 {b_3}^2\omega_1^{(1)}}{b_1 \omega_1^{(3)}(b_0 b_2 {b_3}^{1/2} \omega_1^{(1)}-p{b_1}^3 \omega_2^{(1)})},
\end{align}
\end{subequations}
where $i\in\bbZ/3\bbZ$,
which follow from \eqref{eqn:A4_weylaction_a}, \eqref{eqns:para_a_b}, \eqref{eqns:actions_A4_omega} and \eqref{eqn:WA2_elements}.
}

Let
\begin{equation}\label{eqn:def_rho}
 \rho_1=\pi r_0 w_1w_2,\quad
 \rho_2=\pi r_0 w_0w_1,\quad
 \rho_3=\pi r_0 w_2w_0,\quad
 \rho_4=\pi r_1 r_0 r_1.
\end{equation}
Note here that the transformations $\rho_i$, $i =1,\dots,4$, are translations on the root system $Q((A_2+A_1)^{(1)})$ \eqref{eqn:root_symmetry_A2A1} 
(see Appendix \ref{section:Q(A2A1)andW(A2)} for details).
The translations $\rho_i$, $i =1,\dots,4$, commute with each other and 
\begin{equation}
 \rho_1\rho_2\rho_3\rho_4=1.
\end{equation}
Their actions on the parameters are given by
\begin{subequations}
\begin{align}
 &\rho_1:(b_0,b_1,b_2,b_3)\mapsto(pb_0,b_1,b_2,{b_3}^{-1}),\\
 &\rho_2:(b_0,b_1,b_2,b_3)\mapsto(b_0,pb_1,b_2,{b_3}^{-1}),\\
 &\rho_3:(b_0,b_1,b_2,b_3)\mapsto(p^{-1}b_0,p^{-1}b_1,p^{-1}b_2,{b_3}^{-1}),\\
 &\rho_4:(b_0,b_1,b_2,b_3)\mapsto(b_0,b_1,pb_2,{b_3}^{-1}),
\end{align}
\end{subequations}
where $p$ is invariant under their actions.

\subsection{Lattice $\omega_{A_2+A_1}$}
In this section, we define the $\omega$-functions associated with the translations on the root system $Q((A_2+A_1)^{(1)})$
and then construct the lattice $\omega_{A_2+A_1}$.

We define $\omega$-functions by using the translations $\rho_i$, $i=1,\dots,4$, as follows:
\begin{equation}\label{eqn:A5_omegafun}
 \omega_{l_1,l_2,l_3,l_4}={\rho_1}^{l_1}{\rho_2}^{l_2}{\rho_3}^{l_3}{\rho_4}^{l_4}(\omega_3^{(1)}),
\end{equation}
where $l_i\in\mathbb{Z}$.
We note that
\begin{subequations}\label{eqns:notation_omega_A2}
\begin{align}
 &\omega_1^{(1)}=\omega_{1,1,0,0},\quad
 \omega_2^{(1)}=\omega_{2,1,1,0},\quad
 \omega_3^{(1)}=\omega_{0,0,0,0},\\
 &\omega_1^{(3)}=\omega_{1,0,0,0},\quad
 \omega_2^{(3)}=\omega_{1,1,1,0},\quad
 \omega_3^{(3)}=\omega_{1,1,0,1}.
\end{align}
\end{subequations}
Let us assign the $\omega$-functions $\omega_{l_1,l_2,l_3,l_4}$ to the vertices of the lattice
\begin{equation}\label{eqn:omega_lattice_A2A1}
 \left.\left\{\sum_{i=1}^4l_i\bmv_i\,\right|\, l_1,\dots,l_4\in\bbZ\right\}
\end{equation}
by the following correspondence:
\begin{equation}\label{eqn:omega_R3}
 \omega_{l_1,l_2,l_3,l_4}
 \leftrightarrow
 l_1\bmv_1+l_2\bmv_2+l_3\bmv_3+l_4\bmv_4.
\end{equation}
Here, $\bmv_i$, $i=1,\dots,4$, are defined by
\begin{equation}\label{eqn:vectors_v_A2A1}
 \bmv_1=(1,1,1),\quad
 \bmv_2=(-1,-1,1),\quad
 \bmv_3=(1,-1,-1),\quad
 \bmv_4=(-1,1,-1),
\end{equation}
and satisfy $\bmv_1+\bmv_2+\bmv_3+\bmv_4=\bm{0}$.
We here refer to the lattice \eqref{eqn:omega_lattice_A2A1}
with the $\omega$-functions $\omega_{l_1,l_2,l_3,l_4}$ as lattice $\omega_{A_2+A_1}$.
We note that the configurations of the $\omega$-variables on the lattice $\omega_{A_2+A_1}$ are given by
\begin{equation}
 (\omega_1^{(1)},\omega_2^{(1)},\omega_3^{(1)},\omega_1^{(3)},\omega_2^{(3)},\omega_3^{(3)})
 \leftrightarrow
 (\bmv_1+\bmv_2,\bmv_1-\bmv_4,\bm{0},\bmv_1,-\bmv_4,-\bmv_3).
\end{equation}
See the example given in Figure \ref{fig:quad} to see the quadrilateral associated with 
$\omega_1^{(1)}$, $\omega_3^{(1)}$, $\omega_2^{(3)}$ and $\omega_3^{(3)}$.

\begin{figure}[t]
\begin{center}
\includegraphics[width=0.38\textwidth]{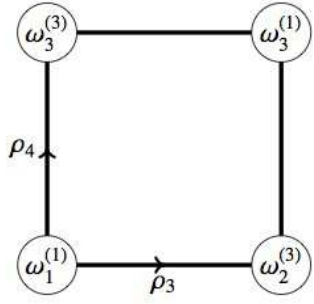}
\end{center}
\caption{A quadrilateral associated with the $\omega$-variables $\omega_1^{(1)}$, $\omega_3^{(1)}$, $\omega_2^{(3)}$ and $\omega_3^{(3)}$.}
\label{fig:quad}
\end{figure}

The 14 vertices around $\bml\in\omega_{A_2+A_1}$:
\begin{equation}
 \{\bml\pm\bmv_i,~\bml+\bmv_i+\bmv_j\,|\,i,j=1,\dots,4,~i\neq j\},
\end{equation}
collectively forms the rhombic dodecahedron (see Figure \ref{fig:Rhombic_dodecahedron}).
Letting $\bar{V}(\bml)$ be the rhombic dodecahedron with the center $\bml\in\omega_{A_2+A_1}$:
\begin{equation}
 \bar{V}(\bml)=\{\bml\}\cup V(\bml),
\end{equation}
then the following holds:
\begin{equation}
 \omega_{A_2+A_1}=\bigcup_{\bml\in\omega_{A_2+A_1}}\bar{V}(\bml).
\end{equation}

\begin{figure}[t]
\begin{center}
\includegraphics[width=0.7\textwidth]{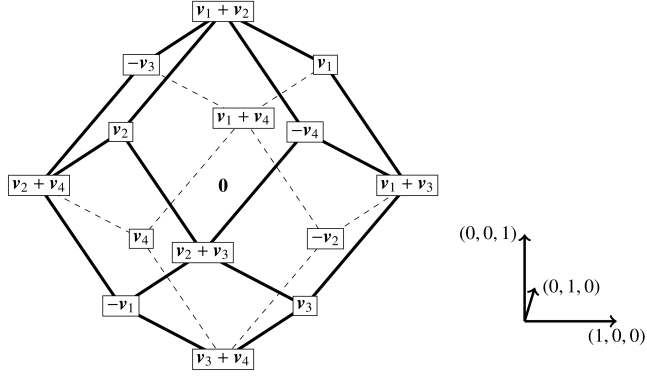}
\end{center}
\caption{The rhombic dodecahedron around $\bm{0}$.
Refer to \eqref{eqn:vectors_v_A2A1} for $\bmv$.
The directions from $\bm{0}$ to $\bmv_i$, $i=1,\dots,4$, correspond to the $\rho_i$-directions, $i=1,\dots,4$, respectively.}
\label{fig:Rhombic_dodecahedron}
\end{figure}

Henceforth, let us consider the quad-equations appearing on the lattice $\omega_{A_2+A_1}$.

\begin{lemma}
The following quad-equations hold on the lattice $\omega_{A_2+A_1}$$:$
{\allowdisplaybreaks
\begin{subequations}\label{eqns:quad_omega_A2}
\begin{align}
 &\cfrac{\omega_{l_1+1,l_2+1,l_3,l_4}}{\omega_{l_1,l_2,l_3,l_4}}
 =p^{2(l_2-l_3)+1}{b_1}^2\la_{l_1+l_2+l_3+l_4}\,
 \cfrac{\omega_{l_1+1,l_2,l_3,l_4}-p^{4(l_1-l_2)}{b_0}^4{b_1}^{-4}\omega_{l_1,l_2+1,l_3,l_4}}
 {p^{2(l_1-l_2)}{b_0}^2{b_1}^{-2}\omega_{l_1,l_2+1,l_3,l_4}-\omega_{l_1+1,l_2,l_3,l_4}},
 \label{eqn:quad_omega_A2_12}\\
 &\cfrac{\omega_{l_1,l_2+1,l_3+1,l_4}}{\omega_{l_1,l_2,l_3,l_4}}
 =\cfrac{p^{2(-l_1+l_3)+1}\la_{l_1+l_2+l_3+l_4}}{{b_0}^2}~
 \cfrac{\omega_{l_1,l_2+1,l_3,l_4}-p^{4(l_2-l_3)}{b_1}^4\omega_{l_1,l_2,l_3+1,l_4}}
 {p^{2(l_2-l_3)}{b_1}^2\omega_{l_1,l_2,l_3+1,l_4}-\omega_{l_1,l_2+1,l_3,l_4}},
 \label{eqn:quad_omega_A2_23}\\
 &\cfrac{\omega_{l_1+1,l_2,l_3+1,l_4}}{\omega_{l_1,l_2,l_3,l_4}}
 =\cfrac{p^{2(-l_2+l_3)+1}\la_{l_1+l_2+l_3+l_4}}{{b_1}^2}~
 \cfrac{\omega_{l_1+1,l_2,l_3,l_4}-p^{4(l_1-l_3)}{b_0}^4\omega_{l_1,l_2,l_3+1,l_4}}
 {p^{2(l_1-l_3)}{b_0}^2\omega_{l_1,l_2,l_3+1,l_4}-\omega_{l_1+1,l_2,l_3,l_4}},
 \label{eqn:quad_omega_A2_13}\\
 &\cfrac{\omega_{l_1,l_2,l_3,l_4}}{\omega_{l_1+1,l_2,l_3,l_4+1}}
 =\left(\cfrac{p^{-2l_1+l_2+l_3-1}b_1 \la_{l_1+l_2+l_3+l_4}}{{b_0}^2}\right)^2
 \cfrac{\omega_{l_1+1,l_2,l_3,l_4}}{\omega_{l_1,l_2,l_3,l_4+1}}
 +\cfrac{p^{-3l_1+l_2+l_3+l_4-1}b_1 b_2 {\la_{l_1+l_2+l_3+l_4}}^{1/2}}{{b_0}^3},
 \label{eqn:quad_omega_A2_14}\\
 &\cfrac{\omega_{l_1,l_2,l_3,l_4}}{\omega_{l_1,l_2+1,l_3,l_4+1}}
 =\left(\cfrac{p^{l_1-2l_2+l_3-1}b_0 \la_{l_1+l_2+l_3+l_4}}{{b_1}^2}\right)^2
 \cfrac{\omega_{l_1,l_2+1,l_3,l_4}}{\omega_{l_1,l_2,l_3,l_4+1}}
 +\cfrac{p^{l_1-3l_2+l_3+l_4-1}b_0 b_2 {\la_{l_1+l_2+l_3+l_4}}^{1/2}}{{b_1}^3},
 \label{eqn:quad_omega_A2_24}\\
 &\cfrac{\omega_{l_1,l_2,l_3,l_4}}{\omega_{l_1,l_2,l_3+1,l_4+1}}
 =\left(p^{l_1+l_2-2l_3-1}b_0 b_1 \la_{l_1+l_2+l_3+l_4}\right)^2
 \cfrac{\omega_{l_1,l_2,l_3+1,l_4}}{\omega_{l_1,l_2,l_3,l_4+1}}
 +p^{l_1+l_2-3l_3+l_4-1}b_0 b_1 b_2 {\la_{l_1+l_2+l_3+l_4}}^{1/2},
 \label{eqn:quad_omega_A2_34}
\end{align}
\end{subequations}
where
\begin{equation}
 \la_l={b_3}^{(-1)^l}.
\end{equation}
Note that each equation of Equations \eqref{eqn:quad_omega_A2_12}--\eqref{eqn:quad_omega_A2_13} 
and that of Equations \eqref{eqn:quad_omega_A2_14}--\eqref{eqn:quad_omega_A2_34}
are of $H3$- and $D4$-types in the ABS classification \cite{ABS2003:MR1962121,ABS2009:MR2503862,BollR2011:MR2846098,BollR2012:MR3010833,BollR:thesis},
respectively.
}
\end{lemma}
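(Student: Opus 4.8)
\medskip\noindent\textbf{Proof strategy.}
The plan is to establish each of the six identities \eqref{eqn:quad_omega_A2_12}--\eqref{eqn:quad_omega_A2_34} at the single base point $\bml=\bm{0}$ and then obtain the general case by the translational covariance of the lattice $\omega_{A_2+A_1}$. The key point is that, by \eqref{eqn:A4_weylaction_a}, \eqref{eqns:para_a_b} and \eqref{eqn:def_rho}, the translations act on the parameters by
\begin{equation*}
 {\rho_1}^{l_1}{\rho_2}^{l_2}{\rho_3}^{l_3}{\rho_4}^{l_4}:
 (b_0,b_1,b_2,b_3)\mapsto
 \bigl(p^{l_1-l_3}\,b_0,\ p^{l_2-l_3}\,b_1,\ p^{l_4-l_3}\,b_2,\ {b_3}^{(-1)^{l_1+l_2+l_3+l_4}}\bigr),
\end{equation*}
with $p$ fixed, while by the definition \eqref{eqn:A5_omegafun} of the $\omega$-functions they act by the index shift $\omega_{m_1,m_2,m_3,m_4}\mapsto\omega_{m_1+l_1,m_2+l_2,m_3+l_3,m_4+l_4}$. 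A short computation shows that the prefactors occurring in \eqref{eqn:quad_omega_A2_12}--\eqref{eqn:quad_omega_A2_34} (the powers of $p$, the powers of $b_0,b_1$, and the factors $\la_{l_1+l_2+l_3+l_4}={b_3}^{(-1)^{l_1+l_2+l_3+l_4}}$) are exactly the images of the corresponding prefactors at $\bml=\bm{0}$ under this substitution; for instance $p^{2(l_2-l_3)+1}{b_1}^2\la_{l_1+l_2+l_3+l_4}$ in \eqref{eqn:quad_omega_A2_12} is ${\rho_1}^{l_1}{\rho_2}^{l_2}{\rho_3}^{l_3}{\rho_4}^{l_4}$ applied to $p\,{b_1}^2 b_3$, and similarly for the interior coefficients $p^{4(l_1-l_2)}{b_0}^4{b_1}^{-4}$ and $p^{2(l_1-l_2)}{b_0}^2{b_1}^{-2}$. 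Hence applying the translation ${\rho_1}^{l_1}{\rho_2}^{l_2}{\rho_3}^{l_3}{\rho_4}^{l_4}$ to the $\bml=\bm{0}$ form of each identity reproduces verbatim its form at the general point $\bml$, so it suffices to treat $\bml=\bm{0}$.

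At $\bml=\bm{0}$ each identity is a relation between four of the $\omega$-functions with small indices. By \eqref{eqns:notation_omega_A2} several of these are among $\omega_3^{(1)}$, $\omega_1^{(1)}$, $\omega_1^{(3)}$ and $\omega_2^{(3)}$; the remaining ones (such as $\omega_{0,1,0,0}=\rho_2(\omega_3^{(1)})$, $\omega_{0,0,0,1}=\rho_4(\omega_3^{(1)})$ and $\omega_{1,0,0,1}=\rho_1\rho_4(\omega_3^{(1)})$) are computed explicitly from the birational action of $\widetilde{W}((A_2\rtimes A_1)^{(1)})$ recorded above, equivalently from \eqref{eqns:actions_A4_omega}. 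Using the two constraints \eqref{eqn:condition_omega_A2_1}--\eqref{eqn:condition_omega_A2_2} to eliminate $\omega_2^{(1)}$ and $\omega_3^{(3)}$, each base-point identity becomes a rational identity in the four essential variables $\omega_1^{(1)},\omega_3^{(1)},\omega_1^{(3)},\omega_2^{(3)}$ and in $b_0,b_1,b_2,b_3,p$ (with ${b_3}^{1/2}$ entering \eqref{eqn:quad_omega_A2_14}--\eqref{eqn:quad_omega_A2_34}), which is verified by clearing denominators and checking the resulting polynomial equality.

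The main obstacle is the bookkeeping rather than any conceptual point. First, the auxiliary $\omega$-functions appearing at $\bml=\bm{0}$ are obtained by composing several generators and are sizeable rational expressions whose simplification must be handled with care, in particular the branches of ${b_3}^{1/2}$ and the parity sign $(-1)^{l_1+l_2+l_3+l_4}$ hidden in $\la_l$. Second, to confirm that the stated prefactors are reproduced one must correctly identify, for each of the six equations, which translate of which seed variable occupies each corner of the quadrilateral (e.g. that the $\omega_{l_1+1,l_2,l_3,l_4}$ in \eqref{eqn:quad_omega_A2_12} is ${\rho_1}^{l_1}{\rho_2}^{l_2}{\rho_3}^{l_3}{\rho_4}^{l_4}(\omega_1^{(3)})$); this is the step at which an error in a $p$-exponent would surface. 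The final polynomial identities are large but entirely mechanical.

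Finally, the assertion that \eqref{eqn:quad_omega_A2_12}--\eqref{eqn:quad_omega_A2_13} and \eqref{eqn:quad_omega_A2_14}--\eqref{eqn:quad_omega_A2_34} belong respectively to the $H3$ and $D4$ families of the ABS classification is checked as for the $\omega$-lattice of type $A_4^{(1)}$ in \S\ref{section:omega_lattice_A4}: since the six variables used here form a subset of those of that lattice and $\rho_1,\dots,\rho_4$ are words in $s_0,\dots,s_4,\sigma,\iota$, Equations \eqref{eqn:quad_omega_A2_12}--\eqref{eqn:quad_omega_A2_34} are specialisations of \eqref{eqns:A4_H3}--\eqref{eqns:A4_D4}, and for each one exhibits an affine change of the dependent variable together with a relabelling of the lattice parameters bringing it to the canonical ABS normal form.
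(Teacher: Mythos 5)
Your proposal is correct and follows essentially the same route as the paper: compute the birational action of the translations $\rho_i$ on the seed variables $\omega_1^{(1)},\omega_3^{(1)},\omega_1^{(3)},\omega_2^{(3)}$ (using the constraints \eqref{eqns:condition_omega_A2} to eliminate $\omega_2^{(1)},\omega_3^{(3)}$), verify the six relations at a base point, and extend to general $\bml$ by covariance under ${\rho_1}^{l_1}{\rho_2}^{l_2}{\rho_3}^{l_3}{\rho_4}^{l_4}$, which is exactly the paper's argument with the covariance step left implicit. The only cosmetic differences are that the paper writes its base relations at slightly shifted base points and obtains \eqref{eqn:quad_omega_A2_34} directly from the constraint \eqref{eqn:condition_omega_A2_2}, while you treat all six uniformly at $\bml=\bm{0}$.
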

\begin{proof}
{\allowdisplaybreaks
Recalling the definitions of $\rho_i$ given in \eqref{eqn:def_rho} and the relations \eqref{eqns:condition_omega_A2}, we have the actions shown below:
\begin{subequations}
\begin{align}
 &\rho_2(\omega_3^{(1)})
 =\cfrac{{b_1}^2 \omega_1^{(3)}(p{b_1}^2 b_3 \omega_3^{(1)}+\omega_1^{(1)})}{{b_0}^2 (p {b_0}^2 b_3 \omega_3^{(1)}+\omega_1^{(1)})},\\
 &\rho_3(\omega_1^{(3)})
 =\cfrac{\omega_1^{(1)} (p {b_0}^2 b_3 \omega_2^{(3)}+\omega_1^{(3)})}{{b_1}^2 (p{b_0}^2b_3 \omega_2^{(3)}+{b_1}^2 \omega_1^{(3)})},\\
 &{\rho_3}^{-1}(\omega_1^{(3)})
 =\cfrac{p^3{b_0}^2 \omega_3^{(1)} (p{b_0}^2 \omega_3^{(3)}+b_3{b_1}^2 \omega_1^{(3)})}{p{b_1}^2b_3 \omega_1^{(3)}+\omega_3^{(3)}},\\
 &{\rho_4}^{-1}(\omega_1^{(3)})
 =\omega_2^{(1)}
 =\cfrac{b_0 {b_3}^{3/2} \omega_3^{(1)} (p^2{b_0}^3 {b_3}^{1/2} \omega_2^{(3)}-b_1 b_2 \omega_1^{(3)})}{{b_1}^2 \omega_1^{(3)}},\\
 &\rho_4(\omega_1^{(3)})
 =\cfrac{{b_0}^2 \omega_1^{(1)} \omega_3^{(3)}}{b_1 {b_3}^{3/2} ({b_1}^3 {b_3}^{1/2} \omega_1^{(3)}-b_0 b_2 \omega_3^{(3)})}.
\end{align}
\end{subequations}
This leads to 
\begin{subequations}
\begin{align}
 &\cfrac{\omega_1^{(1)}}{\omega_3^{(1)}}
 =p{b_1}^2 b_3
 \cfrac{\omega_1^{(3)}-{b_0}^4{b_1}^{-4}\rho_2(\omega_3^{(1)})}
 {{b_0}^2{b_1}^{-2} \rho_2(\omega_3^{(1)})-\omega_1^{(3)}},\\
 &\cfrac{\omega_2^{(3)}}{\omega_1^{(3)}}
 =\cfrac{{b_3}^{-1}}{p{b_0}^2}\,
 \cfrac{\omega_1^{(1)}-{b_1}^4 \rho_3(\omega_1^{(3)})}{{b_1}^2 \rho_3(\omega_1^{(3)})-\omega_1^{(1)}},\\
 &\cfrac{\omega_1^{(3)}}{\omega_3^{(3)}}
 =\cfrac{{b_3}^{-1}}{p{b_1}^2}\,
 \cfrac{{\rho_3}^{-1}(\omega_1^{(3)})-p^4{b_0}^4\omega_3^{(1)}}{p^2{b_0}^2 \omega_3^{(1)}-{\rho_3}^{-1}(\omega_1^{(3)})},\\
 &\cfrac{\omega_2^{(3)}}{\omega_1^{(3)}}
 =\left(\cfrac{b_1 {b_3}^{-1}}{p{b_0}^2}\right)^2\cfrac{{\rho_4}^{-1}(\omega_1^{(3)})}{\omega_3^{(1)}}
 +\cfrac{b_1 b_2{b_3}^{-1/2}}{p^2{b_0}^3},\\
 &\cfrac{\omega_1^{(3)}}{\omega_3^{(3)}}
 =\left(\cfrac{b_0{b_3}^{-1}}{{b_1}^2}\right)^2\cfrac{\omega_1^{(1)}}{\rho_4(\omega_1^{(3)})}
 +\cfrac{b_0 b_2{b_3}^{-1/2}}{{b_1}^3},
\end{align}
\end{subequations}
which in turn lead immediately to Equations \eqref{eqn:quad_omega_A2_12}--\eqref{eqn:quad_omega_A2_24}.
Moreover, we get Equation \eqref{eqn:quad_omega_A2_34} 
from the relation \eqref{eqn:condition_omega_A2_2} or, equivalently,
\begin{equation}
 \cfrac{\omega_1^{(1)}}{\omega_3^{(1)}}
 =(pb_0b_1b_3)^2\cfrac{\omega_2^{(3)}}{\omega_3^{(3)}}
 +pb_0 b_1 b_2 {b_3}^{1/2}.
\end{equation}
Therefore we have completed the proof.
}
\end{proof}

\begin{lemma}
The quad-equations \eqref{eqns:quad_omega_A2} are fundamental relations on the lattice $\omega_{A_2+A_1}$.
\end{lemma}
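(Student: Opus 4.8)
The claim to prove is that the six quad-equations \eqref{eqns:quad_omega_A2} constitute \emph{fundamental relations} on the lattice $\omega_{A_2+A_1}$; that is, every quad-equation induced on any face of the rhombic dodecahedral tiling by the action of $\widetilde{W}((A_2\rtimes A_1)^{(1)})$ is obtained from these six by applying the translations $\rho_i$. The plan is to exploit the structure of the lattice $\omega_{A_2+A_1}$ as a union of rhombic dodecahedra $\bar V(\bml)$, whose faces come in exactly six orientation classes corresponding to the six pairs $\{\bmv_i,\bmv_j\}$, $1\le i<j\le 4$. First I would record that each rhombic face of $\bar V(\bml)$ carries a quadrilateral with vertices $\{\bml,\bml+\bmv_i,\bml+\bmv_i+\bmv_j,\bml+\bmv_j\}$ (or the translate centred appropriately), and that there are precisely six such classes; since $\rho_k$ acts on the lattice by $\bmv$-translations and permutes/maps the $\omega$-functions by \eqref{eqn:A5_omegafun}, it suffices to verify that \emph{one} representative quadrilateral in each of the six classes carries exactly the stated relation, and that translating by the $\rho_k$ reproduces \eqref{eqns:quad_omega_A2} with the shifted indices $(l_1,l_2,l_3,l_4)$.

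The second step is the explicit identification: using the dictionary \eqref{eqns:notation_omega_A2} together with the configuration statement $(\omega_1^{(1)},\omega_2^{(1)},\omega_3^{(1)},\omega_1^{(3)},\omega_2^{(3)},\omega_3^{(3)})\leftrightarrow(\bmv_1+\bmv_2,\bmv_1-\bmv_4,\bm0,\bmv_1,-\bmv_4,-\bmv_3)$, I would check that, for instance, the face spanned by directions $\bmv_1,\bmv_2$ through the origin has its four $\omega$-values among $\{\omega_3^{(1)},\omega_1^{(3)},\rho_2(\omega_3^{(1)}),\omega_1^{(1)}\}$ and that the relation among them derived in the proof of the previous lemma (the first displayed identity there) is exactly \eqref{eqn:quad_omega_A2_12} evaluated at $(l_1,l_2,l_3,l_4)=(0,0,0,0)$. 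Repeating this bookkeeping for the pairs $(1,3),(2,3),(1,4),(2,4),(3,4)$ matches the remaining five equations; this is the content already assembled in the proof of the preceding lemma, so here it is a matter of citing it and noting the face-to-class correspondence is a bijection.

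The third step is to argue completeness: one must show there are no \emph{other} independent quad-equations hiding on the lattice. For this I would invoke the counting of essential $\omega$-variables — the relations \eqref{eqns:condition_omega_A2} reduce the six $\omega_i^{(j)}$ to four independent ones — so the quadrilateral relations on the four distinct face-orientations spanning a fundamental domain, together with \eqref{eqns:condition_omega_A2}, already determine all $\omega$-functions from initial data along a pair of coordinate directions; any further relation on the lattice is then a consequence. Concretely, I would show that starting from $\omega$-values on two adjacent edges (say the $\rho_1$- and $\rho_2$-strings), Equations \eqref{eqn:quad_omega_A2_12}, \eqref{eqn:quad_omega_A2_23}, \eqref{eqn:quad_omega_A2_13} propagate uniquely within each $A_2$-type plane and \eqref{eqn:quad_omega_A2_14}--\eqref{eqn:quad_omega_A2_34} propagate in the $\rho_4$-direction, so the whole lattice is filled; hence these six relations generate everything, i.e.\ they are fundamental.

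The main obstacle I anticipate is the completeness/propagation argument rather than the verification of the six identities (those follow from the previous lemma). The delicate point is ensuring consistency: because the $\rho_i$ commute and $\rho_1\rho_2\rho_3\rho_4=1$, the quad-equations around each rhombic dodecahedron must close up, and one has to confirm that the $H3$-type equations \eqref{eqn:quad_omega_A2_12}--\eqref{eqn:quad_omega_A2_13} on the three $A_2$-faces meeting at a vertex are compatible with the three $D4$-type equations \eqref{eqn:quad_omega_A2_14}--\eqref{eqn:quad_omega_A2_34} on the faces involving $\bmv_4$ — i.e.\ that the rhombic dodecahedron is a genuine ``consistency cell''. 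I expect this to reduce, after clearing denominators, to a polynomial identity in the $b_i$, $p$ and the four essential $\omega$-variables that is already guaranteed by the fact that all these relations were \emph{derived} from the single consistent birational action of $\widetilde{W}((A_2\rtimes A_1)^{(1)})$ and the relations \eqref{eqns:condition_omega_A2}; so the closure is automatic, and the proof amounts to making this observation precise together with the face-orientation bijection.
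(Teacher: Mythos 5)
Your proposal and the paper part ways at the meaning of ``fundamental relations.'' The paper's proof makes the intended meaning explicit: every $\omega$-function $\omega_{l_1,l_2,l_3,l_4}$ can be \emph{computed} from the six quad-equations \eqref{eqns:quad_omega_A2} together with just four initial values $\omega_{1,1,0,0},\omega_{0,0,0,0},\omega_{1,0,0,0},\omega_{1,1,1,0}$ (the number of essential $\omega$-variables), and the proof is an explicit determinacy argument: a carefully ordered list of equation instances, each containing exactly one unknown, first fills the rhombic dodecahedron $\bar V(\bm 0)$ and then passes from $\bar V(\bml)$ to $\bar V(\bml+\bmv_i)$, e.g.\ \eqref{eqn:quad_omega_A2_12}$_{(0,0,0,0)}$ determines the value at $\bmv_2$, \eqref{eqn:quad_omega_A2_23}$_{(1,0,0,0)}$ the value at $\bmv_1+\bmv_3$, and so on. Your Steps 1--2 (checking that one representative face in each of the six orientation classes carries the stated relation) merely restate the preceding lemma and do not address this determinacy claim; your worry about closure around a dodecahedron is also moot, since the $\omega$-functions already exist globally by \eqref{eqn:A5_omegafun} and only their computability is at issue.

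The genuine gap is in your Step 3, which is the heart of the lemma. Starting from data on the $\rho_1$- and $\rho_2$-strings, \eqref{eqn:quad_omega_A2_12} does fill the corresponding plane, but every instance of \eqref{eqn:quad_omega_A2_23}, \eqref{eqn:quad_omega_A2_13} or of \eqref{eqn:quad_omega_A2_14}--\eqref{eqn:quad_omega_A2_34} whose base point lies in that plane contains \emph{two} values on the adjacent layer (recall $\bmv_4=-\bmv_1-\bmv_2-\bmv_3$, so a $\rho_4$-step also leaves the plane), so the asserted ``unique propagation in the $\rho_4$-direction'' cannot proceed one equation at a time. To repair it you would need either a seed value off the initial plane --- which is exactly why the paper includes $\omega_{1,1,1,0}=\omega_2^{(3)}$ among its four initial values --- or a two-equations-in-two-unknowns solvability argument (pairing an $H3$-type with a $D4$-type relation on the same pair of unknowns), which you neither state nor justify. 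Moreover, even if completed, your scheme assumes initial data along two infinite strings and with the extra input \eqref{eqns:condition_omega_A2}, which proves a strictly weaker statement than the paper's: four values and the quad-equations alone suffice, and that count is the point of the lemma.
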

\begin{proof}
In this proof we will show that any $\omega$-function $\omega_{l_1,l_2,l_3,l_4}$ can be calculated by the quad-equations \eqref{eqns:quad_omega_A2} 
with four initial values: $\omega_1^{(1)}$, $\omega_3^{(1)}$, $\omega_1^{(3)}$ and $\omega_2^{(3)}$
$($or, $\omega_{1,1,0,0}$, $\omega_{0,0,0,0}$, $\omega_{1,0,0,0}$ and $\omega_{1,1,1,0}$$)$.

First, we obtain the values of all $\omega$-functions on $\bar{V}(\bm{0})$ from the initial values by the following steps.
\begin{description}
\item[Step 1]
By using Equations 
\eqref{eqn:quad_omega_A2_12}$_{(0,0,0,0)}$,
\eqref{eqn:quad_omega_A2_23}$_{(1,0,0,0)}$ and
\eqref{eqn:quad_omega_A2_34}$_{(1,1,0,0)}$,
the functions on $\bmv_2$, $\bmv_1+\bmv_3$ and $-\bmv_3$ can be calculated, respectively.
\item[Step 2]
By using Equations
\eqref{eqn:quad_omega_A2_13}$_{(0,0,0,0)}$,
\eqref{eqn:quad_omega_A2_13}$_{(0,1,0,0)}$,
\eqref{eqn:quad_omega_A2_14}$_{(0,1,0,0)}$,
\eqref{eqn:quad_omega_A2_24}$_{(1,0,0,0)}$ and
\eqref{eqn:quad_omega_A2_24}$_{(1,0,1,0)}$,
the functions on $\bmv_3$, $\bmv_2+\bmv_3$, $\bmv_2+\bmv_4$, $\bmv_1+\bmv_4$ and $-\bmv_2$
can be calculated, respectively.
\item[Step 3]
By using Equations 
\eqref{eqn:quad_omega_A2_12}$_{(0,0,0,1)}$,
\eqref{eqn:quad_omega_A2_14}$_{(0,0,1,0)}$ and
\eqref{eqn:quad_omega_A2_34}$_{(0,1,0,0)}$,
the functions on $\bmv_4$, $\bmv_3+\bmv_4$ and $-\bmv_1$
can be calculated, respectively.
\end{description} 
Note that the subscripts of the equation numbers $(l_1,l_2,l_3,l_4)$ denote the values of the parameters $l_i$, $i=1,\dots,4$, in the equations.

Next, we consider $\bar{V}(\bmv_1)$.
From the determined $\omega$-functions on
\begin{equation}
 \bar{V}(\bm{0})\cap \bar{V}(\bmv_1)=\{\bm{0},~\bmv_1,~\bmv_1+\bmv_i,~-\bmv_i\,|\, i=2,3,4\},
\end{equation}
we can obtain the values of the $\omega$-functions on
\begin{equation}
 \bar{V}(\bmv_1)-\bar{V}(\bm{0})=\{\bmv_1-\bmv_i,~ 2\bmv_1,~2\bmv_1+\bmv_i\,|\, i=2,3,4\},
\end{equation}
by the following steps.
\begin{description}
\item[Step 1]
By using Equations
\eqref{eqn:quad_omega_A2_12}$_{(0,-1,0,0)}$,
\eqref{eqn:quad_omega_A2_13}$_{(0,0,-1,0)}$ and
\eqref{eqn:quad_omega_A2_14}$_{(0,0,0,-1)}$,
the functions on $\bmv_1-\bmv_2$, $\bmv_1-\bmv_3$ and $\bmv_1-\bmv_4$
can be calculated, respectively.
\item[Step 2]
By using Equations
\eqref{eqn:quad_omega_A2_13}$_{(1,1,0,0)}$,
\eqref{eqn:quad_omega_A2_12}$_{(1,0,1,0)}$ and
\eqref{eqn:quad_omega_A2_12}$_{(1,0,0,1)}$,
the functions on $2\bmv_1+\bmv_2$, $2\bmv_1+\bmv_3$ and $2\bmv_1+\bmv_4$
can be calculated, respectively.
\item[Step 3]
By using Equation \eqref{eqn:quad_omega_A2_23}$_{(2,0,0,0)}$, the function on $2\bmv_1$ can be calculated.
\end{description} 

In a similar manner, we can calculate all $\omega$-functions on $\bar{V}(\bml+\bmv_i)$, $i=1,\dots,4$, 
from those on $\bar{V}(\bml)$ for any $\bml\in\omega_{A_2+A_1}$.
Therefore we have completed the proof. 
\end{proof}

For later convenience, we here make the mention of $R_0$ briefly.
Its action on the parameters $b_i$ and $p$ is given by
\begin{equation}
 R_0:(b_0, b_1,b_2,b_3,p)\mapsto(b_1,pb_0,b_2,{b_3}^{-1},p),
\end{equation}
while that on the restricted $\omega$-functions,
which are on the following sublattice:
\begin{equation}
 \left.\left\{\sum_{i=1}^4l_i\bmv_i\,\right|\, l_1=l_2,~l_i\in\bbZ\right\}
 \cup
 \left.\left\{\sum_{i=1}^4l_i\bmv_i\,\right|\, l_1=l_2+1,~l_i\in\bbZ\right\}
 \subset\omega_{A_2+A_1},
\end{equation}
is given by
\begin{equation}
 R_0:\omega_{l_1,l_2,l_3,l_4}\mapsto
 \begin{cases}
 \omega_{l_1+1,l_2,l_3,l_4}&\text{if}\quad l_1=l_2,\\
 \omega_{l_1,l_2+1,l_3,l_4}&\text{if}\quad l_1=l_2+1.
 \end{cases}
\end{equation}

\subsection{Discrete Painlev\'e equations}\label{subsection:A2A1_DP}
In this section we consider the particular $f$-variables $f_1^{(j)}$, $j=1,2,3$, given by \eqref{eqn:def_f},
which can be expressed by the ratios of the $\omega$-functions $\omega_{l_1,l_2,l_3,l_4}$ as follows:
\begin{equation}\label{eqn:def_f_A2A1}
 f_1^{(1)}=\cfrac{\omega_2^{(1)}}{\omega_1^{(1)}}=\cfrac{\omega_{2,1,1,0}}{\omega_{1,1,0,0}},\quad
 f_1^{(2)}=\cfrac{\omega_2^{(2)}}{\omega_1^{(2)}}=\cfrac{\omega_1^{(3)}}{\omega_3^{(3)}}=\cfrac{\omega_{1,0,0,0}}{\omega_{1,1,0,1}},\quad
 f_1^{(3)}=\cfrac{\omega_2^{(3)}}{\omega_1^{(3)}}=\cfrac{\omega_{1,1,1,0}}{\omega_{1,0,0,0}}.
\end{equation}
These $f$-variables satisfy the relation \eqref{eqn:relation_f_A2A1},
which follows from the relations \eqref{eqns:A4_conditions_f}.
The action of $\widetilde{W}((A_2\rtimes A_1)^{(1)})$ on the three $f$-variables is given by
{\allowdisplaybreaks
\begin{subequations}
\begin{align}
 &w_0(f_1^{(3)})=\cfrac{p^3{b_0}^2 f_1^{(3)}(p{b_0}^2 b_3+{b_1}^2 f_1^{(1)})}{b_3+p{b_1}^2 f_1^{(1)}},\quad
 w_1(f_1^{(1)})=\cfrac{{b_1}^2 f_1^{(1)}({b_1}^2+p {b_0}^2b_3 f_1^{(3)})}{1+ p{b_0}^2b_3 f_1^{(3)}},\\
 &w_1(f_1^{(2)})=\cfrac{b_1}{p b_0 {b_3}^{3/2}f_1^{(3)}}
 \left(\cfrac{b_1r_0(f_1^{(3)})(1+p{b_0}^2b_3f_1^{(3)})}{pb_0{b_3}^{1/2}({b_1}^2+p{b_0}^2b_3f_1^{(3)})}-b_2\right),\\
 &w_2(f_1^{(1)})=\cfrac{p{b_3}^2f_1^{(3)}\left(p{b_1}^2+b_3 \pi(f_1^{(2)})\right)}{p {b_1}^2 b_3+r_0(f_1^{(3)})},\quad
 w_2(f_1^{(2)})=\cfrac{{b_1}^2f_1^{(2)}\left(p {b_1}^2 b_3+r_0(f_1^{(3)})\right)}{{b_0}^2 \left(p {b_0}^2 b_3+r_0(f_1^{(3)})\right)},\\
 &w_2(f_1^{(3)})=\cfrac{{b_0}^2f_1^{(3)} \left(p {b_0}^2 b_3+r_0(f_1^{(3)})\right)}{{b_1}^2 \left(p {b_1}^2 b_3+r_0(f_1^{(3)})\right)},\quad
 r_0(f_1^{(1)})=f_1^{(2)},\quad
 r_0(f_1^{(2)})=f_1^{(1)},\\
 &r_0(f_1^{(3)})=\cfrac{b_0 {b_3}^{3/2} (-b_1 b_2+p^2{b_0}^3 {b_3}^{1/2} f_1^{(3)})}{{b_1}^2 f_1^{(1)}},\quad
 r_1(f_1^{(1)})=f_1^{(3)},\quad
 r_1(f_1^{(3)})=f_1^{(1)},\\
 &r_1(f_1^{(2)})
 =-\cfrac{b_2 {b_3}^{1/2}}{p^3{b_0}^3 b_1 f_1^{(1)} f_1^{(3)}}
 -\cfrac{{b_0}^3 b_2 {b_3}^{3/2}}{p^2{b_0}^4 b_1 f_1^{(1)}}
 +\cfrac{{b_1}^2}{p^2{b_0}^4 f_1^{(3)}},\\
 &\pi(f_1^{(1)})=r_0(f_1^{(3)}),\quad
 \pi(f_1^{(2)})=\cfrac{b_1 (-b_0 b_2 {b_3}^{1/2}+p{b_1}^3 f_1^{(1)})}{p{b_0}^2 {b_3}^2 f_1^{(3)}}.
\end{align}
\end{subequations}
Note that 
\begin{equation}
 r_0(f_1^{(3)})=\cfrac{\omega_1^{(1)}}{\omega_3^{(1)}}.
\end{equation}
Moreover, the time evolutions of the $q$-Painlev\'e equations shown in \S \ref{subsection:A4_discrete_Painleve}
can be expressed by the elements of $\widetilde{W}((A_2\rtimes A_1)^{(1)})$ as follows:
\begin{equation}\label{eqn:timeevolution_A2}
 T_0=\rho_1\rho_2,\quad
 T_{13}={\rho_4}^2,\quad
 R_0=\pi r_0 w_1,\quad
 R_{13}=\rho_4,
\end{equation}
where $\rho_i$ are defined by \eqref{eqn:def_rho}.
Therefore, the birational actions of $T_0$, $T_{13}$, $R_0$ and $R_{13}$ are given by \eqref{eqns:intro_action_para} and \eqref{eqn:intro_action_var}.
As mentioned in Remark \ref{remark:relation_qps}, these actions give $q$-Painlev\'e equations \eqref{eqns:intro_qPs}.
}

\section{Proofs of Theorems \ref{maintheorem_hypercubestructure} and \ref{maintheorem_Lax}}
\label{section:proof_theorem}
In this section, we consider the following system of the partial difference equations:
\begin{subequations}\label{eqns:system_u}
\begin{align}
 &\cfrac{u(\bml+\ep_1+\ep_2)}{u(\bml)}
 =-\cfrac{\al_{l_1} u(\bml+\ep_1)-\be_{l_2} u(\bml+\ep_2)}{\al_{l_1} u(\bml+\ep_2)-\be_{l_2} u(\bml+\ep_1)},
 \label{eqn:system_u12}\\
 &\cfrac{u(\bml+\ep_2+\ep_3)}{u(\bml)}
 =-\cfrac{\be_{l_2} u(\bml+\ep_2)-\ga_{l_3} u(\bml+\ep_3)}{\be_{l_2} u(\bml+\ep_3)-\ga_{l_3} u(\bml+\ep_2)},
 \label{eqn:system_u23}\\
 &\cfrac{u(\bml+\ep_3+\ep_1)}{u(\bml)}
 =-\cfrac{\ga_{l_3} u(\bml+\ep_3)-\al_{l_1} u(\bml+\ep_1)}{\ga_{l_3} u(\bml+\ep_1)-\al_{l_1} u(\bml+\ep_3)},
 \label{eqn:system_u13}\\
 &\cfrac{u(\bml+\ep_1+\ep_4)}{u(\bml)}+\cfrac{u(\bml+\ep_4)}{u(\bml+\ep_1)}=-\al_{l_1} K_{l_4},
 \label{eqn:system_u14}\\
 &\cfrac{u(\bml+\ep_2+\ep_4)}{u(\bml)}+\cfrac{u(\bml+\ep_4)}{u(\bml+\ep_2)}=-\be_{l_2} K_{l_4},
 \label{eqn:system_u24}\\
 &\cfrac{u(\bml+\ep_3+\ep_4)}{u(\bml)}+\cfrac{u(\bml+\ep_4)}{u(\bml+\ep_3)}=-\ga_{l_3} K_{l_4},
 \label{eqn:system_u34}
\end{align}
\end{subequations}
where $\bml=\sum_{i=1}^4 l_i \ep_i\in\bbZ^4$ and $\{\ep_1,\dots,\ep_4\}$ is a standard basis for $\bbR^4$.
Here, $u(\bml)$ is a function from $\bbZ^4$ to $\bbC$
and $\Set{\al_l}{l\in\bbZ}$, $\Set{\be_l}{l\in\bbZ}$, $\Set{\ga_l}{l\in\bbZ}$ and $\Set{K_l}{l\in\bbZ}$ are complex parameters.
This system is obtained by assigning the quad-equations of ABS type to the faces of each 4-dimensional hypercube (4-cube)
(see \cite{JNS:paper3} and references therein).
The Lax equations for System \eqref{eqns:system_u} are given by the following\cite{JNS:paper3}:
{\allowdisplaybreaks
\begin{subequations}\label{eqns:Psi_u}
\begin{align}
 &\Psi_{l_1+1,l_2,l_3,l_4}
 =\de^{(1)}
 \begin{pmatrix}
 \cfrac{\mu}{\al_{l_1}}&-u(\bml+\ep_1)\\
 \cfrac{1}{u(\bml)}&-\cfrac{\mu}{\al_{l_1}}\,\cfrac{u(\bml+\ep_1)}{u(\bml)}
 \end{pmatrix}.
 \Psi_{l_1,l_2,l_3,l_4},\label{eqn:Psi_u_1}\\
 &\Psi_{l_1,l_2+1,l_3,l_4}
 =\de^{(2)}
 \begin{pmatrix}
 \cfrac{\mu}{\be_{l_2}}&-u(\bml+\ep_2)\\
 \cfrac{1}{u(\bml)}&-\cfrac{\mu}{\be_{l_2}}\,\cfrac{u(\bml+\ep_2)}{u(\bml)}
 \end{pmatrix}.
 \Psi_{l_1,l_2,l_3,l_4},\label{eqn:Psi_u_2}\\
 &\Psi_{l_1,l_2,l_3+1,l_4}
 =\de^{(3)}
 \begin{pmatrix}
 \cfrac{\mu}{\ga_{l_3}}&-u(\bml+\ep_3)\\
 \cfrac{1}{u(\bml)}&-\cfrac{\mu}{\ga_{l_3}}\,\cfrac{u(\bml+\ep_3)}{u(\bml)}
 \end{pmatrix}.
 \Psi_{l_1,l_2,l_3,l_4},\label{eqn:Psi_u_3}\\
 &\Psi_{l_1,l_2,l_3,l_4+1}
 =\de^{(4)}
 \begin{pmatrix}
 -\mu K_{l_4}&-u(\bml+\ep_4)\\
 \cfrac{1}{u(\bml)}&0
 \end{pmatrix}.
 \Psi_{l_1,l_2,l_3,l_4},\label{eqn:Psi_u_4}
\end{align}
\end{subequations}
where  $\de^{(i)}$, $i=1,\dots,4$, are arbitrary constants and $\mu$ is a spectral parameter.
The pairs of Equations \eqref{eqns:Psi_u} give the Lax pairs of \PDE s \eqref{eqns:system_u} (see Table \ref{table:Laxpairs}).
}

\begin{table}[t]
\begin{tabular}{|c|c|}
\hline
\PDE &Lax pair\\ \hline
\eqref{eqn:system_u12}&\eqref{eqn:Psi_u_1}, \eqref{eqn:Psi_u_2}\\
\eqref{eqn:system_u23}&\eqref{eqn:Psi_u_2}, \eqref{eqn:Psi_u_3}\\
\eqref{eqn:system_u13}&\eqref{eqn:Psi_u_1}, \eqref{eqn:Psi_u_3}\\
\eqref{eqn:system_u14}&\eqref{eqn:Psi_u_1}, \eqref{eqn:Psi_u_4}\\
\eqref{eqn:system_u24}&\eqref{eqn:Psi_u_2}, \eqref{eqn:Psi_u_4}\\
\eqref{eqn:system_u34}&\eqref{eqn:Psi_u_3}, \eqref{eqn:Psi_u_4}\\ \hline
\end{tabular}\quad
\begin{tabular}{|c|c|}
\hline
\PDE &Lax pair\\ \hline
\eqref{eqn:system_U12}&\eqref{eqn:phi_U_1}, \eqref{eqn:phi_U_2}\\
\eqref{eqn:system_U23}&\eqref{eqn:phi_U_2}, \eqref{eqn:phi_U_3}\\
\eqref{eqn:system_U13}&\eqref{eqn:phi_U_1}, \eqref{eqn:phi_U_3}\\
\eqref{eqn:system_U14}&\eqref{eqn:phi_U_1}, \eqref{eqn:phi_U_4}\\
\eqref{eqn:system_U24}&\eqref{eqn:phi_U_2}, \eqref{eqn:phi_U_4}\\
\eqref{eqn:system_U34}&\eqref{eqn:phi_U_3}, \eqref{eqn:phi_U_4}\\ \hline
\end{tabular}\quad
\begin{tabular}{|c|c|}
\hline
\PDE &Lax pair\\ \hline
\eqref{eqn:quad_omega_A2_12}&\eqref{eqn:phi_omega_1}, \eqref{eqn:phi_omega_2}\\
\eqref{eqn:quad_omega_A2_23}&\eqref{eqn:phi_omega_2}, \eqref{eqn:phi_omega_3}\\
\eqref{eqn:quad_omega_A2_13}&\eqref{eqn:phi_omega_1}, \eqref{eqn:phi_omega_3}\\
\eqref{eqn:quad_omega_A2_14}&\eqref{eqn:phi_omega_1}, \eqref{eqn:phi_omega_4}\\
\eqref{eqn:quad_omega_A2_24}&\eqref{eqn:phi_omega_2}, \eqref{eqn:phi_omega_4}\\
\eqref{eqn:quad_omega_A2_34}&\eqref{eqn:phi_omega_3}, \eqref{eqn:phi_omega_4}\\ \hline
\end{tabular}
\caption{The correspondences between \PDE s and Lax pairs.}
\label{table:Laxpairs}
\end{table}

\subsection{Proof of Theorem \ref{maintheorem_hypercubestructure}}
\label{subsection:geometric_reduction}
In this section, we show that the lattice $\omega_{A_2+A_1}$ can be obtained from the integer lattice $\bbZ^4$ 
with the \PDE s \eqref{eqns:system_u} by a geometric reduction.

Let
\begin{equation}\label{eqn:rel_u_U}
 u(\bml)=\cfrac{{\la_{l_1+l_2+l_3+l_4}}^{(l_1+l_2+l_3-2l_4)/2}}{U(\bml)},
\end{equation}
where $\bml=\sum_{i=1}^4 l_i \ep_i\in\bbZ^4$.
Here, $\la_0$ is a non-zero complex parameter and
\begin{equation}\label{eqn:constraints}
 \la_l=
 \begin{cases}
 \la_0&\text{if}\quad l=2n,\\
 \cfrac{1}{\la_0}&\text{if}\quad l=2n+1.
 \end{cases}
\end{equation}
Then, System \eqref{eqns:system_u} can be rewritten as the following:
\begin{subequations}\label{eqns:system_U}
\begin{align}
 &\cfrac{U(\bml+\ep_1+\ep_2)}{U(\bml)}
 =-\la_{l_1+l_2+l_3+l_4}\,\cfrac{\al_{l_1} U(\bml+\ep_1)-\be_{l_2} U(\bml+\ep_2)}{\al_{l_1} U(\bml+\ep_2)-\be_{l_2} U(\bml+\ep_1)},
 \label{eqn:system_U12}\\
 &\cfrac{U(\bml+\ep_2+\ep_3)}{U(\bml)}
 =-\la_{l_1+l_2+l_3+l_4}\,\cfrac{\be_{l_2} U(\bml+\ep_2)-\ga_{l_3} U(\bml+\ep_3)}{\be_{l_2} U(\bml+\ep_3)-\ga_{l_3} U(\bml+\ep_2)},
 \label{eqn:system_U23}\\
 &\cfrac{U(\bml+\ep_3+\ep_1)}{U(\bml)}
 =-\la_{l_1+l_2+l_3+l_4}\,\cfrac{\ga_{l_3} U(\bml+\ep_3)-\al_{l_1} U(\bml+\ep_1)}{\ga_{l_3} U(\bml+\ep_1)-\al_{l_1} U(\bml+\ep_3)},
 \label{eqn:system_U13}\\
 &\cfrac{U(\bml)}{U(\bml+\ep_1+\ep_4)}+{\la_{l_1+l_2+l_3+l_4}}^2\,\cfrac{U(\bml+\ep_1)}{U(\bml+\ep_4)}
 =-\al_{l_1} K_{l_4}{\la_{l_1+l_2+l_3+l_4}}^{1/2},
 \label{eqn:system_U14}\\
 &\cfrac{U(\bml)}{U(\bml+\ep_2+\ep_4)}+{\la_{l_1+l_2+l_3+l_4}}^2\,\cfrac{U(\bml+\ep_2)}{U(\bml+\ep_4)}
 =-\be_{l_2} K_{l_4}{\la_{l_1+l_2+l_3+l_4}}^{1/2},
 \label{eqn:system_U24}\\
 &\cfrac{U(\bml)}{U(\bml+\ep_3+\ep_4)}+{\la_{l_1+l_2+l_3+l_4}}^2\,\cfrac{U(\bml+\ep_3)}{U(\bml+\ep_4)}
 =-\ga_{l_3} K_{l_4}{\la_{l_1+l_2+l_3+l_4}}^{1/2}.
 \label{eqn:system_U34}
\end{align}
\end{subequations}
Moreover, by imposing the following $(1,1,1,1)$-periodic condition:
\begin{equation}\label{eqn:1111periodic_U}
 U(\bml)=U(\bml+\ep_1+\ep_2+\ep_3+\ep_4),
\end{equation}
for $\bml\in\bbZ^4$, with the following condition of the parameters:
\begin{equation}\label{eqn:condition_albegaK}
 \al_l=p^{-l}\al_0,\quad
 \be_l=p^{-l}\be_0,\quad
 \ga_l=p^{-l}\ga_0,\quad
 K_l=p^l K_0,
\end{equation}
where $p$ is a non-zero complex parameter,
System \eqref{eqns:system_U} becomes the system of $q$-difference equations (in this case the shift parameter is given by $p$).

We define the transformations $\hrho_i$, $i=1,\dots,4$, by the following actions:
\begin{subequations}\label{eqns:hrho_uabgK}
\begin{align}
 &\hrho_1:(U(\bml),\al_0,\be_0,\ga_0,K_0,\la_0,p)\mapsto(U(\bml+\ep_1),p^{-1}\al_0,\be_0,\ga_0,K_0,{\la_0}^{-1},p),\\
 &\hrho_2:(U(\bml),\al_0,\be_0,\ga_0,K_0,\la_0,p)\mapsto(U(\bml+\ep_2),\al_0,p^{-1}\be_0,\ga_0,K_0,{\la_0}^{-1},p),\\
 &\hrho_3:(U(\bml),\al_0,\be_0,\ga_0,K_0,\la_0,p)\mapsto(U(\bml+\ep_3),\al_0,\be_0,p^{-1}\ga_0,K_0,{\la_0}^{-1},p),\\
 &\hrho_4:(U(\bml),\al_0,\be_0,\ga_0,K_0,\la_0,p)\mapsto(U(\bml+\ep_4),\al_0,\be_0,\ga_0,pK_0,{\la_0}^{-1},p),
\end{align}
\end{subequations}
which imply that $\hrho_i$ is a shift operator of $\ep_i$-direction on $\bbZ^4$.
In addition, we also introduce a transformation $\hR_0$ as follows.
Its action on the parameters is defined by
\begin{equation}
 \hR_0:(\al_0,\be_0,\ga_0,K_0,\la_0,p)\mapsto(\be_0,p^{-1}\al_0,\ga_0,K_0,{\la_0}^{-1},p),
\end{equation}
while that on the function $U(\bml)$ is defined by
\begin{equation}
 \hR_0(U(\bml))
 =\begin{cases}
 U(\bml+\ep_1)&\text{if}\quad \bml\in\calr^{(1)},\\
 U(\bml+\ep_2)&\text{if}\quad \bml\in\calr^{(2)},
 \end{cases}
\end{equation}
where
\begin{equation}
 \calr^{(1)}=\set{\sum_{i=1}^4 l_i \ep_i}{l_i\in\bbZ,~ l_1=l_2},\quad
 \calr^{(2)}=\set{\sum_{i=1}^4 l_i \ep_i}{l_i\in\bbZ,~ l_1=l_2+1}.
\end{equation}
These imply that $\hR_0$ is a zigzag-shift operator on the sublattice 
\begin{equation}
 \calR=\calr^{(1)}\cup\calr^{(2)}\subset\bbZ^4,
\end{equation}
that is,
\begin{equation}\label{eqns:hR0_onR}
 \hR_0(\bml)
 =\begin{cases}
 \hrho_1(\bml)&\text{if}\quad \bml\in\calr^{(1)},\\
 \hrho_2(\bml)&\text{if}\quad \bml\in\calr^{(2)},
 \end{cases}\qquad
 \hR_0^{~-1}(\bml)
 =\begin{cases}
 {\hrho_2}^{~-1}(\bml)&\text{if}\quad \bml\in\calr^{(1)},\\
 {\hrho_1}^{~-1}(\bml)&\text{if}\quad \bml\in\calr^{(2)}.
 \end{cases}
\end{equation}
In general, for a function $F=F(U(\bml),\al_0,\be_0,\ga_0,K_0,\la_0,p)$, 
we let a transformation $w\in\langle\hrho_1,\dots,\hrho_4,\hR_0\rangle$ act as 
\begin{equation}
 w(F)=F\Big(w(U(\bml)),w(\al_0),w(\be_0),w(\ga_0),w(K_0),w(\la_0),w(p)\Big),
\end{equation}
that is, the transformation $w$ act on the arguments from the left.

Finally, letting
\begin{equation}\label{eqn:rel_U_omega}
 \omega_{l_1,l_2,l_3,l_4}=H_{l_1,l_2,l_3,l_4}U(\bml),\quad
 b_0=\cfrac{\ga_0}{\al_0},\quad
 b_1=\cfrac{\ga_0}{\be_0},\quad
 b_2=\ga_0K_0,
\end{equation}
where $\bml=\sum_{i=1}^4 l_i \ep_i$,
we obtain the fundamental relations on the lattice $\omega_{A_2+A_1}$ \eqref{eqns:quad_omega_A2} from System \eqref{eqns:system_U}.
Here, the gauge factor $H_{l_1,l_2,l_3,l_4}$ is defined by
\begin{align}
 H_{l_1,l_2,l_3,l_4}
 =&\,\ii^{(\log{\al_{l_1}\be_{l_2}\ga_{l_3}{K_{l_4}}^3})/\log{p}}
 \ee^{\left(
 (\log p^3 {\al_{l_1}}^2 {\be_{l_2}}^2{\ga_{l_3}}^{-4})^2
 +12 (\log \al_{l_1}{\be_{l_2}}^{-1})^2
 \right)/16 \log{p}}\notag\\
 &\times \left(\cfrac{\ga_{l_3}}{{\al_{l_1}}^{1/2}{\be_{l_2}}^{1/2}}\right)^{3/2},
\end{align}
where $\ii=\sqrt{-1}$.
Furthermore, the actions of transformations $\hrho_i$, $i=1,\dots,4$, and $\hR_0$
correspond to those of $\rho_i$, $i=1,\dots,4$, and $R_0$ which are elements of $\widetilde{W}((A_2\rtimes A_1)^{(1)})$, respectively.
We note here that the reduction from System \eqref{eqns:system_u} to System \eqref{eqns:quad_omega_A2}
causes the reduction of the underlying lattice (see Figure \ref{fig:4cube_barV}): 
\begin{equation*}
 \bbZ^4\to \bbZ^4/\bbZ(\ep_1+\ep_2+\ep_3+\ep_4)\cong \omega_{A_2+A_1}.
\end{equation*}
The reduction from $\bbZ^4$ with System \eqref{eqns:system_u} to the lattice $\omega_{A_2+A_1}$ is referred to as geometric reduction\cite{JNS2014:MR3291391}
and then the lattice $\omega_{A_2+A_1}$ is said to have the reduced hypercube structure.
Therefore, we have completed the proof of Theorem \ref{maintheorem_hypercubestructure}.

\begin{figure}[t]
\begin{center}
\includegraphics[width=0.8\textwidth]{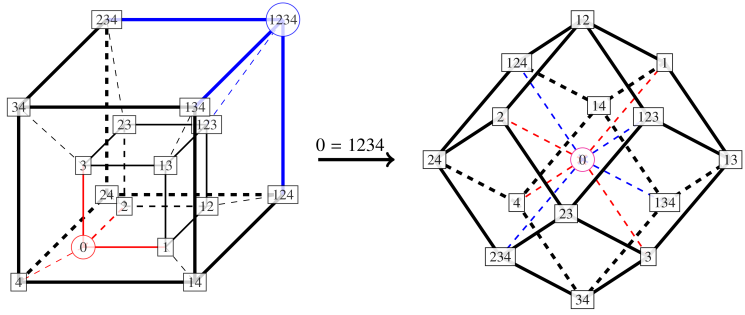}
\end{center}
\caption{A $(1,1,1,1)$-reduction from a 4-cube to a rhombic dodecahedron with a center.}
\label{fig:4cube_barV}
\end{figure}

\subsection{Proof of Theorem \ref{maintheorem_Lax}}
\label{subsection:Laxpairs}
{\allowdisplaybreaks
In this section, we construct the Lax pairs of the $q$-Painlev\'e equations \eqref{eqns:intro_qPs}
from the Lax equations \eqref{eqns:Psi_u} by using the reduction given in \S \ref{subsection:geometric_reduction}.

By the gauge transformations \eqref{eqn:rel_u_U} and
\begin{equation}
 \Psi_{l_1,l_2,l_3,l_4}
 =\ii^{-l_1-l_2-l_3+3l_4}
 \begin{pmatrix}
 U(\bml)^{-1}&0\\0&\ii\,{\la_{l_1+l_2+l_3+l_4}}^{(l_1+l_2+l_3-2l_4)/2}
 \end{pmatrix}.
 \phi_{l_1,l_2,l_3,l_4},
\end{equation}
the Lax equations \eqref{eqns:Psi_u} can be rewritten as the following:
{\allowdisplaybreaks
\begin{subequations}\label{eqns:phi_U}
\begin{align}
  &\phi_{l_1+1,l_2,l_3,l_4}
 =\de^{(1)}
 \begin{pmatrix}
 \cfrac{\ii\,\mu}{\al_{l_1}}\,\cfrac{U(\bml+\ep_1)}{U(\bml)}&\cfrac{1}{{\la_{l_1+l_2+l_3+l_4}}^{1/2}}\\
 {\la_{l_1+l_2+l_3+l_4}}^{1/2}&-\cfrac{\ii\,\mu}{\al_{l_1}}\,\cfrac{U(\bml)}{U(\bml+\ep_1)}
 \end{pmatrix}.
 \phi_{l_1,l_2,l_3,l_4},\label{eqn:phi_U_1}\\
 &\phi_{l_1,l_2+1,l_3,l_4}
 =\de^{(2)}
 \begin{pmatrix}
 \cfrac{\ii\,\mu}{\be_{l_2}}\,\cfrac{U(\bml+\ep_2)}{U(\bml)}&\cfrac{1}{{\la_{l_1+l_2+l_3+l_4}}^{1/2}}\\
 {\la_{l_1+l_2+l_3+l_4}}^{1/2}&-\cfrac{\ii\,\mu}{\be_{l_2}}\,\cfrac{U(\bml)}{U(\bml+\ep_2)}
 \end{pmatrix}.
 \phi_{l_1,l_2,l_3,l_4},\label{eqn:phi_U_2}\\
 &\phi_{l_1,l_2,l_3+1,l_4}
 =\de^{(3)}
 \begin{pmatrix}
 \cfrac{\ii\,\mu}{\ga_{l_3}}\,\cfrac{U(\bml+\ep_3)}{U(\bml)}&\cfrac{1}{{\la_{l_1+l_2+l_3+l_4}}^{1/2}}\\
 {\la_{l_1+l_2+l_3+l_4}}^{1/2}&-\cfrac{\ii\,\mu}{\ga_{l_3}}\,\cfrac{U(\bml)}{U(\bml+\ep_3)}
 \end{pmatrix}.
 \phi_{l_1,l_2,l_3,l_4},\label{eqn:phi_U_3}\\
 &\phi_{l_1,l_2,l_3,l_4+1}
 =\de^{(4)}
 \begin{pmatrix}
 -\ii\,\mu K_{l_4}\,\cfrac{U(\bml+\ep_4)}{U(\bml)}&\la_{l_1+l_2+l_3+l_4}\\
 \cfrac{1}{\la_{l_1+l_2+l_3+l_4}}&0
 \end{pmatrix}.
 \phi_{l_1,l_2,l_3,l_4}.\label{eqn:phi_U_4}
\end{align}
\end{subequations}
These give the Lax pairs of \PDE s \eqref{eqns:system_U} (see Table \ref{table:Laxpairs}).
Moreover, by the reduction \eqref{eqn:1111periodic_U} with \eqref{eqn:condition_albegaK}
and the replacement \eqref{eqn:rel_U_omega},
the Lax equations \eqref{eqns:phi_U} can be rewritten as the following:
\begin{subequations}\label{eqns:phi_omega}
\begin{align}
  &\phi_{l_1+1,l_2,l_3,l_4}
 =\de^{(1)}
 \begin{pmatrix}
 -p^{-l_1+l_2+l_3-1}\cfrac{b_1}{b_0}\,\cfrac{\omega_{l_1+1,l_2,l_3,l_4}}{\omega_{l_1,l_2,l_3,l_4}}\,x
 &\cfrac{1}{{\la_{l_1+l_2+l_3+l_4}}^{1/2}}\\
 \rule{0pt}{2em}
 {\la_{l_1+l_2+l_3+l_4}}^{1/2}
 &-p^{3l_1-l_2-l_3+1}\cfrac{{b_0}^3}{b_1}\,\cfrac{\omega_{l_1,l_2,l_3,l_4}}{\omega_{l_1+1,l_2,l_3,l_4}}\,x
 \end{pmatrix}.
 \phi_{l_1,l_2,l_3,l_4},\label{eqn:phi_omega_1}\\
 &\phi_{l_1,l_2+1,l_3,l_4}
 =\de^{(2)}
 \begin{pmatrix}
 -p^{l_1-l_2+l_3-1}\cfrac{b_0}{b_1}\,\cfrac{\omega_{l_1,l_2+1,l_3,l_4}}{\omega_{l_1,l_2,l_3,l_4}}\,x
 &\cfrac{1}{{\la_{l_1+l_2+l_3+l_4}}^{1/2}}\\
 \rule{0pt}{2em}
 {\la_{l_1+l_2+l_3+l_4}}^{1/2}
 &-p^{-l_1+3l_2-l_3+1}\cfrac{{b_1}^3}{b_0}\,\cfrac{\omega_{l_1,l_2,l_3,l_4}}{\omega_{l_1,l_2+1,l_3,l_4}}\,x
 \end{pmatrix}.
 \phi_{l_1,l_2,l_3,l_4},\label{eqn:phi_omega_2}\\
 &\phi_{l_1,l_2,l_3+1,l_4}
 =\de^{(3)}
 \begin{pmatrix}
 -p^{l_1+l_2-l_3-1}b_0b_1\cfrac{\omega_{l_1,l_2,l_3+1,l_4}}{\omega_{l_1,l_2,l_3,l_4}}\,x
 &\cfrac{1}{{\la_{l_1+l_2+l_3+l_4}}^{1/2}}\\
 \rule{0pt}{2em}
 {\la_{l_1+l_2+l_3+l_4}}^{1/2}
 &-p^{-l_1-l_2+3l_3+1}\cfrac{1}{b_0b_1}\,\cfrac{\omega_{l_1,l_2,l_3,l_4}}{\omega_{l_1,l_2,l_3+1,l_4}}\,x
 \end{pmatrix}.
 \phi_{l_1,l_2,l_3,l_4},\label{eqn:phi_omega_3}\\
 &\phi_{l_1,l_2,l_3,l_4+1}
 =\de^{(4)}
 \begin{pmatrix}
 p^{l_4}b_2\cfrac{\omega_{l_1,l_2,l_3,l_4+1}}{\omega_{l_1,l_2,l_3,l_4}}\,x&\la_{l_1+l_2+l_3+l_4}\\
 \rule{0pt}{2em}
 \cfrac{1}{\la_{l_1+l_2+l_3+l_4}}&0
 \end{pmatrix}.
 \phi_{l_1,l_2,l_3,l_4},\label{eqn:phi_omega_4}
\end{align}
\end{subequations}
where 
\begin{equation}
 x=\cfrac{\mu}{\ga_0}.
\end{equation}
These give the Lax pairs of \PDE s \eqref{eqns:quad_omega_A2} (see Table \ref{table:Laxpairs}).
}

Now we are in a position to construct the Lax pairs of the $q$-Painlev\'e equations. 
We first lift the action of $\langle \hrho_1,\dots,\hrho_4,\hR_0\rangle$ up to the Lax equations \eqref{eqns:phi_omega} by
\begin{subequations}
\begin{align}
 \hrho_1&:(\de^{(1)},\de^{(2)},\de^{(3)},\de^{(4)},\mu,\phi_{l_1,l_2,l_3,l_4})
 \mapsto(\de^{(1)},\de^{(2)},\de^{(3)},\de^{(4)},\mu,\phi_{l_1+1,l_2,l_3,l_4}),\\
 \hrho_2&:(\de^{(1)},\de^{(2)},\de^{(3)},\de^{(4)},\mu,\phi_{l_1,l_2,l_3,l_4})
 \mapsto(\de^{(1)},\de^{(2)},\de^{(3)},\de^{(4)},\mu,\phi_{l_1,l_2+1,l_3,l_4}),\\
 \hrho_3&:(\de^{(1)},\de^{(2)},\de^{(3)},\de^{(4)},\mu,\phi_{l_1,l_2,l_3,l_4})
 \mapsto(\de^{(1)},\de^{(2)},\de^{(3)},\de^{(4)},\mu,\phi_{l_1,l_2,l_3+1,l_4}),\\
 \hrho_4&:(\de^{(1)},\de^{(2)},\de^{(3)},\de^{(4)},\mu,\phi_{l_1,l_2,l_3,l_4})
 \mapsto(\de^{(1)},\de^{(2)},\de^{(3)},\de^{(4)},\mu,\phi_{l_1,l_2,l_3,l_4+1}),\\
 \hR_0&:(\de^{(1)},\de^{(2)},\de^{(3)},\de^{(4)},\mu,\phi_{l_1,l_2,l_3,l_4})
 \mapsto(\de^{(2)},\de^{(1)},\de^{(3)},\de^{(4)},\mu,\phi_{l_1,l_2,l_3,l_4}^R),
\end{align}
\end{subequations}
where
\begin{equation}
 \phi_{l_1,l_2,l_3,l_4}^R=
 \begin{cases}
  \phi_{l_1+1,l_2,l_3,l_4}&\text{if}\quad l_1=l_2,\\
  \phi_{l_1,l_2+1,l_3,l_4}&\text{if}\quad l_1=l_2+1.
 \end{cases}
\end{equation}
By letting
\begin{equation}
 \phi_{0,0,0,-1}=\begin{pmatrix}\omega_{0,0,0,-1}&0\\0&\omega_{0,0,0,0}\end{pmatrix}.\Phi,
\end{equation}
the action of $\langle\hrho_1\dots,\hrho_4,\hR_0\rangle$ on $\Phi$ is given by
{\allowdisplaybreaks
\begin{subequations}
\begin{align}
 &\hrho_1(\Phi)
 =\de_1\begin{pmatrix}
 -\cfrac{b_1}{b_0 p}\,x
 &\cfrac{{b_1}^2}{b_0 b_3 (p^2{b_0}^3 {b_3}^{1/2} f_1^{(3)}-b_1 b_2)}\\
 \rule{0pt}{2em}
 \cfrac{f_1^{(3)}}{{b_3}^{1/2}}
 &-\cfrac{p{b_0}^2 b_1 f_1^{(3)}}{{b_3}^{3/2} (p^2{b_0}^3 {b_3}^{1/2} f_1^{(3)}-b_1 b_2)}\,x
 \end{pmatrix}.\Phi,\\
 &\hrho_2(\Phi)
 =\de_2\begin{pmatrix}
 -\cfrac{b_0}{p b_1}\,x
 &\cfrac{{b_0}^2}{b_1 b_3 \left(p^2{b_1}^3 {b_3}^{1/2} {\hrho_1}^{~-1}(f_1^{(1)})-b_0 b_2\right)}\\
 \rule{0pt}{2em}
 \cfrac{{\hrho_1}^{~-1}(f_1^{(1)})}{{b_3}^{1/2}} 
 &-\cfrac{pb_0 {b_1}^2 {\hrho_1}^{~-1}(f_1^{(1)})}{{b_3}^{3/2} \left(p^2{b_1}^3 {b_3}^{1/2} {\hrho_1}^{~-1}(f_1^{(1)})-b_0 b_2\right)}\,x
 \end{pmatrix}.\Phi,\\
 &\hrho_3(\Phi)
 =\de_3\begin{pmatrix}
 -\cfrac{b_0 b_1}{p}\,x
 &\cfrac{{b_3}^{1/2}}{\hrho_3\left(f_1^{(2)} f_1^{(3)}\right)}\\
  \rule{0pt}{2.2em}
 \cfrac{b_0 b_1 \left(b_0 b_1 \hrho_3\left(f_1^{(2)} f_1^{(3)}\right)+b_2 {b_3}^{3/2}\right)}{p^2{b_3}^{5/2}}
 &-\cfrac{b_0 b_1 \hrho_3\left(f_1^{(2)} f_1^{(3)}\right)+b_2 {b_3}^{3/2}}{p {b_3}^2 \hrho_3\left(f_1^{(2)} f_1^{(3)}\right)}\,x
 \end{pmatrix}.\Phi,\\
 &\hrho_4(\Phi)
 =\de_4\begin{pmatrix}
 \cfrac{b_2}{p}\,x
 &\cfrac{1}{b_3}\\
  \rule{0pt}{2.2em}
 \cfrac{p b_0 f_1^{(3)} \left(p{b_0}^3 \hrho_4(f_1^{(3)})-b_1 b_2 {b_3}^{1/2}\right)}{{b_1}^2 b_3}
 &0
 \end{pmatrix}.\Phi,\\
 &\hR_0(\Phi)=\hrho_1(\Phi),
\end{align}
\end{subequations}
where $f_1^{(j)}$, $j=1,2,3$, are given by \eqref{eqn:def_f_A2A1} and satisfy the relation \eqref{eqn:relation_f_A2A1}.
}
Next, let us define
\begin{equation}
 \spT=\hrho_1\hrho_2\hrho_3\hrho_4,\quad
 \hT_0=\hrho_1\hrho_2,\quad
 \hT_{13}={\hrho_4}^2,\quad
 \hR_{13}=\hrho_4.
\end{equation}

\begin{remark}
Under the actions on the $f$-variables $f_1^{(j)}$, $j=1,2,3$, and the parameters $b_i$, $i=0,\dots,3$, and $p$,
the transformations $\hT_0$, $\hT_{13}$, $\hR_0$ and $\hR_{13}$ are respectively equivalent to 
the transformations $T_0$, $T_{13}$, $R_0$ and $R_{13}$, which are elements of $\widetilde{W}((A_2\rtimes A_1)^{(1)})$,
and the spectral operator $\spT$ can be regarded as an identity mapping.
\end{remark}

The actions of $\spT$, $\hT_0$, $\hT_{13}$, $\hR_0$ and $\hR_{13}$
on the spectral parameter $x$ are given by
\begin{equation}
 \spT(x)=px,\quad
 \hT_0(x)=\hT_{13}(x)=\hR_0(x)=\hR_{13}(x)=x,
\end{equation}
while those on the wave function $\Phi$ are given by the following:
\begin{subequations}
\begin{align}
 &\spT(\Phi)=\de_1\de_2\de_3\de_4\, A.\Phi,\quad
 \hT_0(\Phi)=\de_1\de_2\, B_{T0}.\Phi,\quad
 \hT_{13}(\Phi)={\de_4}^2\, B_{T13}.\Phi,\\
 &\hR_0(\Phi)=\de_1\, B_{R0}.\Phi,\quad
 \hR_{13}(\Phi)=\de_4\, B_{R13}.\Phi,
\end{align}
\end{subequations}
where
\begin{align}\label{eqn:intro_linear_martrix}
 A
 =&\begin{pmatrix}
 \cfrac{b_2}{p}\,x 
 &b_3\\
 \rule{0pt}{2em}
 -\cfrac{{b_3}^{1/2} (p^2 {b_0}^3 {b_3}^{1/2}f_1^{(3)}-b_1 b_2) (b_0 b_2 {b_3}^{1/2}+p {b_0}^3 b_2{b_3}^{3/2}f_1^{(3)}-p {b_1}^3f_1^{(1)})}{p^3 {b_0}^3 {b_1}^3f_1^{(1)}f_1^{(3)}} 
 &0 
 \end{pmatrix}\notag\\
 &.\begin{pmatrix}
 -p b_0 b_1 x 
 &-\cfrac{p^3 {b_0}^4 b_1 {b_3}^{1/2}f_1^{(3)}}{b_0 b_2 {b_3}^{1/2}+p {b_0}^3 b_2{b_3}^{3/2}f_1^{(3)}-p {b_1}^3f_1^{(1)}}\\
 \rule{0pt}{2em}
 \cfrac{b_1 (p {b_1}^3f_1^{(1)}-b_0 b_2 {b_3}^{1/2})}{p {b_0}^2 {b_3}^{5/2}f_1^{(3)}} 
 &\cfrac{p b_0 b_1(p {b_1}^3f_1^{(1)}-b_0 b_2 {b_3}^{1/2})}{b_0 b_2 {b_3}^{5/2}+p {b_0}^3 b_2{b_3}^{7/2}f_1^{(3)}-p {b_1}^3{b_3}^2f_1^{(1)}} \,x
 \end{pmatrix}\notag\\
 &.\begin{pmatrix}
 -\cfrac{b_0}{b_1}\,x &
 \cfrac{p {b_0}^2 {b_3}^{3/2}}{b_1(p {b_1}^3f_1^{(1)}-b_0 b_2 {b_3}^{1/2})}\\
 \rule{0pt}{2em}
 {b_3}^{1/2}f_1^{(1)} 
 &\cfrac{p b_0 {b_1}^2 {b_3}^2f_1^{(1)}}{b_0 b_2 {b_3}^{1/2}-p {b_1}^3f_1^{(1)}}\,x
 \end{pmatrix}\notag\\
 &.\begin{pmatrix}
 -\cfrac{b_1}{p b_0}\,x 
 &\cfrac{{b_1}^2}{b_0 b_3 (p^2 {b_0}^3 {b_3}^{1/2}f_1^{(3)}-b_1 b_2)}\\
 \rule{0pt}{2em}
 \cfrac{f_1^{(3)}}{{b_3}^{1/2}} 
 &-\cfrac{p {b_0}^2 b_1f_1^{(3)}}{{b_3}^{3/2} (p^2 {b_0}^3 {b_3}^{1/2}f_1^{(3)}-b_1 b_2)}\,x
 \end{pmatrix},
\end{align}
{\allowdisplaybreaks
\begin{subequations}\label{eqns:intro_matrices_B}
\begin{align}
 B_{T0}
 =&\begin{pmatrix}
 -\cfrac{b_0}{b_1}\,x 
 &\cfrac{p {b_0}^2 {b_3}^{3/2}}{b_1(p {b_1}^3f_1^{(1)}-b_0 b_2 {b_3}^{1/2})}\\
 \rule{0pt}{2em}
 {b_3}^{1/2}f_1^{(1)} 
 &\cfrac{p b_0 {b_1}^2 {b_3}^2f_1^{(1)}}{b_0 b_2 {b_3}^{1/2}-p {b_1}^3f_1^{(1)}}\,x
 \end{pmatrix}\notag\\
 &.\begin{pmatrix}
 -\cfrac{b_1}{p b_0}\,x 
 &\cfrac{{b_1}^2}{b_0 b_3 (p^2 {b_0}^3 {b_3}^{1/2}f_1^{(3)}-b_1 b_2)}\\
 \rule{0pt}{2em}
 \cfrac{f_1^{(3)}}{{b_3}^{1/2}} 
 &-\cfrac{p {b_0}^2 b_1f_1^{(3)}}{{b_3}^{3/2} (p^2 {b_0}^3 {b_3}^{1/2}f_1^{(3)}-b_1 b_2)}\,x
 \end{pmatrix},
 \label{eqn:intro_T0_Phi}\\
 B_{T13}
 =&\begin{pmatrix}
 b_2 x 
 &b_3\\
 \rule{0pt}{2em}
 \cfrac{p b_1 {b_3}^{1/2}f_1^{(1)} ({b_1}^3 {b_3}^{1/2}f_1^{(2)}-b_0 b_2) \left(p {b_0}^3 {b_3}^{1/2} T_{13}(f_1^{(3)})-p b_1 b_2\right)}{{b_0}^2 (p^2 {b_0}^3 {b_3}^{1/2}f_1^{(3)}-b_1 b_2)} 
 &0
 \end{pmatrix}\notag\\
 &.\begin{pmatrix}
 \cfrac{b_2}{p}\,x 
 &\cfrac{1}{b_3}\\
 \rule{0pt}{2em}
 \cfrac{p b_0f_1^{(3)}}{b_1 b_3}
  \left(\cfrac{p {b_1}^2f_1^{(1)} (b_0 b_2-{b_1}^3 {b_3}^{1/2}f_1^{(2)})}{b_1 b_2-p^2 {b_0}^3 {b_3}^{1/2}f_1^{(3)}}-b_2 {b_3}^{1/2}\right)
 &0 
 \end{pmatrix},
 \label{eqn:intro_T13_Phi}\\
 B_{R0}
 =&\begin{pmatrix}
 -\cfrac{b_1}{pb_0}\,x
 &\cfrac{{b_1}^2}{b_0 b_3 (p^2{b_0}^3 {b_3}^{1/2} f_1^{(3)}-b_1 b_2)}\\
 \rule{0pt}{2em}
 \cfrac{f_1^{(3)}}{{b_3}^{1/2}}
 &-\cfrac{p{b_0}^2 b_1 f_1^{(3)}}{{b_3}^{3/2} (p^2{b_0}^3 {b_3}^{1/2} f_1^{(3)}-b_1 b_2)}\,x
 \end{pmatrix},
 \label{eqn:intro_R0_Phi}\\
 B_{R13}
 =&\begin{pmatrix}
 \cfrac{b_2}{p}\,x
 &\cfrac{1}{b_3}\\
  \rule{0pt}{2.2em}
 \cfrac{p b_0 f_1^{(3)} \left(p{b_0}^3 R_{13}(f_1^{(3)})-b_1 b_2 {b_3}^{1/2}\right)}{{b_1}^2 b_3}
 &0
 \end{pmatrix}.
 \label{eqn:intro_R13_Phi}
\end{align}
\end{subequations}
Therefore, we finally obtain Theorem \ref{maintheorem_Lax} by the following correspondence:
}
\begin{subequations}
\begin{align}
 &\hT_0=T_0,\quad
 \hT_{13}=T_{13},\quad
 \hR_0=R_0,\quad
 \hR_{13}=R_{13},\\
 &\de_1=\de_2=\de_3=\de_4=1.
\end{align}
\end{subequations}
}
\section{Concluding remarks}
\label{ConcludingRemarks}
In this paper, we constructed the $\omega$-lattice of type $A_4^{(1)}$.
The $\omega$-lattice provides the informations about 
how a system of partial difference equations can be reduced to $A_4^{(1)}$-surface $q$-Painlev\'e equations. 
We will show how to use this information in forthcoming paper (N. Joshi, N. Nakazono and Y. Shi, in preparation).
We also constructed another important lattice $\omega_{A_2+A_1}$
and showed that it has the reduced hypercube structure.
Moreover, by using this structure, we constructed the Lax pairs of the $q$-Painlev\'e equations \eqref{eqns:intro_qPs}.
The distinguishing feature of the Lax pairs given in this paper 
as compared with those in the other works, e.g. \cite{MurataM2009:MR2485835,JS1996:MR1403067}, is that
their coefficient matrices can be factorized into the products of matrices which are of degree one in the spectral parameter $x$.
This property enables us to construct the Lax pairs of symmetric discrete Painlev\'e equations, e.g. 
$q$-P$_{\rm III}({D_7^{(1)}})$ \eqref{eqn:intro_qP3_1} and $q$-P$_{\rm IV}$ \eqref{eqn:intro_qP4_1},
which can be obtained by projective reductions \cite{KN2015:MR3340349,KNT2011:MR2773334}.
\subsection*{Acknowledgment}
This research was supported by an Australian Laureate Fellowship \# FL120100094 and grant \# DP130100967 from the Australian Research Council.
\appendix
\section{Proof of Lemma \ref{lemma:tau_A4}}
\label{section:proof_tauA4}
In this section, we define the transformation group $\widetilde{W}(A_4^{(1)})$ with its linear action
and show it forms the extended affine Weyl group of type $A_4^{(1)}$.
Moreover, we lift its action to the birational action on the parameters and the $\tau$-variables.

First, we define the transformation group $\widetilde{W}(A_4^{(1)})=\langle s_0,s_1,s_2,s_3,s_4,\sigma,\iota\rangle$.
Let $(f,g)$ be inhomogeneous coordinate of $\mathbb{P}^1\times\mathbb{P}^1$.
We consider the following eight base points of $\mathbb{P}^1\times \mathbb{P}^1$:
\begin{subequations}\label{eqns:basepoints}
\begin{align}
 &p_1:(f,g)=(-{a_0}^{-1}a_3,0),
 &&p_2:(f,g)=(-{a_0}^{-1}{a_1}^{-1}a_3,0),\\
 &p_3:(f,g)=(-{a_0}^{-1}a_2a_3,\infty),
 &&p_4:(f,g)=(0,-a_0{a_2}^{-1}),\\
 &p_5:(f,g)=(0,-a_0{a_2}^{-1}a_4),
 &&p_6:(f,g)=(\infty,-a_0{a_2}^{-1}{a_3}^{-1}),\\
 &p_\infty:(f,g)=(\infty,\infty),
 &&p_7:(f,g;f/g)=(\infty,\infty;-{a_0}^{-1}a_2a_3),
\end{align}
\end{subequations}
where $a_i$, $i=0,\dots,4$, are non-zero complex parameters.
Let $\ep: X \to \mathbb{P}^1\times\mathbb{P}^1$ denote blow up of $\mathbb{P}^1\times\mathbb{P}^1$ at the points \eqref{eqns:basepoints}.
The linear equivalence classes of the total transform of the coordinate lines $f$=constant and $g$=constant are denoted by $h_1$ and $h_2$, respectively. 
The Picard group of $X$, denoted by Pic$(X)$, is given by
\begin{equation}
 {\rm Pic}(X)=\bbZ h_1\bigoplus\bbZ h_2\bigoplus_{i=1}^8\bbZ e_i,
\end{equation}
where $e_i=\ep^{-1}(p_i)$, $i=1,\dots,8$, ($p_8=p_\infty$) are exceptional divisors. 
The intersection form $(\,|\,)$ is defined by 
\begin{equation}
 (h_i|h_j)=1-\de_{ij},\quad
 (h_i|e_j)=0,\quad
 (e_i|e_j)=-\de_{ij}.
\end{equation}
The anti-canonical divisor of $X$, denoted by $-K_X$, is uniquely decomposed into the prime divisors:
\begin{equation}
 -K_X=2h_1+2h_2-\sum_{i=1}^8e_i=\sum_{i=0}^4d_i=:\de,
\end{equation}
where
\begin{subequations}\label{eqns:simpleroots_d}
\begin{align}
 &d_0=h_1-e_6-e_8,\quad
 d_1=e_8-e_7,\quad
 d_2=h_2-e_3-e_8,\\
 &d_3=h_1-e_4-e_5,\quad
 d_4=h_2-e_1-e_2.
\end{align}
\end{subequations}
The corresponding Cartan matrix
\begin{equation}
 (d_{ij})_{i,j=0}^4
 =\begin{pmatrix}
 2&-1&0&0&-1\\
 -1&2&-1&0&0\\
 0&-1&2&-1&0\\
 0&0&-1&2&-1\\
 -1&0&0&-1&2
 \end{pmatrix},\quad
 d_{ij}=\cfrac{2(d_i|d_j)}{(d_j|d_j)},
\end{equation}
and Dynkin diagram (see Figure \ref{fig:A4_Dynkin}) are of type $A_4^{(1)}$.
Thus, we can set the root lattice as
\begin{equation}
 Q(A_4^{(1)})=\bigoplus_{i=0}^4\bbZ d_i,
\end{equation}
and identify the surface $X$ as being type $A_4^{(1)}$ in Sakai's classification\cite{SakaiH2001:MR1882403}.
Moreover, we obtain the following root lattice orthogonal to $Q(A_4^{(1)})$:
\begin{equation}\label{eqn:root_symmetry_A4}
 \hat{Q}(A_4^{(1)})=\bigoplus_{i=0}^4\bbZ \al_i,
\end{equation}
where
\begin{subequations}\label{eqns:simpleroots_alpha}
\begin{align}
 &\al_0=h_1+h_2-e_1-e_4-e_7-e_8,\quad
 \al_1=e_1-e_2,\quad
 \al_2=h_1-e_1-e_3,\\
 &\al_3=h_2-e_4-e_6,\quad
 \al_4=e_4-e_5,
\end{align}
\end{subequations}
and 
\begin{equation}
 \de=\al_0+\al_1+\al_2+\al_3+\al_4,
\end{equation}
by searching for elements of Pic$(X)$ that are orthogonal to all divisors $d_i$, $i=0,\dots,4$. 
The root lattice  $\hat{Q}(A_4^{(1)})$ is also of $A_4^{(1)}$-type.

\begin{figure}[t]
\begin{center}
\includegraphics[width=0.32\textwidth]{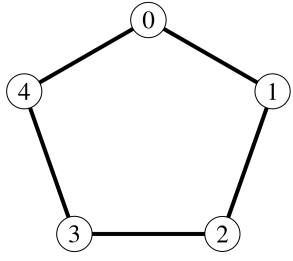}
\end{center}
\caption{Dynkin diagram of type $A_4^{(1)}$.}
\label{fig:A4_Dynkin}
\end{figure}

Let us consider the Cremona isometries for this setting.
A Cremona isometry is defined by an automorphism of Pic$(X)$ which preserves 
\begin{description}
\item[(i)]
the intersection form on Pic$(X)$;
\item[(ii)]
the canonical divisor $K_X$;
\item[(iii)]
effectiveness of each effective divisor of Pic$(X)$.
\end{description}
The reflections $s_i$ for simple roots $\al_i$, $i=0,\dots,4$, defined by the following right actions:
\begin{equation}
 v.s_i=v-\cfrac{2(v|\al_i)}{(\al_i|\al_i)}\,\al_i,
\end{equation}
for all $v\in {\rm Pic}(X)$
and the automorphisms of the Dynkin diagram:
\begin{subequations}
\begin{align}
 &(d_0,d_1,d_2,d_3,d_4;\al_0,\al_1,\al_2,\al_3,\al_4).\sigma
 =(d_2,d_3,d_4,d_0,d_1;\al_4,\al_0,\al_1,\al_2,\al_3),\\
 &(d_0,d_1,d_2,d_3,d_4;\al_0,\al_1,\al_2,\al_3,\al_4).\iota
 =(d_2,d_1,d_0,d_4,d_3;\al_0,\al_4,\al_3,\al_2,\al_1),
\end{align}
\end{subequations}
defined by the following right actions:
{\allowdisplaybreaks
\begin{subequations}
\begin{align}
 &\begin{pmatrix}h_1,h_2,e_1,\dots,e_8\end{pmatrix}.\sigma\notag\\
 &=\begin{pmatrix}h_1,h_2,e_1,\dots,e_8\end{pmatrix}.
 \begin{pmatrix}
  1&1&1&0&0&0&0&0&0&1\\
  1&1&1&0&0&1&0&0&0&0\\
  0&-1&-1&0&0&0&0&0&0&0\\
  0&0&0&0&1&0&0&0&0&0\\
  0&0&0&0&0&0&0&1&0&0\\
  -1&-1&-1&0&0&-1&0&0&0&-1\\
  0&0&0&0&0&0&0&0&1&0\\
  0&0&0&0&0&0&1&0&0&0\\
  0&0&0&1&0&0&0&0&0&0\\
  -1&0&-1&0&0&0&0&0&0&0
 \end{pmatrix},\\
 &\begin{pmatrix}h_1,h_2,e_1,\dots,e_8\end{pmatrix}.\iota\notag\\
 &=\begin{pmatrix}h_1,h_2,e_1,\dots,e_8\end{pmatrix}.
 \begin{pmatrix}
  0&1&0&0&0&0&0&0&0&0\\
  1&0&0&0&0&0&0&0&0&0\\
  0&0&0&0&0&1&0&0&0&0\\
  0&0&0&0&0&0&1&0&0&0\\
  0&0&0&0&0&0&0&1&0&0\\
  0&0&1&0&0&0&0&0&0&0\\
  0&0&0&1&0&0&0&0&0&0\\
  0&0&0&0&1&0&0&0&0&0\\
  0&0&0&0&0&0&0&0&1&0\\
  0&0&0&0&0&0&0&0&0&1
 \end{pmatrix},
\end{align}
\end{subequations}
are Cremona isometries and collectively form extended affine Weyl group of type $A_4^{(1)}$.
Namely, we can easily verify that the following fundamental relations hold:
\begin{subequations}\label{eqns:fundamental_relation_A4}
\begin{align}
 &{s_i}^2=1,\quad (s_i s_{i\pm 1})^3=1,\quad
 (s_i s_j)^2=1,\quad j\neq i\pm 1,\\
 &\sigma^5=1,\quad
 \sigma s_i=s_{i+1} \sigma,\quad
 \iota^2=1,\quad
 \iota s_i=s_{-i} \iota,\quad 
 \sigma \iota=\iota \sigma^{-1},
\end{align}
\end{subequations}
where $i,j\in\bbZ/5\bbZ$.
Note here that the transformations $T_i$, $i=0,\dots,4$, defined by \eqref{eqn:A4_translations} are translations on $\hat{Q}(A_4^{(1)})$:
\begin{equation}
 \al_i.T_i=\al_i-\de,\quad
 \al_{i+1}.T_i=\al_{i+1}+\de,
\end{equation}
where $i\in\bbZ/5\bbZ$.
}

Next, we lift the action of $\widetilde{W}(A_4^{(1)})$ to the birational action.
We first define the variables $f_u$, $f_d$, $g_u$ and $g_d$ by
\begin{equation}
 f=\cfrac{f_u}{f_d},\quad
 g=\cfrac{g_u}{g_d},
\end{equation}
and their polynomial $F_\Lambda$ by
\begin{equation}
 F_\Lambda=F_\Lambda(f_u,f_d,g_u,g_d),
\end{equation}
where $\Lambda=mh_1+nh_2-\sum_{i=1}^8\mu_ie_i$,
which corresponds to a curve of bi-degree $(m,n)$ on $\mathbb{P}^1\times\mathbb{P}^1$
passing through base points $p_i$ with multiplicity $\mu_i$.
For example,
\begin{equation}
 F_{h_1+h_2-e_2-e_5-e_8}=\ga({a_0}^2a_1a_4 f_ug_d+a_2a_3 f_dg_u+a_0a_3a_4f_dg_d),
\end{equation}
where $\ga$ is an arbitrary non-zero complex parameter.
We next define a mapping $\tau$ by the following definition.

\begin{definition}
We define a mapping $\tau$ on the set
\begin{equation}
 M=\set{e_i.w}{w\in\widetilde{W}(A_4^{(1)}),~ i=1,\dots,8}
\end{equation}
by the following:
\begin{description}
\item[(i)]
if under the blowing down map an exceptional line $e_i$ collapses to a base point $p_j$, put 
\begin{equation}
 \tau(e_i)=\tau(e_j);
\end{equation}
\item[(ii)]
if $\Lambda=mh_1+nh_2-\sum_{i=1}^8\mu_ie_i\in\{d_0,d_2,d_3,d_4\}$, then
\begin{equation}
 \cfrac{F_\Lambda(f_u,f_d,g_u,g_d)}{F_\Lambda(1,1,1,1)}=\tau(e_1)^{\mu_1}\cdots\tau(e_8)^{\mu_8},
\end{equation}
which give
\begin{equation}
 f_u=\tau(e_4)\tau(e_5),\quad
 f_d=\tau(e_6)\tau(e_8),\quad
 g_u=\tau(e_1)\tau(e_2),\quad
 g_d=\tau(e_3)\tau(e_8);
\end{equation}
\item[(iii)]
for $\Lambda=mh_1+nh_2-\sum_{i=1}^8\mu_ie_i\in M$, $\tau(\Lambda)$ is defined by
\begin{equation}
 \tau(\Lambda)=\cfrac{F_\Lambda(f_u,f_d,g_u,g_d)}{\tau(e_1)^{\mu_1}\cdots\tau(e_8)^{\mu_8}};
\end{equation}
\item[(iv)]
$w\in\widetilde{W}(A_4^{(1)})$ act on $\tau(\Lambda)$ as
\begin{equation}
 w.\tau(\Lambda)=\tau(\Lambda.w^{-1}),
\end{equation}
where $\Lambda\in M$.
\end{description}
\end{definition}

Finally, Lemma \ref{lemma:tau_A4} follows from the setting
\begin{subequations}
\begin{align}
&\tau_1^{(1)}=\tau(e_2),\quad
\tau_1^{(2)}=\tau(e_3),\quad
\tau_1^{(3)}=\tau(e_6),\quad
\tau_1^{(4)}=\tau(e_5),\quad
\tau_1^{(5)}=\tau(e_7)=\tau(e_8),\\
 &\tau_2^{(1)}=\tau(e_1.\sigma^4)
 =\cfrac{a_0 a_1 (a_3 \tau_1^{(3)} \tau_1^{(5)}+a_0 \tau_1^{(4)} \tau_2^{(3)})}{a_2 {a_3}^2 \tau_2^{(5)}},\\
 &\tau_2^{(2)}=\tau(e_4.\sigma)
 =\cfrac{a_1 a_2 (a_4 \tau_1^{(1)} \tau_1^{(4)}+a_1 \tau_1^{(5)} \tau_2^{(4)})}{a_3 {a_4}^2 \tau_2^{(1)}},\quad
 \tau_2^{(3)}=\tau(e_4),\\
 &\tau_2^{(4)}=\tau(e_1.\sigma)
 =\cfrac{a_3 a_4 (a_1 \tau_1^{(1)} \tau_1^{(3)}+a_3 \tau_1^{(2)} \tau_2^{(1)})}{a_0 {a_1}^2 \tau_2^{(3)}},\quad
 \tau_2^{(5)}=\tau(e_1),
\end{align}
\end{subequations}
and the normalization of the polynomials $F_\Lambda$ to be designed to hold the fundamental relations \eqref{eqns:fundamental_relation_A4}.
We note that the action of $\widetilde{W}(A_4^{(1)})$ on the $\tau$-variables are directly obtained from the definition of the mapping $\tau$.
For example,
\begin{equation}
 s_2(\tau_2^{(5)})
 =\tau(e_1.s_2)
 =\cfrac{\ga'(a_2 a_3 \tau_1^{(3)} \tau_1^{(5)}+a_0 \tau_1^{(4)} \tau_2^{(3)})}{\tau_1^{(2)}},
\end{equation}
where $\ga'$ is an arbitrary non-zero complex parameter.
Moreover, Figure \ref{fig:tau_relation} shows simple relations between the $\tau$-variables.

\begin{figure}[t]
\begin{center}
\includegraphics[width=0.85\textwidth]{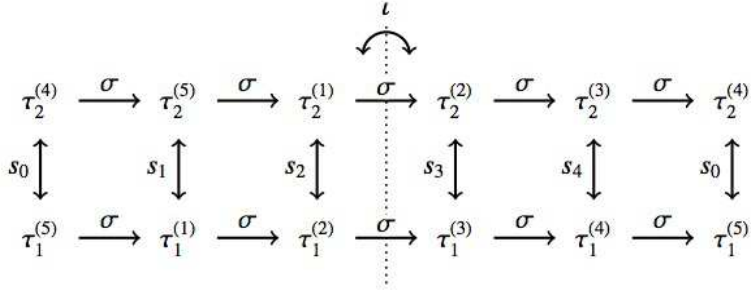}
\end{center}
\caption{Relations between the $\tau$-variables.}
\label{fig:tau_relation}
\end{figure}

\section{The linear action of $\widetilde{W}((A_2\rtimes A_1)^{(1)})$}
\label{section:Q(A2A1)andW(A2)}
In this section, we give explanations of the transformation group $\widetilde{W}((A_2\rtimes A_1)^{(1)})$ 
and its translation part $\langle\rho_1,\rho_2,\rho_3,\rho_4\rangle$ with their linear actions on the root systems.

We here consider the following submodule of the root lattice $\hat{Q}(A_4^{(1)})$ \eqref{eqn:root_symmetry_A4}:
\begin{equation}\label{eqn:root_symmetry_A2A1}
 Q((A_2+A_1)^{(1)})=\bbZ\be_0\bigoplus\bbZ\be_1\bigoplus\bbZ\be_2\bigoplus\bbZ\ga_0\bigoplus\bbZ\ga_1,
\end{equation}
where the simple roots $\be_i$, $i=0,1,2$, and $\ga_i$, $i=0,1$, are defined by
\begin{subequations}
\begin{align}
 &\be_0=\al_0,\quad
 \be_1=\al_1+\al_2,\quad
 \be_2=\al_3+\al_4,\\
 &\ga_0=2\al_1-\al_2+\al_3-2\al_4,\quad
 \ga_1=\al_0-\al_1+2\al_2+3\al_4,
\end{align}
\end{subequations}
and satisfy 
\begin{equation}\label{eqn:de_be_ga}
 \de=\be_0+\be_1+\be_2=\ga_0+\ga_1.
\end{equation}
The root lattices $Q(A_2^{(1)})=\bigoplus_{i=0}^2\bbZ\be_i$ and $Q(A_1^{(1)})=\bigoplus_{i=0}^1\bbZ\ga_i$
are of $A_2^{(1)}$- and $A_1^{(1)}$-types, respectively:
\begin{equation}
 (b_{ij})_{i,j=0}^2=\begin{pmatrix}2&-1&-1\\-1&2&-1\\-1&-1&2\end{pmatrix},\quad
 (c_{ij})_{i,j=0}^1=\begin{pmatrix}2&-2\\-2&2\end{pmatrix},
\end{equation}
where
\begin{equation}
 b_{ij}=\cfrac{2(\be_i|\be_j)}{(\be_j|\be_j)},\quad
 c_{ij}=\cfrac{2(\ga_i|\ga_j)}{(\ga_j|\ga_j)}.
\end{equation}

Let us discuss Cremona transformations for $Q((A_2+A_1)^{(1)})$.
The transformations $w_i$, $i=0,1,2$, $r_i$, $i=0,1$, and $\pi$, defined by \eqref{eqn:WA2_elements},
act on $Q((A_2+A_1)^{(1)})$ as the following:
\begin{subequations}
\begin{align}
 &(\be_0,\be_1,\be_2,\ga_0,\ga_1).w_0=(-\be_0,\be_1+\be_0,\be_2+\be_0,\ga_0,\ga_1),\\
 &(\be_0,\be_1,\be_2,\ga_0,\ga_1).w_1=(\be_0+\be_1,-\be_1,\be_2+\be_1,\ga_0,\ga_1),\\
 &(\be_0,\be_1,\be_2,\ga_0,\ga_1).w_2=(\be_0+\be_2,\be_1+\be_2,-\be_2,\ga_0,\ga_1),\\
 &(\be_0,\be_1,\be_2,\ga_0,\ga_1).r_0=(\be_0,\be_2,\be_1,-\ga_0,\ga_1+2\ga_0),\\
 &(\be_0,\be_1,\be_2,\ga_0,\ga_1).r_1=(\be_1,\be_0,\be_2,\ga_0+2\ga_1,-\ga_1),\\
 &(\be_0,\be_1,\be_2,\ga_0,\ga_1).\pi=(\be_2,\be_1,\be_0,\ga_1,\ga_0).
\end{align}
\end{subequations}
The transformations $w_i$, $i=0,1,2$, 
correspond to the reflections for the simple roots $\be_i$, $i=0,1,2$, respectively,
that is, they satisfy
\begin{equation}
 v.w_i=v-\cfrac{2(v|\be_i)}{(\be_i|\be_i)}\,\be_i,\quad i=0,1,2,
\end{equation}
for all $v\in {\rm Pic}(X)$.
Moreover, the transformation $\pi$ corresponds to the automorphism of the Dynkin diagram:
\begin{equation}
 (d_0,d_1,d_2,d_3,d_4;\be_0,\be_1,\be_2;\ga_0,\ga_1).\pi
 =(d_1,d_0,d_4,d_3,d_2;\be_2,\be_1,\be_0;\ga_1,\ga_0).
\end{equation}
Note that there are no Cremona transformations correspond to the reflections for the simple roots $\ga_i$, $i=0,1$, since
\begin{equation}
 \cfrac{2(h_1|\ga_i)}{(\ga_i|\ga_i)}\,\ga_i=-\cfrac{1}{15}\,\ga_i\, \noin\, {\rm Pic}(X).
\end{equation}
From the fundamental relations \eqref{eqns:fundamental_relation_A4},
we can verified that the group of transformations $\langle w_0,w_1,w_2,r_0,r_1,\pi\rangle$ satisfy the following relations:
\begin{subequations}
\begin{align}
 &{w_i}^2=(w_i w_{i\pm 1})^3=1,\quad
 {r_0}^2={r_1}^2=(r_0r_1)^\infty=1,\quad
 \pi^2=1,\\
 &r_0 w_i=w_{-i}\,r_0,\quad
 r_1  w_i=w_{-i+1}\, r_1,\quad
 \pi w_i=w_{2-i}\,\pi,\quad
 \pi r_0=r_1 \pi,
\end{align}
\end{subequations}
where $i\in\bbZ/3\bbZ$.
We note that the relation $(ww')^\infty=1$ for transformations $w$ and $w'$ means that
there is no positive integer $N$ such that $(ww')^N=1$.
Therefore, transformation group $\langle w_0,w_1,w_2,r_0,r_1,\pi\rangle$ 
forms the extended affine Weyl group of type $(A_2\rtimes A_1)^{(1)}$,
denoted by $\widetilde{W}((A_2\rtimes A_1)^{(1)})$.
Here, $W(A_2^{(1)})=\langle w_0,w_1,w_2\rangle$ and  $W(A_1^{(1)})=\langle r_0,r_1\rangle$ form
affine Weyl groups of types $A_2^{(1)}$ and $A_1^{(1)}$, respectively.
Moreover, $W((A_2\rtimes A_1)^{(1)})=\langle w_0,w_1,w_2,r_0,r_1\rangle$ 
is the semi direct product of $W(A_2^{(1)})$ and $W(A_1^{(1)})$.

The transformations $\rho_i$, $i =1,\dots,4$, defined by \eqref{eqn:def_rho} are translations on $Q((A_2+A_1)^{(1)})$ 
since they act on $Q((A_2+A_1)^{(1)})$ as the following:
\begin{subequations}
\begin{align}
 &(\be_0,\be_1,\be_2,\ga_0,\ga_1).\rho_1=(\be_0,\be_1,\be_2,\ga_0,\ga_1)+(-1,0,1,1,-1)\de,\\
 &(\be_0,\be_1,\be_2,\ga_0,\ga_1).\rho_2=(\be_0,\be_1,\be_2,\ga_0,\ga_1)+(0,1,-1,1,-1)\de,\\
 &(\be_0,\be_1,\be_2,\ga_0,\ga_1).\rho_3=(\be_0,\be_1,\be_2,\ga_0,\ga_1)+(1,-1,0,1,-1)\de,\\
 &(\be_0,\be_1,\be_2,\ga_0,\ga_1).\rho_4=(\be_0,\be_1,\be_2,\ga_0,\ga_1)+(0,0,0,-3,3)\de.
\end{align}
\end{subequations}
Note that 
$\rho_i$, $i =1,\dots,4$, are not translational motions on $\hat{Q}(A_4^{(1)})$ \eqref{eqn:root_symmetry_A4}:
\begin{subequations}
\begin{align}
 &(\al_0,\al_1,\al_2,\al_3,\al_4).\rho_1=(\al_0-\de,\al_1+\al_2+\al_3,-\al_3,-\al_2+\de,\al_2+\al_3+\al_4),\\
 &(\al_0,\al_1,\al_2,\al_3,\al_4).\rho_2=(\al_0,\al_1+\al_2+\al_3,-\al_3+\de,-\al_2,-\al_0-\al_1),\\
 &(\al_0,\al_1,\al_2,\al_3,\al_4).\rho_3=(\al_0+\de,-\al_0-\al_4,-\al_3,-\al_2+\de,-\al_0-\al_1),\\
 &(\al_0,\al_1,\al_2,\al_3,\al_4).\rho_4=(\al_0,-\al_0-\al_4,-\al_3+\de,-\al_2,\al_2+\al_3+\al_4),
\end{align}
\end{subequations}but their squares are translations on $\hat{Q}(A_4^{(1)})$:
\begin{equation}
 {\rho_1}^2=T_0T_2{T_4}^{-1},\quad
 {\rho_2}^2=T_0{T_2}^{-1}T_4,\quad
 {\rho_3}^2={T_0}^{-1}T_2T_4,\quad
 {\rho_4}^2=T_1T_3.
\end{equation}
\def\cprime{$'$} \def\cprime{$'$}

\end{document}